\tikzstyle{red dot}=[fill=red, shape=circle, scale=0.4]
\tikzstyle{blue dot}=[fill=blue, shape=circle, scale=0.4]
\tikzstyle{dark red dot}=[fill={rgb,255: red,191; green,0; blue,64}, shape=circle, scale=0.3]
\tikzstyle{medium box}=[fill=white, draw=black, shape=rectangle, minimum width=0.3cm, minimum height=0.5cm]
\tikzstyle{s flat}=[fill=white, draw=black, shape=rectangle, minimum width=8mm, minimum height=5mm]
\tikzstyle{black dot}=[fill=black, draw=black, shape=circle, scale=0.4]
\tikzstyle{empty dot}=[fill=none, draw=black, shape=circle, scale=0.4]
\tikzstyle{l flat}=[fill=white, draw=black, shape=rectangle, minimum width=1.8cm, minimum height=0.3cm]
\tikzstyle{s rect}=[fill=white, draw=black, shape=rectangle, minimum width=0.1cm, minimum height=0.1cm]
\tikzstyle{s vert}=[fill=white, draw=black, shape=rectangle, minimum width=5mm, minimum height=8mm]
\tikzstyle{m vert}=[fill=white, draw=black, shape=rectangle, minimum width=5mm, minimum height=12mm]
\tikzstyle{m flat}=[fill=white, draw=black, shape=rectangle, minimum width=6mm, minimum height=5mm]
\tikzstyle{mm flat}=[fill=white, draw=black, shape=rectangle, minimum height=5mm, minimum width=10mm]
\tikzstyle{mmm flat}=[fill=white, draw=black, shape=rectangle, minimum height=5mm, minimum width=12mm]
\tikzstyle{grey dot}=[fill={rgb,255: red,191; green,191; blue,191}, draw={rgb,255: red,191; green,191; blue,191}, shape=circle, scale=0.4]
\tikzstyle{mm vert}=[fill=white, draw=black, shape=rectangle, minimum width=5mm, minimum height=14mm]
\tikzstyle{15mm vert}=[fill=white, draw=black, shape=rectangle, minimum height=15mm, minimum width=5mm]
\tikzstyle{16mm vert}=[fill=white, draw=black, shape=rectangle, minimum width=5mm, minimum height=16mm]
\tikzstyle{18mm vert}=[fill=white, draw=black, shape=rectangle, minimum width=5mm, minimum height=18mm]
\tikzstyle{20mm vert}=[fill=white, draw=black, shape=rectangle, minimum width=5mm, minimum height=20mm]
\tikzstyle{10mm vert}=[fill=white, draw=black, shape=rectangle, minimum height=10mm, minimum width=5mm]
\tikzstyle{dashes}=[-, dashed]
\tikzstyle{right arrow}=[->]
\tikzstyle{left arrow}=[<-]
\tikzstyle{grey fill}=[-, fill={rgb,255: red,191; green,191; blue,191}, draw={rgb,255: red,191; green,191; blue,191}]
\tikzstyle{blue fill}=[-, fill=cyan, draw=cyan]
\tikzstyle{yellow fill}=[-, fill=yellow, draw=yellow]
\tikzstyle{green fill}=[-, fill=green, draw=green]
\tikzstyle{red wire}=[-, draw=red]
\tikzstyle{blue wire}=[-, draw=blue]
\tikzstyle{grey wire}=[-, draw={rgb,255: red,128; green,128; blue,128}]
\tikzstyle{green wire}=[-, draw=green]
\tikzstyle{yellow wire}=[-, draw=yellow]
\tikzstyle{black fill}=[-, fill=black]
\newcommand*{\inlineequation}[2][]{%
  \begingroup
    \refstepcounter{equation}%
    \ifx\\#1\\%
    \else
      \label{#1}%
    \fi
    \relpenalty=10000 %
    \binoppenalty=10000 %
    \ensuremath{%
      #2%
    }%
    ~\@eqnnum
  \endgroup
}
\newcommand{\defeq}{\mathrel{\mathop:}=}
\def\ci{\perp\!\!\!\perp}
    \newcommand{\markovCI}[3]{#1 \!\perp_{\!\mathrm{M}}\! #2 \!\mid\! #3}
    \newcommand{\procCI}[3]{#1 \!\perp_{\!\mathrm{S}}\! #2 \!\mid\! #3}
    \newcommand{\extSupsetCI}[3]{#1 \!\perp_{\!\mathrm{S^+}}\! #2 \!\mid\! #3}
    \newcommand{\dibiCI}[3]{#1 \!\ci_{\!\mathrm{L}}\! #2 \!\mid\! #3}
    \newcommand{\displayCI}[3]{#1 \!\perp\! #2 \!\mid\! #3}
\newcommand{\oran}[1]{\textcolor{orange}{#1}}
\newcommand{\purp}[1]{\textcolor{violet}{#1}}
\newcommand{\jh}[1]{\textcolor{green}{[Justin: {#1}]}}
\newcommand{\jb}[1]{\textcolor{brown}{[Jialu: {#1}]}}
\newcommand{\rv}[1]{\textcolor{blue}{[Reviewer: {#1}]}}
\newcommand{\sd}[1]{\ifdraft{\color{red} [Simon: {#1}]}\fi}
\newcommand{\as}[1]{\ifdraft{\color{pink} [A: {#1}]}\fi}
\newcommand{\tao}[1]{\textcolor{purple}{[Tao: {#1}]}}
\newcommand{\fz}[1]{\textcolor{blue} {[Fabio: {#1}]}}
\newcommand{\todo}[1]{\text{\textcolor{red}{TODO: {#1}}}}
\newcommand{\hush}[1]{}
\newcommand{\quieteveryone}{%
	\let\sd\hush%
	\let\jh\hush%
	\let\jb\hush%
	\let\as\hush%
	\let\tao\hush%
	\let\fz\hush%
    \let\todo\hush%
    \let\rv\hush%
}
\def\@acmplainindent{0pt}
\def\@acmdefinitionindent{0pt}
\def\@proofindent{\noindent}
\newcommand{\Var}{\setVar}
\newcommand{\Val}{\ensuremath{\mathbf{Val}}}
\newcommand{\Bool}{\{ 0, 1 \}}
\newcommand{\dbind}{\bind}
\newcommand{\kp}[2]{\ensuremath{{#1} \mathrel{\triangleright} [{#2}]}}
\newcommand{\bind}{\textsf{bind}}
\newcommand{\Kl}{\mathcal{K}\mkern-2mu\ell}
\newcommand{\stPlus}{\oplus}
\newcommand{\stThen}{\mathbin{\odot}}
\newcommand{\stSmaller}{\sqsubseteq}
    \newcommand{\stLeq}{\sqsubseteq}
    \newcommand{\stGeq}{\sqsupseteq}
\newcommand{\frameA}{\mathcal{A}}
\newcommand{\unitSet}{E}
\newcommand{\powset}{\mathcal{P}}
\newcommand{\ipowset}{\powset_{\!i}}
\newcommand{\onlyif}{\Longrightarrow}
\newcommand{\catC}{\mathbb{C}}
\newcommand{\catD}{\mathbb{D}}
\newcommand{\catX}{\mathbb{X}}
\newcommand{\obX}{\mathsf{X}}
    \newcommand{\obY}{\mathsf{Y}}
    \newcommand{\obZ}{\mathsf{Z}}
    \newcommand{\obU}{\mathsf{U}}
    \newcommand{\obV}{\mathsf{V}}
    \newcommand{\obW}{\mathsf{W}}
    \newcommand{\obC}{\mathsf{C}}
    \newcommand{\obD}{\mathsf{D}}
    \newcommand{\obA}{\mathsf{A}}
    \newcommand{\obB}{\mathsf{B}}
    \newcommand{\obE}{\mathsf{E}}
\newcommand{\tensor}{\otimes}
\newcommand{\tenUnit}{\mathsf{I}}
\newcommand{\lr}[1]{\langle #1 \rangle}
\newcommand{\lrangle}[1]{\langle #1 \rangle}
\newcommand{\copyMor}{\mathsf{copy}}
\newcommand{\copier}{\tikzfig{copier}}
\newcommand{\discardMor}{\mathsf{del}}
\newcommand{\discarder}{\tikzfig{discarder}}
\newcommand{\swap}{\tikzfig{swap}}
\newcommand{\ob}[1]{\mathbf{ob}(#1)}
    \newcommand{\mor}[1]{\mathbf{mor}(#1)}
\newcommand{\id}{\mathit{id}}
    \newcommand{\idMor}{\mathit{id}}
\newcommand{\after}{\circ}
\newcommand{\setVar}{\mathrm{Var}}
\newcommand{\varLeq}{\preceq}
    \newcommand{\varLessThan}{\prec}
\newcommand{\funcT}{\mathcal{T}}
\newcommand{\catSet}{\mathsf{Set}}
\newcommand{\setOne}{\mathbf{1}}
\newcommand{\setVal}{\mathrm{Val}}
\newcommand{\decomp}[1]{\lr{#1}}
\newcommand{\domain}[1]{\mathit{dom}(#1)}
\newcommand{\codom}[1]{\mathit{cod}(#1)}
\newcommand{\emptyList}{[~]}
\newcommand{\ket}[1]{|#1\rangle}
\newcommand{\dist}{\mathcal{D}}
\newcommand{\empset}{\varnothing}
\newcommand{\proj}{\pi}
\newcommand{\memProj}[2]{{#1}^{#2}}
\newcommand{\mand}{*}
\newcommand{\mimp}{\mathrel{-\mkern-6mu*}}
\newcommand{\biThen}{\mathbin{\fatsemi}}
\newcommand{\biTo}{\multimap}
    \newcommand{\biOt}{\mathrel{\reflectbox{$\multimap$}}}
\newcommand{\klto}{\rightarrowtriangle}
\newcommand{\memoryL}{\bm{\ell}}
\newcommand{\memoryM}{\mathbf{m}}
\newcommand{\memoryN}{\mathbf{n}}
\newcommand{\fsubseteq}{\subseteq_{\!f}\!}
\newcommand{\monadUnit}[2]{\eta^{#1}_{#2}}
    \newcommand{\monadMult}[2]{\mu^{#1}_{#2}}
\newcommand{\setAtomProp}{\mathcal{AP}}
\newcommand{\smallDIBI}{\ensuremath{\text{DIBI}_{\{\land, \mand, \biThen\}}}}
\newcommand{\valV}{\mathcal{V}}
    \newcommand{\natVal}{\mathcal{V}_{\mathrm{nat}}} 
\newcommand{\satisfy}[1]{\vDash_{#1}}
\newcommand{\memSpace}[2]{\mathbf{M}[#1; #2]}
    \newcommand{\memSpaceS}[1]{\mathbf{M}[#1]}
\newcommand{\Values}{\mathrm{Val}}
    \newcommand{\setToList}[1]{\llbracket #1 \rrbracket}
\newcommand{\xleq}[1]{}
\newcommand{\eqWhenDefined}{\doteq}
\newcommand{\catWithVarName}[2]{#1[#2]}
\newcommand{\obChoice}{\theta}
\newcommand{\catSynVar}{\mathbb{SynVar}}
\newcommand{\catKernel}[1]{\mathbb{Ker}(#1)}
\newcommand{\catMemProb}{\mathbb{MemPr}}
\newcommand{\tenSwap}{\sigma}
\newcommand{\conProbKern}{\mathrm{ProbKer}}
\newcommand{\operateVar}{\star}
\newcommand{\listL}{L}
\newcommand{\listK}{K}
    \newcommand{\frameSynKernel}{\mathbf{SynFr}}
    \newcommand{\setSemKernel}[1]{\mathrm{Ker}(#1)}
    \newcommand{\frameSemKernel}[1]{\mathbf{Fr}(#1)}
    \newcommand{\probFrame}{\mathbf{PrFr}}
    \newcommand{\probFrameBasedOn}[1]{\mathbf{PrFr}[#1]}
\newcommand{\listx}{\bar{x}}
\newcommand{\listy}{\bar{y}}
\newcommand{\catProbKern}{\mathbb{PrKern}}
\newcommand{\catDibiFrame}{\mathbb{DibiFr}}
\newcommand{\borel}{\mathcal{B}}
\newcommand{\real}{\mathbb{R}}
\newcommand{\catMeas}{\mathbb{Meas}}
\newcommand{\sigalg}[1]{\Sigma_{\!{#1}}}
\newcommand{\monadGiry}{\mathcal{G}}
\newcommand{\monadRadon}{\mathcal{R}}
\newcommand{\ev}{\text{ev}}
\newcommand{\dirac}[1]{\delta_{#1}}
\newcommand{\catStoch}{\mathbb{Stoch}}
\newcommand{\catHaus}{\mathbb{CHaus}}
\newcommand{\catGauss}{\mathbb{Gauss}}
\newcommand{\catBorelStoch}{\mathbb{BorelStoch}}
\newcommand{\uniform}{\textsc{Unif}}
\newcommand{\trivList}{\langle \rangle}
\newcommand{\normal}{\mathcal{N}}
\newcommand{\cov}{\sigma^2}
\newcommand{\EE}{\mathbb{E}}
\newcommand{\initialMap}{!}
\newcommand{\isDefined}{\!\Downarrow\!}
\newcommand{\eqIfDefined}{\doteq}
\newcommand{\support}[1]{\mathrm{supp}(#1)}
\newcommand{\memComb}{\uplus}
\begin{document}
\title{A Categorical Approach to DIBI Models}
%
%
\author{Tao Gu\inst{1} \and
Jialu Bao\inst{2} \and
Justin Hsu\inst{2} \and Alexandra Silva\inst{2} \and Fabio Zanasi\inst{1,3} }
\authorrunning{Gu, Bao, Hsu, Silva, and Zanasi}
%
\institute{University College London, UK
\and
Cornell University, NY, USA
\and University of Bologna, Italy\\
}
\maketitle              
\begin{abstract}
The logic of Dependence and Independence Bunched Implications (DIBI) is a logic to reason about conditional independence (CI); for instance, DIBI formulas can characterise CI in probability distributions and relational databases, using the probabilistic and relational DIBI models, respectively.
Despite the similarity of the probabilistic and relational models, a uniform, more abstract account remains unsolved. The laborious case-by-case verification of the frame conditions required for constructing new models also calls for such a treatment.
In this paper, we develop an abstract framework for systematically constructing DIBI models, using category theory as the unifying mathematical language. In particular, we use string diagrams -- a graphical presentation of monoidal categories -- to give a uniform definition of the parallel composition and subkernel relation in DIBI models.
Our approach not only generalises known models, but also yields new models of interest and reduces properties of DIBI models to structures in the underlying categories.
Furthermore, our categorical framework enables a logical notion of CI, in terms of the satisfaction of specific DIBI formulas. We compare it with string diagrammatic approaches to CI and show that it is an extension of string diagrammatic CI under reasonable conditions.
\keywords{Conditional Independence \and Dependence Independence Bunched Implications \and String Diagrams \and Markov Categories.}
\end{abstract}
\section{Introduction}
\label{sec:intro}
Conditional independence (CI) is a fundamental concept across various research areas, including programming languages~\cite{reynolds2002separation,barthe2019probabilistic,li2023lilac}, statistics~\cite{dawid1979conditional}, and database theory~\cite{aho1979theory}, among others. Although specific definitions may vary, the core idea remains straightforward: events $A$ and $B$ are   `independent' when information about one event does not convey information about the other. Furthermore, events $A$ and $B$ are `conditionally independent' given event $C$ if, with knowledge of event $C$, events $A$ and $B$ become independent. Albeit intuitive, reasoning about conditional independence is an intricate task, leading to extensive research efforts aimed at formalising such reasoning~\cite{pearl2009causality,glymour2019review}.

For probabilistic programs, an extension of standard programs with constructs to sample from distributions, formal methods for (conditional) independence have emerged as powerful tools for program verification.
For instance, Barthe et. al.~\cite{barthe2019probabilistic} introduced Probabilistic Separation Logic (PSL) and applied it to formalise several cryptography protocols, where independence of  variables guarantees no leakage of information and thus security of the algorithms.
A follow-up work from Bao et al.~\cite{bao2021bunched}
proposed the logic of \emph{Dependence and Independence Bunched Implications} (DIBI),
which enhances PSL with the ability to reason about \emph{conditional} independence. Syntactically, DIBI extends the logic of Bunched Implications (BI)~\cite{o1999logic,pym2013semantics}, which is the assertion logic underpinning Separation Logic (SL)~\cite{reynolds2002separation} and PSL, with a non-commutative conjunction $\biThen$ and its adjoints.
Semantically, as in BI, the separating conjunction $\mand$ is interpreted through a partial operation $\stPlus$ on states, regarded as the parallel composition. In addition, they define a sequential composition $\stThen$ to interpret $P \biThen Q$.
Informally, $P \mand Q$ says that $P$ and $Q$ hold in states that can be separated, and $P \biThen Q$ expresses a possible dependency of $Q$ on $P$.  \Cref{sec:DIBI} will review the logic in more details.

Bao et. al \cite{bao2021bunched} introduced two kinds of semantic models for DIBI logic -- probabilistic and relational.
The probabilistic DIBI models
are used to reason about CI of variables in discrete probabilistic computation.
The relational DIBI models
are designed to express join dependency -- a notion of conditional independence in database theory and relational algebra -- between variables in relational databases.
These two models share many similarities, and the conditions to verify they are models are repetitive. This led Bao et. al.~\cite{bao2021bunched} to conjecture a categorical approach to construct abstract DIBI models that induce these concrete models as instances. This would greatly facilitate the construction of new models.

To solve this conjecture, we start with the observation that, in both the probabilistic and relational DIBI models, the states
resemble \emph{Markov kernels}: they are maps from
input elements to distributions/powersets over output elements. Specifically, the input/output elements are value assignments on a finite set of variables, as an abstraction of program memories or database entries.
Such DIBI states can be identified categorically as morphisms in the Kleisli categories associated to the discrete distribution monad (\cref{def:dist-monad}, for the probabilistic model) or the nonempty powerset monad (\cref{def:powerset-monad}, for the relational model).
However, giving a categorical definition for the parallel compositions $\stPlus$ is difficult.
The previous work~\cite{bao2021bunched} gives~\Cref{fig:pic-parallel} as a pictorial intuition for the parallel composition.
The states are drawn as trapezoids, with the short and long vertical sides representing the input and output domains, respectively.
There, given a blue map $f_1$ and a red map $f_2$, their parallel composition $f_1 \oplus f_2$ takes as input the union of their inputs. Then, each $f_i$ takes its counterpart in the combined input domain and generates an output. Finally these two outputs are combined to be the output of $f_1 \oplus f_2$.
This parallel composition \emph{is partial} because the combination of their outputs is allowed only when the variables overlap in particular ways.
This creates a challenge to capture DIBI models categorically because, in a categorical setting, the domains and codomains of DIBI states are objects, and it is not obvious how to define the overlap of objects used in the parallel composition.
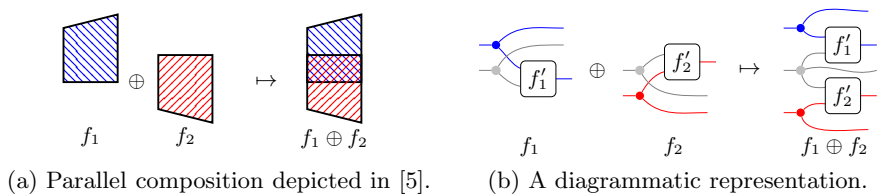
\begin{figure}[bt]
    \begin{subfigure}[b]{0.5\textwidth}
		\begin{center}
		\scalebox{.9}{
			\begin{tikzpicture}[scale=0.40]
				\coordinate (A) at (0,0) ;
				\coordinate (B) at (0,2) ;
				\coordinate (C) at (2,2.5) ;
				\coordinate (D) at (2,0) ;
				\draw [thick,-,fill=blue,pattern=north west lines, pattern color=blue] (A) -- (B)  -- (C) -- (D) -- cycle ;
				\node[] at (2.7,0)  (otimes) {$\oplus$};
				\coordinate (E) at (3.5,-1) ;
				\coordinate (F) at (3.5,1.0) ;
				\coordinate (G) at (5.5,1.0) ;
				\coordinate (H) at (5.5,-1.5) ;
				\draw [thick,-, fill=black, pattern=north east lines, pattern color=red] (E) -- (F)  -- (G) -- (H)  -- cycle;
				\node[] at (7.5,0) {$\mapsto$};
				\coordinate (J) at (9,2) ;
				\coordinate (L) at (9,0) ;
				\coordinate (M) at (11,2.5) ;
				\coordinate (O) at (11, 0) ;
				\draw [thick,-,pattern=north west lines, pattern color=blue] (L) -- (J)   -- (M) -- (O)  -- cycle;
				\coordinate (J1) at (9,-1) ;
				\coordinate (M1) at (11,-1.5) ;
				\coordinate (O1) at (11,1) ;
				\coordinate (L1) at (9,1) ;
				\draw [thick,-,pattern=north east lines, pattern color=red] (L1) -- (J1)   -- (M1) -- (O1)  -- cycle;
				\node[] at (1,-2)  (a) {$f_1$};
				\node[] at (4.5,-2)  (b) {$f_2$};
				\node[] at (10,-2.1)  (b) {$f_1 \oplus f_2$};
			\end{tikzpicture}}
		\end{center}
        \vspace*{-.4cm}
        \caption{ Parallel composition depicted in~\cite{bao2021bunched}.}
        \label{fig:pic-parallel}
        \end{subfigure}%
        \hfill
        \begin{subfigure}[b]{0.5\textwidth}
        \begin{center}
            \scalebox{.9}{\tikzfig{compose-intuition-1}}
        \end{center}
        \vspace*{-.4cm}
        \caption{A diagrammatic representation.}
        \label{fig:string-diag-para}
    	\end{subfigure}
     \caption{Intuition for parallel composition.}
     \vspace*{-.6cm}
\end{figure}

Our solution stems from a formalisation of this graphical intuition through \emph{string diagrams}, a pictorial formalism for monoidal categories. String diagrams are widely adopted as intuitive yet mathematically rigorous reasoning tools across different areas of science, see~\cite{piedeleuzanasi23} for an overview.
We formalise the trapezoids intuition in \Cref{fig:pic-parallel} into string diagrams as in \Cref{fig:string-diag-para}.
The maps previously embodied as trapezoids now have a fork shape, with some branches being straight lines and some other branches going through boxes.
The boxes represent arbitrary morphisms in the underlying category, and the straight lines represent the identity morphisms.  
Whereas composition of two DIBI states were hand-waved as two trapezoids tiled together in~\Cref{fig:pic-parallel}, with string diagram we can define it precisely:
the overlap of the two trapezoids is witnessed by the grey wires, and
the composition joins two diagrams side-by-side with the grey wires shared.
We will show in~\Cref{sec:Markov-DIBI-model} that this string diagram representation yields DIBI models in any category with enough structure to interpret $\tikzfig{copier}$, namely, Markov categories~\cite{cho2019disintegration,fritz2020markov}.
Furthermore, in \Cref{sec:applications}, we will derive several concrete DIBI models as instances.

Additionally, our framework enables a comparison between different characterisations  of conditional independence (CI).
The previous work~\cite{bao2021bunched} expresses probabilistic or relational CI in terms of satisfaction of some DIBI formulas.
Since we can construct categorical DIBI models based on any Markov categories, we define a logical notion of CI for morphisms in Markov categories as  satisfaction of those DIBI formulas.
In \Cref{sec:CI}, we investigate the relationship between our `logical' CI and various CI notions based on categorical structures from literature in synthetic statistics~\cite{cho2019disintegration,fritz2020markov}.

Throughout the paper we fix a countably infinite set of variables $\setVar$, use $x, y, z, \dots$ for elements of $\setVar$, and use $W, X, Y, \dots$ for finite subsets of $\setVar$.
\section{Category Theory Preliminaries}\label{sec:preliminaries}
Unless specified, we assume that all monoidal categories we consider are strict and write $\domain{f}$ and $\codom{f}$ for the domain and codomain of any morphism $f$.
We write $\lr{\catC, \tensor, \tenUnit}$ for a (strict) monoidal category, where $\tensor$ is the monoidal product and $\tenUnit$ the unit object of $\catC$. If is also symmetric, we write $\tenSwap_{\obA, \obB} \colon \obA \tensor \obB \to \obB \tensor \obA$ for the symmetry natural transformation indexed by objects $\obA$ and $\obB$.

As detailed for instance in~\cite{selinger2010survey,piedeleuzanasi23,fong2019invitation}, morphisms of symmetric monoidal categories have a graphical presentation as string diagrams, where sequential composition and monoidal product are depicted as concatenation and juxtaposition of diagrams, respectively: given morphisms $f \colon \obX \to \obY$, $g \colon \obY \to \obZ$, $h \colon \obU \to \obV$,
\[
    g \after f = \tikzfig{seq-comp} \qquad g \tensor h = \tikzfig{para-comp}
\]
Our convention is to read string diagrams from left to right, and tensor products from top to bottom. We will sometimes omit object labels in the diagrams when they are evident or irrelevant to the context. Symmetries are indicated with the string diagram $\tikzfig{swap}$. We call string diagrams consisting solely of combinations of $\tikzfig{swap}$s \emph{rewirings}: intuitively, they permute the order of the objects.

We will need the notion of a Markov category, which suitably generalises categories of probabilistic processes~\cite{fritz2020markov}.
First, a \emph{copy-delete category} (\emph{CD category}) is a symmetric monoidal category (SMC) $\lr{\catC, \tensor, \tenUnit}$ with `copy' $\copyMor_{\obC}$ and `delete' $\discardMor_{\obC}$ morphisms for each object $\obC$, drawn diagrammatically as $\copier$ and $\discarder$ respectively, that form a commutative comonoid:
\[
    \tikzfig{copier-ass} \qquad \tikzfig{copier-discarder} \qquad \tikzfig{copier-sym}
\]
Because of the leftmost equation above, we sometimes write a `trident' $\tikzfig{three-copies}$ for either side of it. Moreover, both $\copyMor$ and $\discardMor$ are compatible with the monoidal structure:
\[
    \tikzfig{nat-copy-1} = \tikzfig{nat-copy-2} \qquad \qquad  \tikzfig{nat-discard-1} = \tikzfig{nat-discard-2}
\]
We say \emph{$\discardMor$ is natural} if $\tikzfig{tenUnit-terminal}$ for every morphism $f$. A \emph{Markov category} is a CD category in which $\discardMor$ is natural. A CD category $\catC$ \emph{has conditionals} if for each morphism $f \colon \obA \to \obX \tensor \obY$, there exist (not necessarily unique) morphisms $f_{\obX} \colon \obA \to \obX$ (called the \emph{marginal}) and $f_{|\obX} \colon \obX \to \obY$ (called the \emph{conditional}) such that $\scalebox{.9}{\tikzfig{def-conditional-5}} = \scalebox{.9}{\tikzfig{def-conditional-6}}$. When $\catC$ is a Markov category, such marginal $f_\obX$ is unique given $\obX$ because of the naturality of
$\discardMor$:
\[
    \tikzfig{def-conditional-7} = \tikzfig{def-conditional-8} = \tikzfig{def-conditional-9}
\]
%


\section{DIBI Logic and its Probabilistic Model}
\label{sec:DIBI}
In this section we review the logic of \emph{Dependence and Independence Bunched Implications} (DIBI). For space reasons, we focus on the discrete probabilistic model for DIBI, as introduced in~\cite{bao2021bunched}. The interested reader may refer to~\cite{bao2021bunched} for the relational model, whose construction follows similar steps.

\noindent DIBI formulas are defined inductively as follows:
\begin{equation*}
  P, Q  ::= p \in \setAtomProp
  \mid \top
  \mid I
  \mid P \land Q
  \mid P \to Q
  \mid P \mand Q
  \mid P \mimp Q
  \mid P \biThen Q
  \mid P \biTo Q
  \mid P \biOt Q
\end{equation*}
The additive conjunction $\land$ is just the standard Boolean conjunction. The multiplicative conjunction $\mand$ states that $P$ and $Q$ are independent. Both are already present in BI.  DIBI extends BI with the non-commutative conjunction $\biThen$\footnote{Not to be confused with the additive context constructor which is also denoted as $\biThen$ in the standard BI literature such as \cite{o1999logic,pym2013semantics}.}, where $P
\biThen Q$ states that $Q$ may depend on $P$. The operation $\mimp$ is adjoint to $\mand$, and $\biTo$, $\biOt$ are adjoints to $\biThen$. DIBI formulas are interpreted on
DIBI \emph{models}, each consisting of a \emph{DIBI frame}
 on a set of states $A$ and a
\emph{valuation} function $\valV \colon \setAtomProp \to \powset (A)$
that maps an atomic proposition to the set of states on which it is true.
While a BI frame is based on a partial commutative monoid~\cite{docherty2019thesis}, a DIBI frame consists of two monoids (one commutative and one not) on the same underlying set, taking care of the two non-additive conjunctions $\mand$ and $\biThen$, respectively.
%
\begin{definition}[\cite{bao2021bunched}]
\label{def:frame-cond}
A \emph{DIBI frame} is a tuple $\frameA = \langle A, \stSmaller, \stPlus,
\stThen, \unitSet \rangle$, where $A$ is a set of states, $\stSmaller$ is a preorder on $A$, $\unitSet
\subseteq A$ are units, and $\stPlus, \stThen \colon A \times A \to \powset(A)$ are partial binary operations\footnote{Note that, even though $\stThen, \stPlus$ are also partial in the models considered in~\cite{bao2021bunched}, they have type $A \times A \to \powset (A)$ in that work. This is because the authors obtain completeness of DIBI logic using a method developed by Docherty~\cite{docherty2019thesis}, which only works for the more general type. Because the operations are actually partial rather than non-deterministic, and we are not interested in completeness here, we stick to the more accurate type.}, satisfying the frame conditions in \Cref{fig:dibi-frame}.

\begin{figure}[tb]
    \centering \footnotesize
    \begin{minipage}{0.5\textwidth}
        \begin{align*}
        & a \stPlus b \eqIfDefined b \stPlus a
        \tag{$\stPlus$-\textsc{Com}}\label{frameAx:plus-com} \\
        & \exists e \in E \colon a = e \stPlus a
        \tag{$\stPlus$-\textsc{UnitExist}} \label{frameAx:plus-unit-exist} \\
        & (a \stPlus b) \stPlus c \eqIfDefined a \stPlus (b \stPlus c)
    \tag{$\stPlus$-\textsc{Assoc}}\label{frameAx:plus-ass}
    \end{align*}
    \end{minipage}%
    \begin{minipage}{0.5\textwidth}
        \begin{align*}
         & \exists e \in E \colon a = e \stThen a
        \tag{$\stThen$-\textsc{UnitExist\textsubscript{L}}} \quad \label{frameAx:then-unit-left} \\
        & \exists e \in E \colon a = a \stThen e
        \tag{$\stThen$-\textsc{UnitExist\textsubscript{R}}} \label{frameAx:then-unit-right} \\
        & (a \stThen b) \stThen c \eqIfDefined a \stThen (b \stThen c)
        \tag{$\stThen$-\textsc{Assoc}}\label{frameAx:then-ass}
        \end{align*}
    \end{minipage}
    \vspace{-.3cm}
    \begin{align*}
        & e \in E \,\&\, (a \stPlus e) \isDefined \,\,\onlyif (a \stPlus e) \stGeq a
        \tag{$\stPlus$-\textsc{UnitCoh}} \label{frameAx:plus-unit-coherent} \\
        & e \in E \,\&\, (a \stThen e) \isDefined \,\,\onlyif (a \stThen e) \stGeq a \quad
        \tag{$\stThen$-\textsc{UnitCoh\textsubscript{R}}}\label{frameAx:then-coherence-right} \\
        & e \in E \,\&\, e'\stGeq e \onlyif e' \in E \tag{\textsc{UnitClosure}} \label{frameAx:unit-closure} \\
        & (a \stPlus b) \isDefined \,\&~\, a \stGeq a' \,\&\, b \stGeq b' \onlyif (a' \stPlus b') \isDefined \,\&\, (a \stPlus b) \stGeq (a' \stPlus b')
        \tag{$\stPlus$-\textsc{DownClosed}}\label{frameAx:plus-down-closed} \\
        & (a \stThen b) \isDefined \,\&\, (a \stThen b) \stLeq c' \onlyif \exists a', b' \colon a' \stGeq a \,\&\, b' \stGeq b \,\&\, c' = (a' \stThen b')
        \tag{$\stThen$-\textsc{UpClosed}}\label{frameAx:then-up-closed} \\
        & (a_1 \stThen a_2) \stPlus (b_1 \stThen b_2) \eqIfDefined (a_1 \stPlus b_1) \stThen (a_2 \stPlus b_2)
    \tag{\textsc{RevExchange}} \label{frameAx:rev-exchange}
            \end{align*}
    \vspace*{-0.6cm}
    \caption{DIBI frame conditions (with implicit outermost universal quantifiers), where $\isDefined$ stands for `is defined', $\eqIfDefined$ means `equal when either side is defined'.}
    \label{fig:dibi-frame}
  \end{figure}
\end{definition}
The operations $\stThen$ and $\stPlus$ are referred to as the sequential and parallel compositions of states. Intuitively, $a \stSmaller b$ says that $a$ can be extended to $b$,  and $E$ is the set of states that act as units for these operations.

For capturing conditional independence, atomic propositions $\setAtomProp$ have the form $\kp{S}{T}$, for finite sets of variables $S, T$. Roughly, $\kp{S}{T}$ means the values of variables in $T$ only depend on that of $S$. We now recall the semantics of DIBI formulas, restricting to the fragment that is necessary for the current work.

%
\begin{definition}
\label{def:DIBI-validity-sem-kernel}
Given a DIBI model $\langle \frameA, \valV \rangle$, \emph{satisfaction $\satisfy{\valV}$} of \smallDIBI{}-formulas at $\frameA$-states is inductively defined as follows:
   \begin{center}
    \begin{tabular}{@{\!\!\!}l@{\quad}c@{\quad}l}
        $a \satisfy{\valV} I$ \  iff \ $a \in E$
        &&$a \satisfy{\valV} \top$ \ always\\
        $a \satisfy{\valV} (\kp{A}{B})$ & iff & $a \in \valV(\kp{A}{B})$ \\
        $a \satisfy{\valV} P \land Q$ & iff & $a \satisfy{\valV} P ~\text{and}~ a \satisfy{\valV} Q$ \\
        $a \satisfy{\valV} P \mand Q$ & iff & $\exists b_1, b_2$ such that $b_1 \stPlus b_2 \stLeq a$, $b_1 \satisfy{\valV} P$, $b_2 \satisfy{\valV} Q$ \\
        $a \satisfy{\valV} P \biThen Q$ & iff &
        $\exists b_1,
        b_2$ such that $b_1 \stThen b_2 = a$, $b_1 \satisfy{\valV} P$, $b_2 \satisfy{\valV} Q$
    \end{tabular}
   \end{center}
\end{definition}

For a concrete example of DIBI models, we review the probabilistic models.
Let $\setVal$ be a set of values, to which variables in $\setVar$ are assigned.
A \emph{memory over} a finite set of variables $X$ is a function $\memoryM \colon X \to \setVal$, and the \emph{memory space} over $X$ is the set of all memories over $X$,
denoted as $\memSpace{X}{\setVal}$, or $\memSpaceS{X}$ when $\setVal$ is clear.
Given a memory $\memoryM \in \memSpaceS{X}$ and a subset $U \subseteq X$, the memory $\memProj{\memoryM}{U} \colon U \to \setVal$ is the restriction of $\memoryM$ to the domain $U$.
Given a set $S$, $\dist S$ is the set of discrete distributions over $S$; that is, for any $\mu \in \dist S$, we have $\mu \colon S \to [0,1]$, the support $\support{\mu} = \{ s \in S \mid \mu(s) > 0 \}$ is finite, and $\sum_{s \in S} \mu(s) = 1$.
A dirac distribution $\mu$ on an outcome $s$ is the distribution $\mu$ such that
$\mu(s) = 1$, and $\mu(s') = 0$ for any $s' \neq s$.
Given a distribution $\mu$ in $\dist \memSpaceS{X}$, if $Y \subseteq X$,
we define the marginalisation of $\mu$ to $\dist \memSpaceS{Y}$, written as
$\proj_{Y} \mu$, by letting $(\proj_{Y} \mu) (\memoryM_Y) = \sum_{\memoryM \in \memSpaceS{X} \mid \memProj{\memoryM}{Y} = \memoryM_Y } \mu(\memoryM)$.

%
To capture conditional independence, we now introduce the notion of \emph{probabilistic input-preserving kernels}.
%
\begin{definition}[\cite{bao2021bunched}]
\label{def:probabilistic-kernel}
    A \emph{probabilistic input-preserving kernel} (or \emph{probabilistic kernel} for short) is a function $f \colon \memSpaceS{X}{} \to \dist \memSpaceS{Y}{}$ satisfying:
    \begin{enumerate*}[label=(\roman*)]
        \item $X \subseteq Y$, \label{item:prob-kernel-1}
        \item $\proj_{X} \circ f = \monadUnit{\dist}{\memSpaceS{X}}$,
        where $\monadUnit{\dist}{\memSpaceS{X}} (\memoryM)$ returns the dirac
        distribution over $\memoryM$.
        \label{item:prob-kernel-2}
    \end{enumerate*}%
    The set of all probabilistic kernels is denoted $\conProbKern$.
\end{definition}
In words, a probabilistic kernel $f$ maps a memory $\memoryM$ on $X$ to a distribution of memories on $Y \supseteq X$ whose support contains only memories $\memoryM'$ that faithfully extends $\memoryM$ (thus the name `input-preserving'). Alternatively, $f$ can be seen as a conditional distribution $\Pr(Y \!\mid\! X)$ where $Y \supseteq X$, such that $\Pr(Y = B \!\mid\! X = A)$ is nonzero only if $B$ restricted to the domain $X$ equals $A$.
%
%
\begin{definition}[Probabilistic model, \cite{bao2021bunched}]
\label{def:concrete-prob-frame}
The \emph{probabilistic frame $\probFrameBasedOn{\setVal}$ based on $\Values{}$} is a tuple $\lr{\conProbKern, \stLeq, \stPlus, \stThen,
\conProbKern}$
where $\stThen, \stPlus, \stLeq$ are defined for
arbitrary
    $f \colon \memSpaceS{X} \to \dist \memSpaceS{Y}$ and
    $g \colon \memSpaceS{Z} \to \dist \memSpaceS{W} $ as:
\vspace{-0.1cm}
\begin{itemize}
    \item
    the sequential composition $f \stThen g$ is defined iff $Y = Z$. In this case, $f \stThen g$ is of the form $\memSpaceS{X} \to \dist \memSpaceS{W}$, and given $\memoryM \in \memSpaceS{X}$, $(f \stThen g)(\memoryM)$ maps $\memoryN \in \memSpaceS{W}$ to $\sum_{\memoryL \in \support{f(\memoryM})} g(\memoryL)(\memoryN)$;
    \item
    the parallel composition $f \stPlus g$ is defined iff $X \cap Z =
    Y \cap W$. In this case, $f \stPlus g$ is of the form $\memSpaceS{X \cup Z} \to \dist \memSpaceS{Y \cup W}$ such that given $\memoryL \in \memSpaceS{X \cup Z}$ and $\memoryM \in \memSpaceS{Y \cup W}$,
    we have
    $(f \stPlus g)(\memoryL)(\memoryM) = f(\memProj{\memoryL}{X} )(\memProj{\memoryM}{Y}) \cdot g(\memProj{\memoryL}{Z})(\memProj{\memoryM}{W})$;
    \item the subkernel relation $f \stLeq g$ holds if there exist a finite set of variables $S$ and $h \in \conProbKern$ such that $g = \left( f \stPlus \monadUnit{\dist}{\memSpaceS{S}} \right) \stThen h$.
    %
\end{itemize}
\vspace{-0.1cm}
The \emph{probabilistic model based on $\setVal$} consists of the probabilistic frame $\probFrameBasedOn{\setVal}$ and the following \emph{natural valuation} $\natVal \colon \setAtomProp \to \powset(\conProbKern)$: given $(\kp{S}{T})$ and $f \colon \memSpaceS{X} \to \dist \memSpaceS{Y}$, $f \in \natVal(\kp{S}{T})$ iff there exists a probabilistic kernel $f' \colon \memSpaceS{X'} \to \dist \memSpaceS{Y'}$ such that $f' \sqsubseteq f$, $X' = S$ and $T \subseteq Y'$.
\end{definition}
We may simply write $\probFrame$ when the underlying set of values $\setVal$ is evident.

Next we give examples of probabilistic kernels and how they compose.
We will abbreviate a map from a variable $x$ to a value $c$ as $c_x$ and use the ket notation $\ket{\omega}$ for each probabilistic outcome $\omega$.

\begin{example}
\label{ex:prob-kernel}
    Consider variables $x, y, z$ that take values in $\Val = \Bool$.
    We define a map $f \colon \memSpaceS{ \{ z \} } \to \dist \memSpaceS{ \{ x, y, z \} }$ by:
    {\small
    \begin{align*}
        f( \oran{0_z}) & =
        \frac{1}{4} \ket{0_x, 0_y, \oran{0_z}}
        + \frac{1}{4} \ket{0_x, 1_y, \oran{0_z}}
          + \frac{1}{4} \ket{1_y, 0_y, \oran{0_z}}
        + \frac{1}{4} \ket{1_y, 1_y, \oran{0_z}}\\
        f( \purp{1_z}) & =
        \frac{1}{16} \ket{0_x, 0_y, \purp{1_z}}
        + \frac{3}{16} \ket{0_x, 1_y, \purp{1_z}}
         + \frac{3}{16} \ket{1_y, 0_y, \purp{1_z}}
        + \frac{9}{16} \ket{1_y, 1_y, \purp{1_z}}
    \end{align*}
    }%
    %
    The input memory (coloured) is preserved by $f$ so it is a probabilistic kernel.
    Then define $g_1 \colon \memSpaceS{ \{z\} } \to \dist\memSpaceS{ \{x, z\} }$ and $g_2 \colon \memSpaceS{ \{z\} } \to \dist\memSpaceS{ \{y, z\} }$  as:
    {\small
    \begin{align*}
        g_1 (\oran{0_z}) & = \frac{1}{2} \ket{0_x, \oran{0_z}} + \frac{1}{2} \ket{1_y, \oran{0_z}}
        & g_1 (\purp{1_z}) & = \frac{1}{4} \ket{0_x, \purp{1_z}} + \frac{3}{4} \ket{1_y, \purp{1_z}} \\
        g_2 (\oran{0_z}) & = \frac{1}{2} \ket{0_y, \oran{0_z}} + \frac{1}{2} \ket{1_y, \oran{0_z}}
        & g_2 (\purp{1_z}) & = \frac{1}{4} \ket{0_y, \purp{1_z}} + \frac{3}{4} \ket{1_y, \purp{1_z}}
    \end{align*}
    }%
    Both $g_1$ and $g_2$ are probabilistic kernels as well.
    The parallel composition $g_1 \stPlus g_2$ is defined since $\{ z \} \cap \{ z \} = \{x,z\} \cap \{y, z\}$; in fact, it is easy to verify that $g_1 \stPlus g_2 = f$.
    Moreover, $g_1$ and $g_2$ can be obtained by projecting the output of $f$ on $\{ x, z \}$ and $\{ y, z \}$, respectively, and we can show $g_1 \stLeq f$ and $g_2 \stLeq f$.
\end{example}
%
%
\section{DIBI models in Markov categories}
\label{sec:Markov-DIBI-model}
In this section we construct more abstract DIBI models based on categorical structures.
The starting point of our approach is a categorical characterisation of the concrete probabilistic models given above.
In the following, we begin by showing examples of how elements in that model can be reformulated in categorical terms and then formally present our categorical construction of DIBI models.

As we noted in~\Cref{sec:intro}, the probabilistic DIBI kernels can be identified as morphisms in the Kleisli category for the distribution monad $\Kl(\dist)$ (\Cref{def:dist-monad}); however, not all morphisms in $\Kl(\dist)$ are probabilistic DIBI kernels, so we need to define the extra conditions categorically.
Let $\catMemProb$ be the subcategory of $\Kl(\dist)$ where objects are restricted to memory spaces over $\setVal$. That is, the objects are memory spaces $\memoryM  \colon X \to \setVal$, and the morphisms are maps $f: \memSpaceS{X} \to \dist{\memSpaceS{Y}}$.
%
Then, probabilistic kernels are exactly those morphisms in the  $\catMemProb$ that satisfy the input-preserving condition in~\Cref{def:probabilistic-kernel}.
To visualise the probabilistic kernels, we depict them using string diagrams, which is possible because $\catMemProb$ is a subcategory of $\Kl(\dist)$ and $\Kl(\dist)$ has the monoidal structure(\Cref{appendix:sec-app}).
We also observe that the codomain of an input-preserving kernel $f \colon \memSpaceS{X} \klto \memSpaceS{Y}$ can be decomposed as $\memSpaceS{X} \times \memSpaceS{Y \setminus X}$,
Recall the probabilistic kernel $f$ from \Cref{ex:prob-kernel}. Since its codomain $\memSpaceS{\{x, y, z\}}$ can be decomposed as $\memSpaceS{\{x\}} \times \memSpaceS{\{y\}} \times \memSpaceS{\{z\}}$, we can draw it as follows:
\[\tikzfig{kernel-eg-3}\]
Intuitively, $\tikzfig{copy-Mz}$ produces two copies of the value of $z$, and the values of $x$ and $y$ are computed from that of $z$ via $\tikzfig{f-prime}$, while the value of $z$ gets preserved through a straight wire in the bottom.
As in this example, such copy structure of $\Kl(\dist)$ enables us to capture the `input-preserving' condition of probabilistic kernels generally.


Next we want to express the sequential ($\stThen$) and parallel ($\stPlus$) compositions of probabilistic kernels categorically.
The former is exactly the sequential composition of morphisms in $\Kl(\dist)$.
The parallel composition, however, is \emph{not} the monoidal product $\tensor$ in $\Kl(\dist)$, which is the Cartesian product on objects. By definition, the monoidal product is total, while the parallel composition is partial. Even when the parallel composition is defined, the types of the resulting morphisms do not match. Suppose that the parallel composition of $f \colon \memSpaceS{X} \klto \memSpaceS{Y}$ and $g \colon \memSpaceS{U} \klto \memSpaceS{V}$ is defined, we have
\[
    f \stPlus g \colon \memSpaceS{ X \cup U } \klto \memSpaceS{ Y \cup V }
    \quad
    f \tensor g \colon \memSpaceS{X} \times \memSpaceS{U} \klto \memSpaceS{Y} \times \memSpaceS{V}
\]
The key difference is that parallel composition considers a single memory that can be projected into two pieces, while the monoidal product considers the cartesian product of two pieces of memory, no matter if they agree or not on overlapped variables.
To combine $\memSpaceS{X}$ and $\memSpaceS{U}$ into $\memSpaceS{X \cup U}$ categorically, we note that for disjoint $Z_1, Z_2$, $\memSpaceS{Z_1 \cup Z_2} \cong \memSpaceS{Z_1} \times \memSpaceS{Z_2}$, therefore
$\memSpaceS{X \cup U} \cong \memSpaceS{X\setminus U} \times \memSpaceS{X \cap U} \times \memSpaceS{U \setminus X}$.
Thus we can illustrate the parallel composition of two probabilistic kernels as in the following example.

%
%

\begin{example}
\label{eg:first-diagram-parallel}
    The probabilistic kernels $g_1$ and $g_2$ from \Cref{ex:prob-kernel} -- seen as $\Kl(\dist)$-morphisms -- are drawn as the first and second string diagram below respectively.
    \begin{equation*}
         \tikzfig{diagram-g1} \qquad \tikzfig{diagram-g2} \qquad  \scalebox{.9}{\tikzfig{g1-g2-para-comp}}
    \end{equation*}
    where $g_1' \colon \memSpaceS{\{z\}} \klto \memSpaceS{\{x\}}$ and $g_2' \colon \memSpaceS{\{z\}} \klto \memSpaceS{\{y\}}$ represent the conditional distributions obtained by suitable projections of $g_1$ and $g_2$, respectively. The parallel composition $g_1 \stPlus g_2$ is given by the rightmost string diagram above. 
\end{example}
We omit the formal string diagrammatic definitions here as they can be easily derived from their counterparts in the generic construction of DIBI models, defined later in \Cref{def:kernel-composition}.

Towards that categorical construction of DIBI models,
we also want to generalise the concept of memory spaces
$\memSpaceS{X}$, which were customised for reasoning about probabilistic programs and relational databases.
We observe that the side conditions of the parallel and sequential compositions are all based on comparing the set of variables in the (co)domains, so they only depend on the $X$ part in $\memSpaceS{X}$.
This motivates us to define DIBI states as morphisms in a category whose objects are made of variables in~\Cref{def:category-with-assignment}.
To express finite sets of variables and the union of disjoint such sets in a monoidal category,
in the following we will represent finite sets of variables as lists. We impose a linear order $\varLeq$ on $\setVar$ such that indexed variables  inherit the order of their indices, e.g., $x_1 \varLeq x_2 \varLeq x_3$.
Let $x \varLessThan y$ abbreviate for $x \varLeq y$ and $x \neq y$.
Then, finite sets of variables can be represented as finite lists of variables ordered by $\varLessThan$, via a translation that we write as $\setToList{\cdot}$. For instance, $\setToList{\{ x_3, x_1, x_3, x_4 \}} = [x_1, x_3, x_4]$. 

Throughout the rest of the section we fix a Markov category $\lrangle{\catC, \tensor, \tenUnit}$, and assignment $\obChoice \colon \Var \to \ob{\catC}$ of a $\catC$-object to each $x \in \Var$.
\begin{definition}
\label{def:category-with-assignment}
    Let $\catWithVarName{\catC}{\obChoice}$ be the symmetric monoidal category whose objects are finite lists of variables, and morphisms $[x_1, \dots, x_m] \to [y_1, \dots, y_n]$ are $\catC$-morphisms $\obChoice(x_1) \tensor \cdots \tensor \obChoice(x_m) \to \obChoice(y_1) \tensor \cdots \tensor \obChoice(y_n)$.
    Sequential composition is defined as in $\catC$. The identity on $[x_1, \dots, x_m]$ is $\idMor_{\obChoice x_1 \tensor \cdots \tensor \obChoice x_m}$.
    The monoidal product in $\catWithVarName{\catC}{\obChoice}$ -- which we also write as $\tensor$ with abuse of notation -- is list concatenation on objects, and monoidal product in $\catC$ on morphisms.
\end{definition}
%
%
Since $\catC$ is a Markov category, it follows immediately that $\catWithVarName{\catC}{\obChoice}$ is also a Markov category.
Sometimes we restrict ourselves to a uniform assignment $\obChoice$; that is, for some fixed $\obC \in \ob{\catC}$, $\obChoice(x) = \obC$ for all $x \in \setVar$. This is in line with the scenario where a fixed value space $\setVal$ is used for all variables (see \Cref{def:probabilistic-kernel}).
In this case, we write $\catWithVarName{\catC}{\obChoice}$ as $\catWithVarName{\catC}{\obC}$ to emphasise the uniform value of the assignment. This category can be seen as the full subcategory of $\catC$ freely generated by $\obC$, but with each occurrence of the generating object named by a variable. The next example shows how the construction in \Cref{def:category-with-assignment} selects morphisms of $\Kl(\dist)$ that act on memory spaces, among which we have all the probabilistic kernels.

%
\begin{example}
\label{ex:cat-with-var-name}
   Let $\catC$ be $\Kl(\dist)$, and $\obChoice \colon \setVar \to \ob{\Kl(\dist)}$ be the constant function $x \mapsto \setVal$ for all $x \in \setVar$. Then there is a full and faithful embedding functor $\iota \colon \catMemProb \to \catWithVarName{\Kl(\dist)}{\obChoice}$: on objects, given a set $X$, $\iota(\memSpaceS{X}) = \setToList{X}$; on morphisms, given $f \colon \memSpaceS{X} \to \dist \memSpaceS{Y}$ with $X = \{ x_1, \dots, x_m \}$ and $Y = \{ y_1, \dots, y_n \}$, its image $\iota(f) \colon X \to Y$ is the composed map $\setVal^{m} \xrightarrow{\cong} \memSpaceS{X} \xrightarrow{f} \dist \memSpaceS{Y} \xrightarrow{\dist \cong} \dist \setVal^n$, where the isomorphisms are, e.g., $\memSpaceS{Y} \xrightarrow{\cong} \memSpaceS{y_1} \times \cdots \times \memSpaceS{y_n} \xrightarrow{\cong^{\obChoice} } \setVal^n$, using the valuation $\obChoice(y_j) = \setVal$.
\end{example}
Now that we have abstracted the concept of memory spaces used in concrete DIBI models as objects in Markov categories, the states of DIBI models --- the role that in the probabilistic models is filled by probabilistic kernels --- takes the form of morphisms in these categories. Next we  identify  those $\catC$-morphisms that constitute the states of DIBI models. We call them the \emph{input-preserving kernels} in $\catC$, written $\setSemKernel{\catWithVarName{\catC}{\obChoice}}$.


\begin{definition}
\label{def:cat-IP-kernel}
    A $\catWithVarName{\catC}{\obChoice}$-morphism $f \colon [x_1, \dots, x_m] \to [y_1, \dots, y_n]$ is a \emph{$\catWithVarName{\catC}{\obChoice}$ input-preserving kernel} (or \emph{$\catWithVarName{\catC}{\obChoice}$-kernel} for short) if $x_1 \varLessThan \cdots \varLessThan x_m$, $y_1 \varLessThan \cdots \varLessThan y_n$, and $f$ can be decomposed as follows, where $\sigma$ is rewiring:
    \begin{equation}
    \label{eq:C-kernel}
        \tikzfig{diagram-f} = \scalebox{0.9}{\tikzfig{C-kernel-1}}
    \end{equation}
\end{definition}
In words, a $\catWithVarName{\catC}{\obChoice}$-kernel is a morphism whose interfaces are essentially finite sets of variables, such that the input is preserved as part of the output (through the upper leg of those $\copier$s). The map $f'$ in \eqref{eq:C-kernel} is referred to as the nontrivial part of the input-preserving kernel.
It follows from \Cref{def:cat-IP-kernel} that, for a $\catWithVarName{\catC}{\obChoice}$-kernel, its codomain $[ y_1, \dots, y_n]$ always subsumes its domain $[ x_1, \dots, x_m ]$; also, $u_1, \dots, u_k $ are precisely those $y_j$s that are not among these $x_i$s. Since the (co)domains of $\catWithVarName{\catC}{\obChoice}$-kernel are list presentation of sets, we also write the types of $\catWithVarName{\catC}{\obChoice}$-kernels using the corresponding sets, e.g., in \eqref{eq:C-kernel}, $f \colon \{ x_1, \dots, x_m \} \to \{ y_1, \dots, y_n \}$.

Next we define compositions on input-preserving kernels, generalising what we have seen in Example~\ref{eg:first-diagram-parallel} for the probabilistic models.

\begin{definition}[Compositions]
\label{def:kernel-composition}
Given arbitrary $\catWithVarName{\catC}{\obChoice}$-kernels $f \colon X \to Y$ and $g \colon U \to V$ as in \Cref{fig:diagram-f-and-g}, their \emph{sequential composition} $f \stThen g$ is defined iff $\codom{f} = \domain{g}$, in which case $f \stThen g = g \after f$. Their \emph{parallel composition} $f \stPlus g$ is defined iff $X \cap U = Y \cap V$. Assume $\listL = \setToList{ X \cap U }$, $\listL_1 = \setToList{ X \setminus (X \cap U) }$, $\listL_2 = \setToList{ U \setminus (X \cap U) }$, $\listK_1 = \setToList{Y \setminus (Y \cap V)}$, and $\listK_2 = \setToList{V \setminus (Y \cap V)}$, then $f \stPlus g \colon X \cup U \to Y \cup V$ is defined as in~\Cref{fig:diagram-c-plus-d}, where all the $\sigma_i$s are rewiring morphisms for making the input and output variables $\varLessThan$-ordered.
%

\begin{figure}[hbt]
    \centering \vspace*{-.8cm}
    \begin{subfigure}[b]{0.6\textwidth}
        \[
            \scalebox{.88}{\tikzfig{def-kernel-para-comp-3}} ~
            \scalebox{.88}{\tikzfig{def-kernel-para-comp-4}}
        \]
           \vspace*{-.4cm}
        \caption{$\catWithVarName{\catC}{\obChoice}$-kernels $f$ and $g$.}
        \label{fig:diagram-f-and-g}
    \end{subfigure}
    \begin{subfigure}[b]{0.35\textwidth}
        \[
            \scalebox{.88}{\tikzfig{def-kernel-para-comp-5}}
        \]
           \vspace*{-.4cm}
        \caption{$f \stPlus g$}
        \label{fig:diagram-c-plus-d}
    \end{subfigure}
      \vspace*{-.2cm}
    \caption{Parallel composition of $\catWithVarName{\catC}{\obChoice}$.}
    \label{fig:para-compose-diagram}
    \vspace*{-.6cm}
\end{figure}
\end{definition}
Note here a benefit of the diagrammatic representation: we can easily identify the memory overlap $\memSpaceS{X \cap Y}$, as it is depicted a separate wire; with traditional syntax, we would need to apply associativity and commutativity to extract it from $\memSpaceS{X \cup Y}$.
It is easy to see that kernels are closed under compositions. Also, for curious readers, we note that $\catWithVarName{\catC}{\obChoice}$-kernels with their parallel compositions form a \emph{partially monoidal category}~\cite{balco2018partially}. Next we define the subkernel relation.


\vspace{0.2cm}
\noindent\begin{minipage}{0.55\textwidth}
\begin{definition}[Subkernel]
\label{def:kernel-leq}
Given two $\catWithVarName{\catC}{\obChoice}$-kernels $f$ and $g$, we say $f$ is a
\emph{subkernel} of $g$ -- denoted as $f \stLeq g$ -- if there exist $z_1, \cdots, z_n \in \setVar$, a $\catWithVarName{\catC}{\obChoice}$-kernel $h$, and rewiring morphisms $\sigma_1, \sigma_2$ such that $g$ can expressed as on the right-hand side.
\end{definition}
\end{minipage}%
\begin{minipage}{0.45\textwidth}
    \begin{equation*}\label{eq:def-kernel-leq}
    g = \tikzfig{def-subkernel}
    \end{equation*}
\end{minipage}
\vspace{0.2cm}

The subkernel relation is transitive and reflexive, which can be shown simply by manipulations of the string diagram.
We are finally able to state the main result of this section: $\catWithVarName{\catC}{\obChoice}$-kernels and their compositions form a DIBI frame.
\begin{restatable}{theorem}{ThmInstDibi}
    \label{thm:inst-DIBI}
    $\frameSemKernel{\catWithVarName{\catC}{\obChoice}} = \lrangle{ \setSemKernel{\catWithVarName{\catC}{\obChoice}}, \stLeq, \stPlus, \stThen, \setSemKernel{\catWithVarName{\catC}{\obChoice}} }$ is a DIBI frame.
\end{restatable}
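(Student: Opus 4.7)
My plan is to verify each of the twelve frame conditions of \Cref{fig:dibi-frame} by string-diagrammatic manipulation in $\catWithVarName{\catC}{\obChoice}$. Since the unit set $E$ is specified to be the whole carrier $\setSemKernel{\catWithVarName{\catC}{\obChoice}}$, UnitClosure is immediate, and each unit-existence axiom reduces to exhibiting \emph{some} witness kernel that behaves as a unit.

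I would first dispatch the structural axioms. Commutativity ($\stPlus$-Com) is evident from the symmetry of \Cref{def:kernel-composition} in $f, g$ modulo the rewirings $\sigma_i$. Associativity of $\stThen$ is immediate from associativity of composition in $\catC$. Associativity of $\stPlus$ is more involved: I would bookkeep the three-way shared domain $X_1 \cap X_2 \cap X_3$ and the pairwise overlaps, then invoke comonoid associativity to regroup the $\copyMor$ morphisms so that both bracketings reduce to the same normal form. The three unit-existence axioms are witnessed by identity kernels $\idMor_X$ for appropriate $X$: the domain of $a$ for $\stPlus$-UnitExist and $\stThen$-UnitExist$_L$, and the codomain of $a$ for $\stThen$-UnitExist$_R$.

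Next I would handle the axioms involving the subkernel order $\stLeq$. Recall from \Cref{def:kernel-leq} that $f \stLeq g$ means $g$ is obtained from $f$ by a parallel identity and a further sequential kernel. The two UnitCoh axioms then follow by expressing $a \stPlus e$ (resp.\ $a \stThen e$) in that form with trivial additional data. For $\stPlus$-DownClosed, I would take the sequential extension witnesses for $a \stGeq a'$ and $b \stGeq b'$, combine them in parallel, and use the underlying diagrammatic interchange of $\stPlus$ and $\stThen$ to extend $a' \stPlus b'$ to $a \stPlus b$. For $\stThen$-UpClosed, an extension $c' \stGeq (a \stThen b)$ takes the form $((a \stThen b) \stPlus \idMor) \stThen h$; I would split $h$ by separating the wires targeting $\codom{a}$ from the fresh wires, yielding the required $a' \stGeq a$ and $b' \stGeq b$ with $c' = a' \stThen b'$.

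The main obstacle will be RevExchange. Input-preservation of each kernel gives $X_i \subseteq Y_i \subseteq Z_i$, so $X_1 \cap X_2 \subseteq Y_1 \cap Y_2 \subseteq Z_1 \cap Z_2$; the LHS definedness condition $X_1 \cap X_2 = Z_1 \cap Z_2$ therefore forces all three intersections to agree, which is precisely the RHS-definedness condition. Equality of the two composite kernels then reduces to a diagrammatic identity combining the naturality of $\copyMor$ along the shared overlap wires with the interchange law between $\tensor$ and $\after$ in $\catC$. The subtle point is that the rewirings $\sigma_i$ on both sides must match up to symmetric monoidal coherence; I would normalise them using the symmetry axioms and check that both sides reduce to the same canonical form.
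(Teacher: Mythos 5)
Your plan follows the same route as the paper's proof: take $E$ to be the whole carrier, witness the unit axioms by identity kernels, transfer definedness by set-level bookkeeping on (co)domains (your sandwich argument $X_1\cap X_2 \subseteq Y_1\cap Y_2 \subseteq Z_1\cap Z_2$ for \eqref{frameAx:rev-exchange} is exactly the computation in the paper's appendix), and read the remaining equalities off string diagrams. Two of your steps, however, would fail if carried out as written. The more serious one is the appeal to ``the naturality of $\copyMor$'' in the \eqref{frameAx:rev-exchange} equality: in a Markov category $\copyMor$ is \emph{not} natural --- $\copyMor \after f = (f \tensor f) \after \copyMor$ holds only for deterministic $f$, and imposing it for all morphisms collapses the category to a cartesian one (in $\Kl(\dist)$, copying the result of one coin flip is not the same as flipping two independent coins). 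Fortunately the identity you need does not require it: once both sides are expanded via \Cref{def:kernel-composition} into wirings of the nontrivial parts $a_1', a_2', b_1', b_2'$, the two diagrams are identified using only the commutative comonoid laws for $\copyMor$, the compatibility of $\copyMor$ with $\tensor$, and bifunctoriality of $\tensor$; no box is ever pushed through a copy. The step must be restated in those terms, otherwise it rests on a principle that is false in the intended models.

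Second, for \eqref{frameAx:then-up-closed} you propose to ``split $h$ by separating the wires targeting $\codom{a}$ from the fresh wires''. Factoring $h$ itself into a piece acting on $\codom{a}$ and a piece acting on the fresh wires is impossible in general: $h$ may correlate those wires, and such a factorisation is precisely the kind of decomposability (existence of conditionals) that a Markov category need not have. The correct move, which is what the paper does, is to leave $h$ intact: from $c' = ((a \stThen b) \stPlus \idMor_U) \stThen h$ take $a' \coloneqq a \stPlus \idMor_U$ and $b' \coloneqq (b \stPlus \idMor_U) \stThen h$, i.e.\ cut the composite diagram just after $a \tensor \idMor_U$ and absorb $h$ entirely into $b'$; then $a' \stGeq a$, $b' \stGeq b$ and $a' \stThen b' = c'$ (the last equality using \eqref{frameAx:rev-exchange}). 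Finally, your plan leaves implicit three obligations that the paper discharges explicitly: in \eqref{frameAx:plus-down-closed} you must also prove that $a' \stPlus b'$ is \emph{defined} (a set computation using $\codom{a'} \cap \domain{a} = \domain{a'}$, which follows from the subkernel witnesses); you must check that kernels are closed under $\stPlus$ and $\stThen$; and you must check that $\stLeq$ is a preorder, since the frame definition demands it.
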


%
\noindent Also, under the natural valuation $\natVal$, a $\catWithVarName{\catC}{\obChoice}$-kernel $f \colon X \to Y$ satisfies $\kp{S}{T}$ iff there is a subkernel $(f' \colon X' \to Y') \stLeq f$ such that $X' = S$ and $Y' \supseteq T$. Thus:
\begin{corollary} $(\frameSemKernel{\catWithVarName{\catC}{\obChoice}}, \natVal)$ is a DIBI model.
\end{corollary}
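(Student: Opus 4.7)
The plan is to combine Theorem~\ref{thm:inst-DIBI} with a verification that $\natVal$ is a well-defined valuation compatible with the subkernel preorder on $\setSemKernel{\catWithVarName{\catC}{\obChoice}}$. The theorem already supplies the DIBI frame; what is missing is the check that $\natVal \colon \setAtomProp \to \powset(\setSemKernel{\catWithVarName{\catC}{\obChoice}})$ satisfies the \emph{persistence} condition on atomic propositions that BI-style models require, namely that for every $\kp{S}{T}$ and every pair of kernels $f \stLeq g$, membership $f \in \natVal(\kp{S}{T})$ entails $g \in \natVal(\kp{S}{T})$. This is what makes the inductively defined satisfaction relation in Definition~\ref{def:DIBI-validity-sem-kernel} compatible with $\stLeq$ on every formula.

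Well-definedness is immediate: by the stated description, $\natVal(\kp{S}{T})$ is exactly the set of kernels $f$ admitting some subkernel $f' \stLeq f$ with $\domain{f'} = S$ and $T \sublisteq \codom{f'}$, which is a determinate condition on $f$. For persistence, suppose $f \in \natVal(\kp{S}{T})$ and $f \stLeq g$. Unfolding the definition yields a witness $f'$ with $\domain{f'} = S$, $T \sublisteq \codom{f'}$, and $f' \stLeq f$. To deduce $g \in \natVal(\kp{S}{T})$ it suffices to exhibit the same $f'$ as a subkernel of $g$, so the whole statement reduces to the transitivity of $\stLeq$.

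Transitivity is the one technical step. Given $f' \stLeq f$ witnessed by fresh variables $\listz_1$, a $\catWithVarName{\catC}{\obChoice}$-kernel $h_1$, and rewirings $\sigma_1, \sigma_1'$, and $f \stLeq g$ witnessed by $\listz_2$, $h_2$, $\sigma_2, \sigma_2'$, one plugs the decomposition of $f$ given by the first witness into the occurrence of $f$ inside the second. Renaming through $\setVar$ if necessary to keep $\listz_1$ and $\listz_2$ disjoint from one another and from the existing interfaces, the resulting string diagram exhibits $g$ in the normal form demanded by Definition~\ref{def:kernel-leq}, with fresh variables $\listz_1 \listCup \listz_2$ and a new nontrivial part assembled from $h_1$, $h_2$ and additional $\copier$ morphisms. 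The rearrangement uses only the comonoid laws for $\copier$, the naturality of $\discardMor$, and the symmetric monoidal structure, all of which are guaranteed by $\catC$ being a Markov category. Reflexivity of $\stLeq$ is the degenerate case with empty fresh variables and $h$ the identity.

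Putting the pieces together: Theorem~\ref{thm:inst-DIBI} gives the frame, and persistence of $\natVal$ just proved supplies the valuation component, so $(\frameSemKernel{\catWithVarName{\catC}{\obChoice}}, \natVal)$ is a DIBI model. The main obstacle is the bookkeeping in the transitivity argument, in particular ensuring that the rewirings $\sigma_i$ compose cleanly so that the final interfaces remain $\varLessThan$-ordered and that the two sets of fresh variables stay disjoint; both issues are handled by the freedom to rename in the countably infinite set $\setVar$.
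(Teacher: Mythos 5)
Your proposal is correct, but it does more work than the paper itself does, and it is worth being clear about why. The paper's definition of a DIBI model (\Cref{sec:DIBI}) asks only for a DIBI frame together with \emph{some} function $\valV \colon \setAtomProp \to \powset(A)$; no condition on the valuation is imposed. Consequently the paper's proof of the corollary is immediate: \Cref{thm:inst-DIBI} supplies the frame, $\natVal$ is a function of the required type, done. What you prove in addition --- persistence of $\natVal$ under $\stLeq$ --- is the condition that the original DIBI paper of Bao et al.\ imposes on valuations (and which is implicitly needed for the persistence lemma of the logic, given that the satisfaction clause for $\mand$ is stated up to $\stLeq$), so your verification is a genuine and sensible strengthening rather than a redundancy. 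Your reduction of persistence to transitivity of $\stLeq$ is exactly right, and your transitivity argument (substituting the first witnessing decomposition into the second and reassembling a single witness) coincides with the paper's own proof of \Cref{prop:subkernel-preorder}, which the main text invokes only in passing (``transitive and reflexive, shown by manipulations of the string diagram''). Two small remarks on your bookkeeping. First, the renaming step is unnecessary: if $f' \stLeq f$ is witnessed by fresh variables $\listz_1$ and $f \stLeq g$ by $\listz_2$, then $\listz_2$ must already be disjoint from $\domain{f} \supseteq \listz_1$ and from $\codom{f} $ for the parallel composition with identities to be defined, so $\listz_1$ and $\listz_2$ are automatically disjoint. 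Second, renaming would in fact be delicate here, since $g$ is a fixed kernel and the decomposition in \Cref{def:kernel-leq} must reproduce $g$ on the nose --- fortunately, as just noted, you never need it. A cleaner route for the assembly step is \Cref{lem:subkernel-using-plus} together with \eqref{frameAx:rev-exchange}: from $f = (f' \stPlus \idMor_{U_1}) \stThen h_1$ and $g = (f \stPlus \idMor_{U_2}) \stThen h_2$ one gets $g = (f' \stPlus \idMor_{U_1 \cup U_2}) \stThen ((h_1 \stPlus \idMor_{U_2}) \stThen h_2)$, avoiding any explicit diagram surgery.
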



We will see in \Cref{sec:applications} how to use this categorical construction to derive a wide range of DIBI models. Moreover, it also helps to extract properties of the underlying category that are essential to a specific feature of a DIBI model. Here is an example.

\begin{restatable}{proposition}{PropSubkernelUnique}
\label{prop:subkernel-unique}
    If $\catC$ further satisfies that for arbitrary morphisms $f, g$ and object $\obD$, $f \tensor \discardMor_{\obD} = g \tensor \discardMor_{\obD}$ implies $f = g$, then subkernel is unique given its type in the following sense: if $\catWithVarName{\catC}{\obChoice}$-kernels $f_1, f_2 \colon U \to V$ are both subkernels of $g$, then $f_1 = f_2$.
\end{restatable}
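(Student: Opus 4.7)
The plan is to recover $f$ from $g$ via a canonical construction using only the comonoid structure of $\catC$, and then invoke the cancellation hypothesis. Given any subkernel decomposition of $g \colon X \to Y$ witnessing $f \colon U \to V$ as a subkernel, consider the morphism obtained by discarding from the output of $g$ everything outside $V$:
\[
\phi \defeq (\idMor_V \tensor \discardMor_{Y \setminus V}) \after g.
\]
This morphism is determined entirely by $g$, $U$, and $V$; it does not depend on any choice of decomposition.

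The core step is to show $\phi = f \tensor \discardMor_{X \setminus U}$ (modulo rewirings). Start from a decomposition $g = (f \stPlus \idMor_S) \stThen h$ (\Cref{def:kernel-leq}) with $S = \{z_1, \ldots, z_n\}$ and $h \colon V \cup S \to Y$ a $\catWithVarName{\catC}{\obChoice}$-kernel. Since $h$ is input-preserving, \Cref{def:cat-IP-kernel} decomposes $h$ via copiers on $V \cup S$ feeding one copy into some $h'$ that produces the new outputs in $Y \setminus (V \cup S)$. Discarding these new outputs triggers the naturality of $\discardMor$, collapsing $h'$ to $\discardMor$ on its input; the comonoid counit law $(\idMor \tensor \discardMor) \after \copyMor = \idMor$ then gives
\[
(\idMor_{V \cup S} \tensor \discardMor_{Y \setminus (V \cup S)}) \after h = \idMor_{V \cup S}.
\]
A further application of the counit law to discard the $S$-branch, combined with unfolding the definition of $\stPlus$ from \Cref{def:kernel-composition} -- in which the $\idMor_S$-branch is an identity wire that collapses to $\discardMor_S$ under this discarding -- yields $\phi = f \tensor \discardMor_{X \setminus U}$. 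We may assume $U \cap S = \emptyset$ without loss of generality: any $z \in U \cap S$ can be absorbed into $h$ via precomposition with a copier on the $z$-input, reducing to the canonical case $S = X \setminus U$.

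Applying the above to the two subkernel decompositions witnessing $f_1, f_2 \stLeq g$, we obtain
\[
f_1 \tensor \discardMor_{X \setminus U} = \phi = f_2 \tensor \discardMor_{X \setminus U},
\]
and the hypothesis of the proposition immediately gives $f_1 = f_2$. The main technical obstacle lies in the diagrammatic bookkeeping: carefully tracking the rewirings implicit in the definitions of $\stPlus$ and of the subkernel relation, and handling the normalization when $S$ and $U$ overlap. These are routine manipulations using the comonoid axioms, naturality of $\discardMor$, and symmetry of the monoidal product, but must be performed with some care to match wires correctly.
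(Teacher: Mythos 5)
Your proposal is correct and takes essentially the same route as the paper's proof: both post-compose $g$ with $\idMor_V \tensor \discardMor_{Y \setminus V}$, use input-preservation of $h$ together with naturality of $\discardMor$ (the paper's \Cref{prop:input-preserving}) to collapse $h$ to an identity, so that each decomposition reduces to $f_i \tensor \discardMor$, and then invoke the cancellation hypothesis to get $f_1 = f_2$. The only differences are presentational: your explicit canonical morphism $\phi$ makes the ``independence of the chosen decomposition'' visible, and your worry about $S$ overlapping $U$ is moot, since in \Cref{def:kernel-leq} the extra wires $z_1, \dots, z_n$ are necessarily disjoint from the interface of $f$ for $g$ to be a well-typed kernel.
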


Note that the uniqueness of subkernels has been observed already in the context of probabilistic and relational models, see~\cite[Sect.~IV]{bao2021bunched}. \Cref{prop:subkernel-unique} reveals the general conditions under which we have this uniqueness for a wider class of DIBI models. In particular, we will see in \Cref{subsec:prob-DIBI-frame} how the probabilistic models meets the conditions of \Cref{prop:subkernel-unique}.


%


\section{Examples}
\label{sec:applications}
In this section we provide concrete instances of the categorical construction in \Cref{sec:Markov-DIBI-model}. The first example concerns the probabilistic DIBI models.
The remaining examples are new DIBI models. Some of them have been suggested in the DIBI paper~\cite{bao2021bunched}, yet not materialised due to the complexity involved in stating each component and verifying the frame conditions. Within our framework, these steps become much easier to perform.
\subsection{Probabilistic (and Relational) DIBI Models}
\label{subsec:prob-DIBI-frame}
As we sketched in \Cref{eg:first-diagram-parallel} and \Cref{ex:cat-with-var-name},
the probabilistic DIBI kernels and
$\lrangle{\frameSemKernel{\Kl(\dist)}, \natVal}$ input-preserving kernels
correspond to each other.
We now formally show that the probabilistic DIBI model in  \Cref{def:concrete-prob-frame} can be recovered from the categorical DIBI model $\lrangle{\frameSemKernel{\Kl(\dist)}, \natVal}$. Since both models are equipped with the natural valuation $\natVal$, we focus on the frame part.
To make the correspondence precise, we introduce the category of DIBI frames, as hinted in \cite[Sect.~III]{bao2021bunched}.
\begin{definition}
\label{def:cat-DIBI-frame}
    In \emph{the category of DIBI frames $\catDibiFrame$}, objects are DIBI frames; morphisms $f \colon \lr{S, \stLeq_{S}, \stPlus_{S}, \stThen_{S}, \unitSet_{S}} \to \lr{T, \stLeq_{T}, \stPlus_{T}, \stThen_{T}, \unitSet_{T}}$ are functions $f \colon S \to T$ that respect all the relations and partial operations: for arbitrary $s, s' \in S$,
    \begin{itemize}
        \item $s \stLeq_{S} s'$ implies $f(s) \stLeq_{T} f(s')$;
        \item if $s \operateVar_{S} s'$ is defined, then $f(s) \operateVar_{S} f(s')$ is defined, and $f(s) \operateVar_{T} f(s') = f(s \operateVar_{S} s')$, for $\operateVar \in \{ \stPlus, \stThen \}$;
        \item $s \in \unitSet_{S}$ implies $f(s) \in \unitSet_{T}$.
    \end{itemize}
\end{definition}
It turns out that the function $\iota$ introduced in~\Cref{ex:cat-with-var-name} extends to an isomorphism of DIBI frames from $\probFrameBasedOn{\setVal}$ to $\frameSemKernel{\catWithVarName{\Kl(\dist)}{\setVal}}$.
\begin{restatable}{proposition}{PropProbModelEq}
\label{prop:prob-KlD-model-eq}
$\probFrameBasedOn{\setVal} \cong \frameSemKernel{\catWithVarName{\Kl(\dist)}{\setVal}}$.
\end{restatable}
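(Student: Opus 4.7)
The plan is to promote the embedding $\iota$ from \Cref{ex:cat-with-var-name} into an isomorphism in $\catDibiFrame$ and verify it preserves and reflects the whole DIBI frame structure. First I will show that $\iota$ restricts to a bijection between $\conProbKern$ and $\setSemKernel{\catWithVarName{\Kl(\dist)}{\setVal}}$. In one direction, given a probabilistic kernel $f\colon \memSpaceS{X}\to\dist\memSpaceS{Y}$ (so $X\subseteq Y$ and $\proj_X\circ f = \monadUnit{\dist}{\memSpaceS{X}}$), the input-preserving condition \ref{item:prob-kernel-2} is exactly the $\Kl(\dist)$-statement that the $\memSpaceS{X}$-part of the output is a copy of the input; rewriting this using the isomorphism $\memSpaceS{Y}\cong\memSpaceS{X}\times\memSpaceS{Y\setminus X}$ and the copy/discard structure of $\Kl(\dist)$ yields a string diagrammatic decomposition of the shape \eqref{eq:C-kernel}, with $f'$ obtained by marginalising $f$ onto $\memSpaceS{Y\setminus X}$. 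Conversely, any morphism of that diagrammatic shape automatically satisfies $\proj_{\setToList X}\circ f = \monadUnit{\dist}{}$ by the comonoid equations, so it corresponds to a probabilistic kernel under $\iota$.

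Next I would verify that $\iota$ (extended to kernels in the obvious way) respects the units: $\iota$ maps an identity probabilistic kernel $\monadUnit{\dist}{\memSpaceS{X}}$ to the identity morphism on $\setToList{X}$, which is exactly a unit in $\frameSemKernel{\catWithVarName{\Kl(\dist)}{\setVal}}$ (indeed the unit sets on both sides are the full set of kernels, so the unit condition is trivially matched). For sequential composition, both frames use ordinary Kleisli composition and the side condition $Y = Z$; since $\iota$ is functorial on the underlying Kleisli category, $\iota(f\stThen g)=\iota(f)\stThen\iota(g)$ and the definedness conditions transport correctly.

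The main obstacle is parallel composition, where the concrete definition is given pointwise by a product of probabilities on overlapping memories, while the categorical definition is given by the string diagram in \Cref{fig:diagram-c-plus-d}. I would show the two agree by expanding the diagram: the splitting $\memSpaceS{X\cup U}\cong\memSpaceS{X\setminus U}\times\memSpaceS{X\cap U}\times\memSpaceS{U\setminus X}$ corresponds to restricting a memory $\memoryL$ to $\memProj{\memoryL}{X}$ and $\memProj{\memoryL}{U}$; the copy $\copier$ on the shared wire $\memSpaceS{X\cap U}$ enforces that both copies of $f$ and $g$ receive the same values on $X\cap U$; and the nontrivial parts $f',g'$ supply the probabilities for $Y\setminus X$ and $V\setminus U$. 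Multiplying along the diagram recovers $f(\memProj{\memoryL}{X})(\memProj{\memoryM}{Y})\cdot g(\memProj{\memoryL}{U})(\memProj{\memoryM}{V})$, and the side condition $X\cap U = Y\cap V$ on both sides is precisely what makes the rewirings well-typed.

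Finally, for the subkernel relation, I would unfold \Cref{def:kernel-leq}: on the probabilistic side, $f\stLeq g$ iff $g = (f\stPlus \monadUnit{\dist}{\memSpaceS{S}})\stThen h$ for some $S$ and $h$, which under $\iota$ translates exactly to the diagrammatic shape of \Cref{def:kernel-leq} using the identity on the extra variables $z_1,\dots,z_n$. Since $\iota$ already preserves $\stPlus$, $\stThen$, and identities, both implications follow and $\iota$ is a $\stLeq$-isomorphism. Putting these pieces together, $\iota$ is an iso in $\catDibiFrame$, proving the statement.
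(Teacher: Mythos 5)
Your proposal is correct and follows essentially the same route as the paper's own proof: restrict the embedding $\iota$ from \Cref{ex:cat-with-var-name} to a bijection between probabilistic kernels and $\catWithVarName{\Kl(\dist)}{\setVal}$-kernels, note that sequential composition is Kleisli composition on both sides, match the pointwise product formula for $\stPlus$ against the string diagram with the copy on the shared wire, and deduce preservation of $\stLeq$ from preservation of the two compositions (with $\monadUnit{\dist}{\memSpaceS{S}}$ playing the role of $\idMor$ on the extra variables). The only cosmetic differences are that you spell out the unit preservation and the converse direction of the bijection (via the comonoid equations) slightly more explicitly than the paper does.
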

\begin{example}\label{ex:relate-prob-and-dist-kernels}
    The probabilistic kernel $g_1 \colon \memSpaceS{\{z\}} \to \dist \memSpaceS{ \{ x, z\} }$ from \Cref{ex:prob-kernel} corresponds to the following $\catWithVarName{\Kl(\dist)}{\{0, 1\}}$-kernel $h_1 \colon [z] \to [x, z]$ -- i.e., a $\Kl(\dist)$-morphism $\{0, 1\} \klto \{0, 1\}^2$ -- where: $\oran{0} \mapsto \frac{1}{2} \ket{0, \oran{0}} + \frac{1}{2} \ket{1, \oran{0}}$, $\purp{1} \mapsto \frac{1}{4} \ket{0, \purp{1}} + \frac{3}{4} \ket{1, \purp{1}}$.
    Diagrammatically, $h_1$ is of the form $\tikzfig{eg-prob-kern-1}$,
    where $h'_1 \colon [z] \to [x]$ is the map such that $\oran{0} \mapsto \frac{1}{2} \ket{0} + \frac{1}{2} \ket{1}$ and $\purp{1} \mapsto \frac{1}{4} \ket{0} + \frac{3}{4} \ket{1}$
\end{example}

Similarly, the relational DIBI model from \cite{bao2021bunched} with the value space $\setVal$ can be shown to be isomorphic to $\frameSemKernel{\catWithVarName{\Kl(\ipowset)}{\setVal}}$, where $\ipowset$ is the nonempty powerset monad.

\subsection{Stochastic DIBI Models}
\label{subsec:stoch-DIBI-frame}
%

Using our categorical construction, we can derive a notion of DIBI model for continuous probabilistic (stochastic) processes, not previously considered. This is of interest because, as we will later show in~\cref{sec:CI}, it allows to capture conditional independence for continuous probability using DIBI formulas.
We take as underlying category $\catStoch$ of stochastic processes, defined as the Kleisli category $\Kl(\monadGiry)$ for the Giry monad on measurable spaces -- see \Cref{appendix:sec-app} for a full definition. Since $\monadGiry$ is an affine symmetric monoidal monad, $\catStoch$ is a Markov category~\cite{fritz2020markov}.
Applying~\Cref{thm:inst-DIBI} to $\catC = \catStoch$,  we get DIBI frames based on stochastic processes.
\begin{proposition}
\label{prop:stoch-DIBImodel}
Given an arbitrary map $\obChoice: \Var \to \ob{\catMeas}$,
$\frameSemKernel{\catWithVarName{\catStoch)}{\obChoice}} = \lrangle{ \setSemKernel{\catWithVarName{\catStoch}{\obChoice}}, \stLeq, \stPlus, \stThen, \setSemKernel{\catWithVarName{\catStoch}{\obChoice}} }$
is a DIBI frame.
\end{proposition}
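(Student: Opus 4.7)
The plan is to derive this as an immediate instance of Theorem~\ref{thm:inst-DIBI}, which says that for any Markov category $\catC$ and any object assignment $\obChoice$, the tuple $\frameSemKernel{\catWithVarName{\catC}{\obChoice}}$ is a DIBI frame. So the only content of the proof is to verify that $\catStoch = \Kl(\monadGiry)$ is a Markov category; once this is in place, specialising Theorem~\ref{thm:inst-DIBI} with $\catC = \catStoch$ and the given $\obChoice \colon \setVar \to \ob{\catMeas}$ yields the claim.

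For the Markov category structure of $\catStoch$, I would cite the standard account in~\cite{fritz2020markov}. The relevant facts are: (i) $\monadGiry$ is a commutative monad on $\catMeas$, so $\catStoch$ inherits a symmetric monoidal structure from the cartesian product of measurable spaces, with unit the one-point space $\tenUnit$; (ii) for each measurable space $\obX$, the copy morphism $\copyMor_{\obX} \colon \obX \to \obX \tensor \obX$ in $\catStoch$ is the kernel sending $x$ to the Dirac measure $\dirac{(x,x)}$, and the delete morphism $\discardMor_{\obX} \colon \obX \to \tenUnit$ is the unique kernel into the terminal object; (iii) these satisfy the commutative comonoid laws and are compatible with $\tensor$ in the sense required by a CD category; (iv) since $\monadGiry$ is \emph{affine} (its unit $\tenUnit$ is terminal in $\catStoch$, as every probability measure on a point is trivial), naturality of $\discardMor$ follows, i.e.\ $\tikzfig{tenUnit-terminal}$ holds for every $\catStoch$-morphism.

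With $\catStoch$ confirmed to be Markov, the construction in Section~\ref{sec:Markov-DIBI-model} applies verbatim: $\catWithVarName{\catStoch}{\obChoice}$ is itself a Markov category by the remark following Definition~\ref{def:category-with-assignment}, the input-preserving kernels $\setSemKernel{\catWithVarName{\catStoch}{\obChoice}}$ are defined as in Definition~\ref{def:cat-IP-kernel}, and the operations $\stPlus, \stThen$ and the subkernel relation $\stLeq$ are those of Definitions~\ref{def:kernel-composition} and~\ref{def:kernel-leq}. Theorem~\ref{thm:inst-DIBI} then gives all the frame conditions of Figure~\ref{fig:dibi-frame} uniformly.

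There is no real obstacle here: the proposition is essentially a corollary of Theorem~\ref{thm:inst-DIBI}, and illustrates the main benefit of the categorical framework, namely that one need not reverify the lengthy list of DIBI frame conditions for each new semantic model, but only the Markov category axioms of the underlying category.
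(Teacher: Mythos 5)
Your proposal is correct and follows exactly the paper's own route: the paper likewise observes that $\catStoch = \Kl(\monadGiry)$ is a Markov category because $\monadGiry$ is an affine symmetric monoidal monad (citing~\cite{fritz2020markov}), and then obtains the proposition as an immediate instance of Theorem~\ref{thm:inst-DIBI} with $\catC = \catStoch$. The additional detail you supply about the copy/delete structure and affineness is consistent with the background recorded in the paper's appendix and does not change the argument.
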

We call $\frameSemKernel{\catWithVarName{\catStoch}{\obChoice}}$ the \emph{stochastic DIBI frame} based on $\theta$ and elements in $\setSemKernel{\catWithVarName{\catStoch}{\obChoice}}$ stochastic kernels.

\begin{example}\label{ex:stoch-kernels}
    We show a representation of the \emph{Box-Muller transformation} using stochastic kernels.
    Consider $\obChoice$ that maps all variable names to the
    Borel $\sigma$-algebra over reals $(\real, \borel(\real))$.
    Define stochastic kernels $g_1 \colon \empset \to \{u\}$ and $g_2 \colon \empset \to \{w\}$ -- both standing for $\catStoch$-morphisms $(\setOne, \{ \empset, \setOne \}) \to (\real, \borel(\real))$, or equivalently, a probabilistic measure on $(\real, \borel(\real))$ -- by $g_i(\bullet) = \uniform(0, 1)$ for $i= 1, 2$, where $\uniform(0, 1)$ is the uniform measure over the interval $(0, 1)$. Such a uniform measure over infinite outcomes is
    not possible in the discrete probabilistic DIBI model.
    Define another stochastic kernel $f: \{u, w\} \to \{u, w, x, y\}$ where the value of $x, y$
    are determined by the value of $u, w$:
    \[
        f(u \mapsto v_u, w \mapsto v_w)
        = \dirac{v_u, v_w, \left( \sqrt{-2 \ln u} \cdot \cos (2 \pi w)\right)_x,  \left( \sqrt{-2 \ln u} \cdot \sin (2 \pi w)\right)_y}.
    \]
    Then $h = (g_1 \stPlus g_2) \stThen f$ gives a stochastic kernel $\empset \to \{u, w, x, y\}$.
    Box-Muller transformation says that $x$ and
    $y$ are independent in $h(\trivList)$, despite their seemingly dependence on $u$ and $w$. We will explain more on this in \Cref{ex:box-muller-indep}.
\end{example}

\noindent\textbf{Comparison with \textsf{Lilac}~\cite{li2023lilac}.}
Our stochastic DIBI models can be used to reason about independence and conditional probabilities in continuous distributions. A recent work \textsf{Lilac} by Li et al.~\cite{li2023lilac} proposed a BI model for the same goal, yet with some crucial differences in the set-up.

First, the states in Lilac's BI model are probabilistic space fragments of a fixed sample space, and their variables are mathematical random variables that deterministically map elements in the sample space to values.
In comparison, we treat variables as names that can be associated to values or distributions.
Our stochastic kernels -- though not using an ambient sample space -- can encode their set-up: we can devise a special variable $\Omega$ for `the sample space,' and deterministic kernels from $\Omega$ to other variables encode random variables.

Second, to reason about conditional probabilities, \textsf{Lilac} want probability spaces to be disintegrable with respect to well-behaved random variables.
To achieve that, they require probability spaces in their model to be extensible to Borel spaces, since disintegration works nicer in Borel spaces.
By working with kernels, which already represents conditional probability spaces, we do not need to impose disintegratability on our DIBI states to reason about conditional probabilities.
For instance, while disintegration is not always possible in the category $\catStoch$, we
can still construct a DIBI model based on $\catStoch$.

\medskip
\noindent\textbf{Other measure-theoretic probabilistic DIBI models.}
The category $\catStoch$ is not the only Markov category for
measure-theoretic probability. Another choice is $\catBorelStoch$, a subcategory of $\catStoch$ obtained by
restricting to standard Borel spaces as objects. It has some nice properties that $\catStoch$ does not satisfy, such as having conditionals as mentioned above.
$\catBorelStoch$ is also a Markov category and we can easily instantiate a DIBI model.

\begin{proposition}
\label{prop:borelstoch-DIBImodel}
Given any map $\obChoice \colon \setVar \to \ob{\catBorelStoch}$,
$\frameSemKernel{\catWithVarName{\catBorelStoch}{\obChoice}}$ defined as
$\lrangle{ \setSemKernel{\catWithVarName{\catBorelStoch}{\obChoice}}, \stLeq, \stPlus, \stThen, \setSemKernel{\catWithVarName{\catBorelStoch}{\obChoice}} }$
is a DIBI frame.
\end{proposition}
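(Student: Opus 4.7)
The plan is to reduce this statement directly to the general construction in \Cref{thm:inst-DIBI}. That theorem produces a DIBI frame $\frameSemKernel{\catWithVarName{\catC}{\obChoice}}$ from any Markov category $\catC$ together with an assignment $\obChoice \colon \setVar \to \ob{\catC}$. Hence, to prove \Cref{prop:borelstoch-DIBImodel}, it suffices to show that $\catBorelStoch$ is a Markov category; the rest is then immediate by instantiating $\catC = \catBorelStoch$ in \Cref{thm:inst-DIBI}.

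The first step is therefore to verify the Markov category structure on $\catBorelStoch$. I would recall that $\catBorelStoch$ is the full subcategory of $\catStoch$ whose objects are standard Borel spaces, so morphisms $\obX \to \obY$ are still Markov kernels $\obX \to \monadGiry(\obY)$. Since standard Borel spaces are closed under the product $\obX \tensor \obY = \obX \times \obY$ with the product $\sigma$-algebra, and the one-point space is standard Borel, $\catBorelStoch$ inherits the symmetric monoidal structure of $\catStoch$. The copy morphism $\copyMor_{\obX} \colon \obX \to \obX \tensor \obX$ is the deterministic kernel induced by the diagonal $x \mapsto \dirac{(x,x)}$, and the discard $\discardMor_{\obX} \colon \obX \to \tenUnit$ is the deterministic kernel to the one-point space; both are clearly measurable and restrict to standard Borel objects. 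The comonoid axioms and naturality of $\copyMor$ with the monoidal structure hold because they already hold in $\catStoch$ (as stated in the paper before \Cref{prop:stoch-DIBImodel}), and the subcategory inclusion preserves them strictly.

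The second step is to observe naturality of $\discardMor$: for any kernel $f \colon \obX \to \obY$ in $\catBorelStoch$, the composite $\discardMor_{\obY} \after f$ sends each $x$ to the constant measure on the point, which coincides with $\discardMor_{\obX}$; this is exactly the affineness of $\monadGiry$ and is inherited from $\catStoch$. Together with the comonoid axioms, this makes $\catBorelStoch$ a CD category in which delete is natural, i.e., a Markov category.

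The final step is to apply \Cref{thm:inst-DIBI} to $\catC = \catBorelStoch$ and the given assignment $\obChoice \colon \setVar \to \ob{\catBorelStoch}$. This produces the DIBI frame $\lrangle{\setSemKernel{\catWithVarName{\catBorelStoch}{\obChoice}}, \stLeq, \stPlus, \stThen, \setSemKernel{\catWithVarName{\catBorelStoch}{\obChoice}}}$ as claimed. There is no substantive obstacle: the real work was carried out once in \Cref{thm:inst-DIBI}, and checking that $\catBorelStoch$ is Markov is routine bookkeeping on the Giry monad restricted to standard Borel spaces.
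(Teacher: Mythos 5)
Your proposal is correct and matches the paper's own treatment: the paper likewise disposes of \Cref{prop:borelstoch-DIBImodel} by noting that $\catBorelStoch$ is a Markov category (inheriting the copy/delete structure and naturality of $\discardMor$ from $\catStoch$, since standard Borel spaces are closed under finite products and include the one-point space) and then instantiating \Cref{thm:inst-DIBI} with $\catC = \catBorelStoch$. Your extra detail verifying the Markov structure is sound but routine, exactly as the paper implies.
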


The study of measure theory is also intertwined with topology,
and another category for measure-theoretic probability is
the Kleisli category of the \emph{Radon monad} $\monadRadon$
based on the category of compact Hausdorff spaces $\catHaus$ and continuous maps (cf. \Cref{appendix:sec-app}), which we will denote as $\Kl_{\catHaus}(\monadRadon)$. $\Kl_{\catHaus}(\monadRadon)$ is also a Markov category~\cite{fritz2020markov}, so \Cref{thm:inst-DIBI} applies.
\begin{proposition}
\label{prop:Radon-DIBImodel}
Given any map $\obChoice: \Var \to \ob{\Kl_{\catHaus}(\monadRadon)}$,
$\frameSemKernel{\catWithVarName{\Kl_{\catHaus}(\monadRadon)}{\obChoice}}$ defined as $ \lrangle{ \setSemKernel{\catWithVarName{\Kl_{\catHaus}(\monadRadon)}{\obChoice}}, \stLeq, \stPlus, \stThen, \setSemKernel{\catWithVarName{\Kl_{\catHaus}(\monadRadon)}{\obChoice}} }$
is a DIBI frame.
\end{proposition}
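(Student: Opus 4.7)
The proof is a direct application of Theorem~\ref{thm:inst-DIBI} to the specific Markov category $\Kl_{\catHaus}(\monadRadon)$. The plan is therefore to verify the single hypothesis of the theorem, namely that $\Kl_{\catHaus}(\monadRadon)$ is a Markov category, and then instantiate the generic construction.

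First, I would recall (from Fritz~\cite{fritz2020markov}, as cited in the statement) that the Radon monad $\monadRadon$ on $\catHaus$ is an affine commutative monad, so its Kleisli category carries a canonical symmetric monoidal structure together with natural copy and discard morphisms forming a commutative comonoid on every object, with the discard map being natural on all Kleisli morphisms. This is exactly the definition of a Markov category recalled in Section~\ref{sec:preliminaries}. I would state this as the starting observation, pointing to the reference rather than redoing the verification of the comonoid laws and affineness.

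Second, having established that $\Kl_{\catHaus}(\monadRadon)$ is Markov, the construction of Section~\ref{sec:Markov-DIBI-model} applies verbatim: for any $\obChoice \colon \Var \to \ob{\Kl_{\catHaus}(\monadRadon)}$, the category $\catWithVarName{\Kl_{\catHaus}(\monadRadon)}{\obChoice}$ inherits the Markov structure by Definition~\ref{def:category-with-assignment} and the remark immediately following it. Hence Theorem~\ref{thm:inst-DIBI} yields that the tuple $\lrangle{ \setSemKernel{\catWithVarName{\Kl_{\catHaus}(\monadRadon)}{\obChoice}}, \stLeq, \stPlus, \stThen, \setSemKernel{\catWithVarName{\Kl_{\catHaus}(\monadRadon)}{\obChoice}} }$ satisfies all the DIBI frame conditions of Figure~\ref{fig:dibi-frame}, which is exactly the statement of Proposition~\ref{prop:Radon-DIBImodel}.

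There is essentially no obstacle here beyond citation, since the heavy lifting is done once and for all by Theorem~\ref{thm:inst-DIBI}; the proofs of Propositions~\ref{prop:stoch-DIBImodel} and~\ref{prop:borelstoch-DIBImodel} follow the same one-line pattern. The only thing worth emphasising in the write-up is that we use $\catHaus$ rather than $\catMeas$ as the base category underlying the Radon monad, so the assignment $\obChoice$ takes values in compact Hausdorff spaces; otherwise the argument is entirely parallel to the stochastic and Borel-stochastic cases above.
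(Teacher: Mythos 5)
Your proposal is correct and follows exactly the paper's own route: the paper likewise justifies this proposition solely by citing~\cite{fritz2020markov} for the fact that $\Kl_{\catHaus}(\monadRadon)$ is a Markov category and then invoking \Cref{thm:inst-DIBI}, with no further verification needed. Your additional remarks (affineness/commutativity of $\monadRadon$ as the reason the Kleisli category is Markov, and the inheritance of the Markov structure by $\catWithVarName{\Kl_{\catHaus}(\monadRadon)}{\obChoice}$ via \Cref{def:category-with-assignment}) are consistent with what the paper does for the analogous propositions in \Cref{subsec:stoch-DIBI-frame}.
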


A measure-theoretic Markov category not formed as Kleisli categories is the Gaussian probability category
$\catGauss$~\cite{fritz2020markov}.  Its objects are natural numbers, and a morphism $n \to m$ is a tuple $(M, \cov, \mu)$ representing the function $f: \real^{n} \to \real^{m}$ with $f(v) = M \cdot v + \xi$, where $\xi$ is the Gaussian noise with mean $\mu$ and covariance matrix $\cov$. Its monoidal product is addition $+$ on the objects and vector concatenation on morphisms.
$\catGauss$ differs from $\catStoch, \catBorelStoch$ and $\Kl_{\catHaus}(\monadRadon)$ in that it does not arise as the Kleisli category associated to some monad. But since it is a Markov category, we can again instantiate DIBI models based on $\catGauss$.
\begin{proposition}
\label{prop:Gauss-DIBImodel}
Given any map $\obChoice: \Var \to \ob{\catGauss}$,
$\frameSemKernel{\catWithVarName{ \catGauss}{\obChoice}}$ defined as $\lrangle{ \setSemKernel{\catWithVarName{ \catGauss}{\obChoice}}, \stLeq, \stPlus, \stThen, \setSemKernel{\catWithVarName{\catGauss}{\obChoice}} }$
is a DIBI frame.
\end{proposition}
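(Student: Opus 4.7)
The plan is to derive this immediately from Theorem~\ref{thm:inst-DIBI} instantiated at $\catC = \catGauss$. That theorem produces a DIBI frame $\frameSemKernel{\catWithVarName{\catC}{\obChoice}}$ from any Markov category $\catC$ together with an assignment $\obChoice \colon \setVar \to \ob{\catC}$, so the only nontrivial step would be to confirm that $\catGauss$ is a Markov category. Once this is established, the carrier $\setSemKernel{\catWithVarName{\catGauss}{\obChoice}}$, the compositions $\stPlus$ and $\stThen$, the subkernel preorder $\stLeq$, and the unit set arise by directly instantiating Definitions~\ref{def:cat-IP-kernel},~\ref{def:kernel-composition}, and~\ref{def:kernel-leq}, and the frame axioms follow uniformly from the proof of Theorem~\ref{thm:inst-DIBI}.

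To verify that $\catGauss$ is a Markov category, I would rely on the analysis in~\cite{fritz2020markov}, where it is shown that $\catGauss$ carries a symmetric monoidal structure with addition of natural numbers as $\tensor$ and $0$ as unit, that the copy morphism $\copyMor_n \colon n \to 2n$ is the deterministic Gaussian $v \mapsto (v,v)$ (the $2n\times n$ matrix stacking two copies of $I_n$, with zero mean and zero covariance), that the delete morphism $\discardMor_n \colon n \to 0$ is the unique morphism to the terminal object $0$, and that the commutative comonoid axioms together with naturality of $\discardMor$ all hold. The required computations reduce to straightforward block-matrix identities, and naturality of $\discardMor$ is immediate because $0$ is terminal.

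There is no significant obstacle here: the result follows by exactly the same template as Propositions~\ref{prop:stoch-DIBImodel}--\ref{prop:Radon-DIBImodel}. Its interest is conceptual rather than technical, since $\catGauss$ does not arise as a Kleisli category of a monad on some base category of measurable or topological spaces. Consequently, this proposition highlights that the categorical framework of Section~\ref{sec:Markov-DIBI-model} genuinely subsumes non-Kleisli Markov categories, and that, once any concrete process category has been identified as Markov in the synthetic-probability literature, a corresponding DIBI frame is obtained for free via Theorem~\ref{thm:inst-DIBI}.
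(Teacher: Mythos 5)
Your proposal is correct and follows exactly the paper's route: the paper also treats this proposition as an immediate instance of Theorem~\ref{thm:inst-DIBI}, with the only input being that $\catGauss$ is a Markov category, which it likewise takes from~\cite{fritz2020markov}. Your added details on the copy/delete structure of $\catGauss$ and the observation that it is not a Kleisli category match the paper's own remarks, so there is nothing to correct.
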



\subsection{Syntactic DIBI Models}
\label{subsec:syn-DIBI-frame}
The DIBI models defined so far all have kernels defined by some processes over memory spaces. It is worth considering a different flavour: purely formal, syntactically generated DIBI models.
We start by defining the underlying category.
\begin{definition}
 $\catSynVar$ is the Markov category freely generated as follows:
    \begin{itemize}
        \item the generating objects are variables in $\setVar$;
        \item there is exactly one generating morphism of type $[u_1, \dots, u_m] \to [v_1, \dots, v_n]$ for distinct variables $u_1 \varLessThan \cdots \varLessThan u_m$ and $v_1 \varLessThan \cdots \varLessThan v_n$, written as string diagrams of the form $\tikzfig{gen-mor}$.
    \end{itemize}
\end{definition}
In words, $\catSynVar$-objects are finite lists of variables (without the requirements of duplicate-free or $\varLeq$-ordered); morphisms are diagrams freely concatenated using $\tikzfig{id-diagram}$, $\copier$, $\discarder$, $\swap$, and $\tikzfig{gen-mor}$, taken equivalence class modulo the Markov category equations.
The syntactic DIBI frame is based on the category $\catWithVarName{\catSynVar}{\id}$, where $\id \colon \setVar \to \ob{\catSynVar}$ is the identity function.

\begin{proposition}

    $\frameSynKernel = \lrangle{ \setSemKernel{\catWithVarName{\catSynVar}{\id}}, \stLeq, \stPlus, \stThen, \setSemKernel{\catWithVarName{\catSynVar}{\id}} }$ is a DIBI frame.
\end{proposition}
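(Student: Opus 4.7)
The plan is to derive this proposition as an immediate application of Theorem~\ref{thm:inst-DIBI} (\texttt{ThmInstDibi}), with the underlying Markov category taken to be $\catSynVar$ and the variable assignment taken to be $\id \colon \setVar \to \ob{\catSynVar}$. All of the work is already contained in that theorem; what remains is to check that its hypotheses are met in the syntactic setting.

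First, I would verify that $\catSynVar$ is indeed a Markov category. This is immediate from its defining universal property: it is presented as \emph{the} free Markov category on the generating objects in $\setVar$ and the generating boxes $\tikzfig{gen-mor}$, so it comes equipped with the symmetric monoidal structure $\lrangle{\tensor, \tenUnit}$, a coherent family of comonoid morphisms $\copyMor$ and $\discardMor$ satisfying the commutative comonoid axioms and the compatibility diagrams with the monoidal product, and naturality of $\discardMor$; all of these are part of the equational theory quotiented out in the free construction. Next, since each $x \in \setVar$ is itself a generating object of $\catSynVar$, the function $\id \colon \setVar \to \ob{\catSynVar}$ is a well-defined assignment in the sense of Definition~\ref{def:category-with-assignment}.

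Having met these hypotheses, I would invoke Theorem~\ref{thm:inst-DIBI} to conclude that $\frameSemKernel{\catWithVarName{\catSynVar}{\id}}$ satisfies every one of the frame conditions listed in Figure~\ref{fig:dibi-frame} (commutativity, associativity, unit existence, up- and down-closure, and \textsc{RevExchange}). Those verifications were carried out once, purely string-diagrammatically, in the proof of Theorem~\ref{thm:inst-DIBI}, and therefore transfer verbatim to $\catSynVar$ -- no $\catSynVar$-specific reasoning is required. In particular, the input-preserving kernel shape \eqref{eq:C-kernel} is automatically realisable, since the generating boxes of $\catSynVar$ already have exactly the form $[u_1, \dots, u_m] \to [v_1, \dots, v_n]$ required for the nontrivial part $f'$ of a kernel.

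In short, I do not anticipate a main technical obstacle: the proposition is genuinely a corollary. The only care needed is to confirm that the free-Markov-category presentation of $\catSynVar$ really imposes naturality of $\discardMor$ (as opposed to merely the CD-category axioms) -- this is the single non-automatic bookkeeping step, and it can be handled by pointing to the explicit list of equations in the definition of $\catSynVar$ as a Markov category rather than as a CD category.
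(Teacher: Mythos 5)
Your proposal is correct and matches the paper's own (largely implicit) argument: the paper likewise treats this proposition as a direct instance of Theorem~\ref{thm:inst-DIBI}, noting only that $\catSynVar$ is a Markov category by virtue of being freely generated modulo the Markov category equations (which include naturality of $\discardMor$, the point you rightly flag) and that $\id$ is a legitimate assignment in the sense of Definition~\ref{def:category-with-assignment}. Since the verification of the frame conditions in Theorem~\ref{thm:inst-DIBI} is purely string-diagrammatic, it transfers to the free category without modification, exactly as you argue.
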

Equipped with the natural valuation $\natVal$, one obtains a DIBI model $\lrangle{\frameSynKernel, \natVal}$.
We postpone an example of $\catWithVarName{\catSynVar}{\id}$-kernels till \Cref{sec:CI}, \Cref{ex:different-CI}, in which $\catWithVarName{\catSynVar}{\id}$-kernels are used to distinguish two notions of conditional independence in Markov categories.

An interesting question for future work is how to extend the syntactic DIBI model to a term model. Typically being initial objects in categories of models, term models play an important role in proving completeness and defining categorical semantics for formal systems, including algebraic theories~\cite{lawvere1963functorial}, logics~\cite{tarski1983logic}
(e.g., Lindenbaum–Tarski algebras)
and type theories~\cite{jacobs1994semantics,hofmann1995interpretation}.
A term model for DIBI could lead to a sound and complete axiomatisation of the specific version of DIBI logic in this paper, whose atomic propositions take the form of $\kp{S}{T}$.

%
%
\section{Conditional independence}
\label{sec:CI}
DIBI logic is designed for reasoning about CI. The prior work~\cite{bao2021bunched} shows that, conditional independence in the discrete probabilistic models and join dependency in the relational  models can be characterised by the same class of DIBI formulas.
Generalising this result, in this section we define a notion of CI on $\catWithVarName{\catC}{\obChoice}$-kernels based on formula satisfaction.
Since $\catWithVarName{\catC}{\obChoice}$ is a Markov category, we can compare our logical notion of CI with existing categorical definitions of CI in Markov categories~\cite{cho2019disintegration,fritz2020markov}.

Fix a Markov category $\catC$ and a map $\obChoice \colon \setVar \to \ob{\catC}$. We define CI in the DIBI model $\lrangle{\frameSemKernel{\catWithVarName{\catC}{\obChoice}}, \natVal}$. 

\begin{definition}[Conditional Independence]
\label{def:DIBI-CI}
For any mutually disjoint finite sets of variables $W, X, Y, U$, $X$ and $Y$ are \emph{DIBI conditionally independent given $W$} in a $\catWithVarName{\catC}{\obChoice}$-kernel\footnote{Note that $\catWithVarName{\catC}{\obChoice}$-kernels with domain $\empset$ are not to be thought of as maps with empty domains. For instance, $\catWithVarName{\Kl(\dist)}{\obChoice}$-kernels of the form $\empset \to \{x, y\}$ corresponds to $\Kl(\dist)$-morphisms $\setOne \klto \obChoice(x) \times \obChoice(y)$, which denote distributions over $x, y$.}$f \colon \empset \to W \cup X \cup Y \cup U$ if
\begin{equation}\label{eq:CI}
f \satisfy{\natVal} (\kp{\empset}{W}) \biThen ( (\kp{W}{X}) \mand (\kp{W}{Y}) ).
\end{equation}
In this case, we write $\dibiCI{X}{Y}{W}$.
\end{definition}


Let us unfold what~\eqref{eq:CI} means. Under the natural valuation $\natVal$, an atomic proposition of shape $\kp{S}{T}$ encodes the dependence of $T$ on $S$: formally, a $\catWithVarName{\catC}{\obChoice}$-kernel $f \colon X \to Y$ satisfies $\kp{S}{T}$ iff $f$ contains some subkernel $f' \colon S \to Y'$ such that $T \subseteq Y'$.
So the formula \eqref{eq:CI} requires that $f$ is a kernel with empty domain that can be decomposed as $f \sqsupseteq f_0 \stThen (f_1 \stPlus f_2)$,
where $f_0$ determines the value on $W$,
$f_1$ determines the value on $X$ given the value on $W$, $f_2$ determines the value on $Y$ given the value on $W$, and $f_1$ and $f_2$ do so independently of each other.


We illustrate the formula with examples in the discrete probabilistic DIBI model and the stochastic DIBI model.
\begin{example}
\label{ex:DIBI-CI}
    In the setting of \Cref{ex:prob-kernel}, consider the probabilistic kernel $h \colon \memSpaceS{\empset} \to \dist \memSpaceS{\{x, y, z\}}$ representing the following distribution:
    \begin{linenomath*}
    {\small\begin{align*}
        \mu = & \frac{1}{8} \ket{0_x, 0_y, \oran{0_z}}
        + \frac{1}{8} \ket{0_x, 1_y, \oran{0_z}}
        + \frac{1}{8} \ket{1_y, 0_y, \oran{0_z}}
        + \frac{1}{8} \ket{1_y, 1_y, \oran{0_z}} \\
        & + \frac{1}{32} \ket{0_x, 0_y, \purp{1_z}}
        + \frac{3}{32} \ket{0_x, 1_y, \purp{1_z}}
        + \frac{3}{32} \ket{1_y, 0_y, \purp{1_z}}
        + \frac{9}{32} \ket{1_y, 1_y, \purp{1_z}}
    \end{align*}}%
    \end{linenomath*}
    Then $h \satisfy{\natVal} (\kp{\empset}{\{z\}}) \biThen ( (\kp{\{z\}}{\{z, x\}}) \mand (\kp{\{z\}}{\{z, y\}}) )$, because $h = h_0 \stThen f = h_0 \stThen (f_1 \stPlus f_2 )$, where $h_0$ denotes the uniform distribution $\frac{1}{2} \ket{0_z} + \frac{1}{2} \ket{1_z}$.
\end{example}

\begin{example}
\label{ex:box-muller-indep}
    Define $g_1, g_2, f, h$ as in \Cref{ex:stoch-kernels}. We want to assert that variables $x$ and $y$ are independent in the distribution
    constructed by Box-Muller Transform.
    Independence is a special case of conditional independence in which the set of conditioned variables is empty.
    Thus, the goal is to assert $(\kp{\empset}{\empset}) \biThen ( (\kp{\empset}{\{x\}}) \mand (\kp{\empset}{\{y\}}) )$ -- equivalently, $(\kp{\empset}{\{x\}}) \mand (\kp{\empset}{\{y\}})$.

    Define $h_1: \empset \to \{x\}$ and $h_2: \empset \to \{y\}$
    both as the standard normal distribution $\normal(0,1)$.
    Clearly $h_1 \satisfy{\natVal} \kp{\empset}{\{x\}}$ and $h_2 \satisfy{\natVal} \kp{\empset}{\{y\}}$.
    Moreover, some non-trivial calculations would show that
    $ (h_1 \stPlus h_2)\stLeq h$, and consequently we have $h \satisfy{\natVal} (\kp{\empset}{\{x\}}) \mand (\kp{\empset}{\{y\}})$ by definition.
\end{example}

Since the categorical DIBI models are based on Markov categories, we compare our logical notion of CI on kernels with the canonical notion of CI in Markov categories, which defines CI as decomposability of morphisms. In \Cref{def:display-CI-plain,def:proc-markov-CI,def:extended-superset-CI}, a Markov category $\catX$ is fixed.


\vspace{0.2cm}
\noindent%
\begin{minipage}{0.58\textwidth}
\begin{definition}
\label{def:display-CI-plain}
An $\catX$-morphism $s \colon
\tenUnit \to \obW \tensor \obX \tensor \obY$ \emph{displays the
conditional independence} of $\obX$ and $\obY$ given $\obW$ if there exist $\catX$-morphisms $s_{\obW} \colon \tenUnit \to \obW$, $g_{\obX} \colon \obW \to \obX$, $g_{\obY} \colon \obW \to
\obY$ such that equation on the right holds. We write this as $\displayCI{\obX}{\obY}{\obW}$.
\end{definition}
\end{minipage}%
\begin{minipage}{0.38\textwidth}
  \[
      \ \ \ \scalebox{0.95}{\tikzfig{display-CI-plain-1}} = \scalebox{0.95}{\tikzfig{display-CI-plain-2}}
    \]
    \vspace{\baselineskip}
\end{minipage}
\vspace{0.2cm}
%


In the context of DIBI models, \Cref{def:display-CI-plain} restricts to stating the conditional independence of $X$ and $Y$ given $W$ in $\catWithVarName{\catC}{\obChoice}$-kernels of the form $\empset \to W \cup X \cup Y$. In particular, no extra variable in the kernel's codomain is allowed.

\begin{example}
We show an example of this notion of CI in the Markov category $\catGauss$.
Consider morphism $s: \empset \to \{w, x, y\}$ specified by the tuple \begin{align*}
    \left(\initialMap,
    \cov = \begin{bsmallmatrix}
           1 &1 &1\\
           1 &2 &1\\
           1 &1 &2
    \end{bsmallmatrix} ,
    \mu = \begin{bsmallmatrix}
           0 \\
           0 \\
           0
    \end{bsmallmatrix}
    \right),
    \text{where $\initialMap$ denotes the trivial map from empty domain. }
\end{align*}
That is, $s$ takes a length 0 vector and generates a length 3 vector, holding the values of $w, x$ and $y$,
with the normal distribution $\normal(\mu, \cov)$.
This $s$ can be decomposed as in~\Cref{def:display-CI-plain}
with $s_w = (\initialMap, 0, 1)$, $g_x = (1, 0, 1)$, and $g_y = (1, 0, 1)$:
composing $s_2$, $g_x$ and $g_y$ as in~\Cref{def:display-CI-plain},
we get $\EE(w) = \EE(\xi_w) = 0$,
$\EE(x) = \EE(w + \xi_x) = 0 + 0 = 0$ , and similarly,
$\EE(y) = \EE(w + \xi_y) = 0$, justifying the noise's mean $\mu$ being a zero vector.
For the covariance matrix, let $v = (w, x, y) - (\EE(w), \EE(x), \EE(y)) $.
Then
$\cov = \EE(v \cdot v^T) = \EE((w, x, y) \cdot (w, x, y)^T) $, and one may show that $\cov$ is equal to the matrix above.
\end{example}
\begin{restatable}{proposition}{CorCIEq}
\label{cor:proc-markov-DIBI-CI-eq}
    For any $\catWithVarName{\catC}{\obChoice}$-kernel $s \colon
    \empset \to W \cup X \cup Y$ where $W, X, Y$ are mutually disjoint, $\displayCI{X}{Y}{W}$ iff $\dibiCI{X}{Y}{W}$.
\end{restatable}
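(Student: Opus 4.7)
The plan is to prove the biconditional by unfolding both the categorical display CI equation and the DIBI satisfaction relation, translating between them via the string diagrammatic structure of the Markov category $\catC$.

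For the forward direction ($\displayCI{X}{Y}{W} \Rightarrow \dibiCI{X}{Y}{W}$), I would directly construct the DIBI witnesses from the display CI decomposition. Given $s_W, g_X, g_Y$ satisfying the display CI equation, set $b_1 \defeq s_W$ viewed as a kernel $\empset \to W$ (trivially input-preserving since its domain is empty), and let $c_1 \colon W \to W \cup X$ and $c_2 \colon W \to W \cup Y$ be the kernels whose nontrivial parts are $g_X$ and $g_Y$ respectively. Mutual disjointness of $W, X, Y$ ensures $\{W\} = (W \cup X) \cap (W \cup Y)$, so $b_2 \defeq c_1 \stPlus c_2 \colon W \to W \cup X \cup Y$ is defined; the diagrammatic definition of $\stPlus$ in \Cref{def:kernel-composition} then coincides with the display CI equation, giving $b_1 \stThen b_2 = s$. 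Each of $b_1, c_1, c_2$ satisfies its atomic proposition by taking itself as the witness subkernel, yielding the desired DIBI formula.

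For the backward direction, I would unfold DIBI satisfaction: $s = b_1 \stThen b_2$, with $b_1$ containing a subkernel $b_1' \colon \empset \to Y_1'$ where $W \subseteq Y_1'$, and $c_1 \stPlus c_2 \stLeq b_2$ where each $c_i$ contains a subkernel $c_i' \colon W \to V_i''$ with $X \subseteq V_1''$ and $Y \subseteq V_2''$. I would then obtain the display CI witnesses by marginalization: $s_W$ composes $b_1'$ with discards on $Y_1' \setminus W$, and $g_X, g_Y$ are defined analogously from $c_1', c_2'$ by discarding the outputs outside $X$ and $Y$ respectively. The key verification is that these witnesses satisfy the display CI equation. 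Using the explicit form of the subkernel relation (\Cref{def:kernel-leq}) to rewrite $b_1$ and $b_2$ as extensions of $b_1'$ and $c_1 \stPlus c_2$, together with the parallel composition formula of \Cref{def:kernel-composition}, I would diagrammatically simplify $b_2 \circ b_1$ and apply naturality of $\discardMor$ to collapse the auxiliary components contributed by these extensions; the hypothesis that $\codom{s} = W \cup X \cup Y$ exactly ensures that no auxiliary output survives in $s$.

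I expect the main obstacle to be the backward verification: carefully tracking how the auxiliary $h$-morphisms implicit in the subkernel relations interact with $b_1$ and $b_2$, and arguing that they contribute nothing to the $W \cup X \cup Y$-portion of $s$. This is precisely where naturality of $\discardMor$ -- what distinguishes a Markov category from a general CD category -- becomes essential: it allows the auxiliary outputs to be discarded without loss of information, eliminating their effect on the composite. Without this property, information could leak through the auxiliaries and the equivalence would break, which also reassures us that the statement is correctly scoped to Markov categories.
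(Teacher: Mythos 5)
Your proof is correct, but it is organized differently from the paper's. The paper does not argue this proposition from scratch: it observes that when the extra object $U$ is empty, both Markov CI and superset CI collapse to plain CI, and then simply cites \Cref{thm:proc-DIBI-CI-eq} and \Cref{lem:two-CI-relation}, so all the diagrammatic work lives in the proof of that theorem. Your direct argument reproduces that content in the special case $U = \empset$: your backward direction (DIBI CI implies display CI) is essentially the appendix proof of \Cref{thm:proc-DIBI-CI-eq}.\ref{item:proc-DIBI-CI-eq-1} specialized to $U = \empset$, while your forward direction --- taking $b_1 = s_W$, building $c_1, c_2$ with nontrivial parts $g_X, g_Y$, checking $(W \cup X) \cap (W \cup Y) = W$, and discharging the atomic propositions by reflexivity of $\stLeq$ --- is the easy converse that the paper's theorem proof leaves implicit (it explicitly writes out only the DIBI-to-Markov direction). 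So your write-up is, if anything, more self-contained on that point; what the paper's route buys is economy and generality, since the same diagrammatic argument also covers kernels with extra output variables $U$, and the proposition then falls out as a two-line corollary.

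One remark on your anticipated obstacle. In this setting the auxiliary data you worry about is even more constrained than you expect. Writing $b_2 = ((c_1' \stPlus c_2') \stPlus \idMor_{Z}) \stThen h$ via \Cref{lem:subkernel-using-plus}, input-preservation of $h$ forces $\codom{c_1'} \cup \codom{c_2'} \cup Z \subseteq W \cup X \cup Y$; the definedness condition of $\stPlus$ against $\idMor_{Z}$ then forces $Z \subseteq W$, so $\idMor_{Z}$ is absorbed into $c_1' \stPlus c_2'$; and definedness of $c_1' \stPlus c_2'$ together with mutual disjointness forces $\codom{c_1'} = W \cup X$ and $\codom{c_2'} = W \cup Y$ exactly. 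What survives is a trailing input-preserving kernel $h$ with equal domain and codomain; its nontrivial part is a morphism into $\tenUnit$, which equals $\discardMor$ precisely because $\discardMor$ is natural (terminality of $\tenUnit$ in a Markov category), whence $h = \idMor$ by the counit law. So your instinct that naturality of $\discardMor$ is the essential Markov ingredient is right --- consistent with the paper's remark that the correspondence fails in general CD categories --- but its role here is to kill this trailing endo-kernel, not to discard surviving auxiliary outputs: since $\codom{s} = W \cup X \cup Y$, no auxiliary output can survive in the first place.
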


In order to extend \Cref{cor:proc-markov-DIBI-CI-eq} to the scenario in \Cref{def:DIBI-CI} where a kernel $f$ might contain some $U$ in its codomain that does not appear in the CI statement, we need to modify the notion of CI from \Cref{def:display-CI-plain} -- referred to as plain CI -- to allow objects that do not appear in the CI statement to occur in the codomain of $s$. We suggest two sensible extensions.
%
\begin{definition}
\label{def:proc-markov-CI}
    Given an $\catX$-morphism $s \colon \tenUnit \to \obW \tensor \obX \tensor \obY \tensor \obU$,
    \begin{description}
        \item[\textbf{$s$ displays Markov CI}] if there exist $\catX$-morphisms $s_{\obW}, g_{\obX}, g_{\obY}$ such that \ref{fig:display-CI-markov} holds. We write this as $\markovCI{\obX}{\obY}{\obW}$. \label{item:def-markov-CI}
        \item[\textbf{$s$ displays superset CI}] if there exist $\catX$-morphisms $s_{0}, g_1, g_2$ such that \ref{fig:display-CI-process} holds. We write this as $\procCI{\obX}{\obY}{\obW}$. \label{item:def-process-CI}
    \end{description}
\begin{figure}[bt]
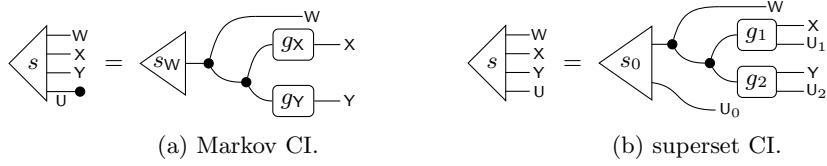

    \begin{subfigure}{0.5\textwidth}
        \tikzfig{display-CI-markov-1} = \tikzfig{display-CI-markov-2}
        \caption{Markov CI.}
        \label{fig:display-CI-markov}
    \end{subfigure}%
    \begin{subfigure}{0.5\textwidth}
        \tikzfig{display-CI-process-1} = \tikzfig{display-CI-process-2}
        \caption{superset CI.}
        \label{fig:display-CI-process}
    \end{subfigure}%
    \caption{Two possible extension of plain CI.}
    \label{fig:display-CI-variation}
\end{figure}%
\end{definition}%
These two notions differ regarding to the treatment of the extra object $\obU$.
In \Cref{fig:display-CI-markov}, we project out the extra object $\obU$ and reduce the situation to that of \Cref{def:display-CI-plain}.
In \Cref{fig:display-CI-process}, $\obU$ is kept and passed along through $s_0, g_1, g_2$.
Clearly, both reduce to \Cref{def:display-CI-plain} when no such $\obU$ appears.
%
We can now state that DIBI CI coincides with Markov CI, but is weaker than superset CI.
\begin{restatable}{theorem}{ThmCIEq}
\label{thm:proc-DIBI-CI-eq}
Given the $\catWithVarName{\catC}{\obChoice}$-kernel $f \colon \empset \to W \cup X \cup Y \cup U$ from \Cref{def:DIBI-CI},
\begin{enumerate}
    \item $f$ satisfies $\markovCI{X}{Y}{W}$ if and only if it satisfies $\dibiCI{X}{Y}{W}$; \label{item:proc-DIBI-CI-eq-1}
    \item if $f$ satisfies $\procCI{X}{Y}{Z}$, then it also satisfies $\dibiCI{X}{Y}{Z}$. \label{item:proc-DIBI-CI-eq-2}
\end{enumerate}
\end{restatable}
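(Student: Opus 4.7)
The plan is to unfold \Cref{def:DIBI-CI} and \Cref{def:proc-markov-CI} together with the natural valuation $\natVal$ and the canonical form \eqref{eq:C-kernel} of input-preserving kernels, and reduce both claims to string-diagrammatic identities in $\catWithVarName{\catC}{\obChoice}$ derivable from the Markov-category axioms (naturality of $\discardMor$ and the comonoid laws on $(\copyMor, \discardMor)$). I would use \Cref{cor:proc-markov-DIBI-CI-eq} -- which establishes the equivalence of plain CI and DIBI CI when $U = \empset$ -- as a stepping stone, reducing each claim to reasoning about how the extra output $U$ can be routed through a decomposition.

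For item (\ref{item:proc-DIBI-CI-eq-1}) direction $(\Leftarrow)$, given a DIBI witness $f = b_1 \stThen b_2$ with $c_1 \stPlus c_2 \stLeq b_2$ realising the $\mand$, I would post-compose throughout with $\discardMor_U$: this yields $f \then \discardMor_U = b_1 \stThen (b_2 \then \discardMor_U)$, and the subkernel relation $c_1 \stPlus c_2 \stLeq b_2 \then \discardMor_U$ still holds because the $h$-step in \Cref{def:kernel-leq} can absorb $\discardMor_U$. Applying \Cref{cor:proc-markov-DIBI-CI-eq} to $f \then \discardMor_U$ then gives the plain CI decomposition, which is precisely $\markovCI{X}{Y}{W}$. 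For direction $(\Rightarrow)$, given Markov CI witnesses $s_W, g_X, g_Y$, I would set $c_i := \copyMor_W \then (\idMor_W \tensor g_i)$ for $i \in \{X,Y\}$, take $b_1$ to be the $(W \cup U)$-marginal of $f$, and define $b_2 := ((c_1 \stPlus c_2) \stPlus \idMor_U) \stThen h$, where $h$ is a conditional of $f$ given $(W \cup U)$; the equality $b_1 \stThen b_2 = f$ then reduces to the Markov CI equation after applying naturality of $\discardMor$ and the comonoid laws. For item (\ref{item:proc-DIBI-CI-eq-2}), the quickest route is to first show $\procCI \Rightarrow \markovCI$ by post-composing the decomposition in \Cref{fig:display-CI-process} with $\discardMor_U$ and collapsing via naturality to the decomposition in \Cref{fig:display-CI-markov}, then invoking item (\ref{item:proc-DIBI-CI-eq-1}) to conclude $\dibiCI$; alternatively, one reads the superset CI decomposition directly as a DIBI witness, with $b_1 := s_0$ and $b_2$ containing the parallel block from $g_1, g_2$ as $c_1 \stPlus c_2 \stLeq b_2$.

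The main obstacle will be the $(\Rightarrow)$ direction of item (\ref{item:proc-DIBI-CI-eq-1}): constructing the kernel $h$ so that $b_1 \stThen b_2 = f$ holds as an equality amounts to disintegrating $f$ on $(W \cup U)$, which exists unconditionally only in Markov categories with conditionals (as in the concrete models of \Cref{sec:applications}); in the fully general setting one must either appeal to such conditionals or carefully exploit the flexibility of the subkernel relation to work around them. By comparison, item (\ref{item:proc-DIBI-CI-eq-2}) is essentially bookkeeping once the superset CI decomposition is laid out as a string diagram, and the $(\Leftarrow)$ direction of item (\ref{item:proc-DIBI-CI-eq-1}) is a routine manipulation of the subkernel relation.
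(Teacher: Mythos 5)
Your overall skeleton matches the paper's: the paper also proves item (1) by unfolding the DIBI witness via $\natVal$ and the kernel/subkernel definitions, then discarding $U$ and collapsing with naturality of $\discardMor$, and it derives item (2) exactly by your first route (its \Cref{lem:two-CI-relation} is literally your ``post-compose \Cref{fig:display-CI-process} with $\discardMor_{\obU}$ and collapse'' step, followed by item (1)). Your alternative route for item (2) --- reading the superset decomposition directly as a DIBI witness with $b_1 := s_0$ and $c_1 \stPlus c_2 \stLeq b_2$ --- is not in the paper but is sound, and it is in fact the only one of your two routes that does not lean on the problematic forward direction of item (1). However, there are two genuine gaps.

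First, your use of \Cref{cor:proc-markov-DIBI-CI-eq} is circular: in the paper that corollary is \emph{deduced from} \Cref{thm:proc-DIBI-CI-eq} together with \Cref{lem:two-CI-relation}, so it cannot serve as a stepping stone without an independent proof, which would amount to redoing the theorem at $U = \empset$. Worse, the reduction itself is broken. In a DIBI witness $f = b_1 \stThen b_2$, the first factor in general has the form $b_1 \colon \empset \to W \cup V$ with $V \subseteq U$ possibly \emph{nonempty} (the paper's proof derives precisely this constraint on $V$). Then post-composing $b_2$ with the discard of $U$ yields a morphism with domain $W \cup V$ and codomain $W \cup X \cup Y$, which is not input-preserving and hence not a kernel; so ``$b_1 \stThen (b_2$ followed by $\discardMor_{U})$'' is not a decomposition in the DIBI frame, the claim ``$c_1 \stPlus c_2 \stLeq b_2$ followed by $\discardMor_U$'' is not even well-typed, and the corollary cannot be applied to the marginal. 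The paper avoids this by first rewriting the witnesses into their canonical string-diagram forms, where the $V$-wires are explicit, and only then discarding $U$, using naturality to kill the $V$-wires and the $h$-tail; this exhibits the Markov-CI shape directly, with no detour through a DIBI statement about the marginal.

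Second, your worry about the direction $\markovCI{X}{Y}{W} \Rightarrow \dibiCI{X}{Y}{W}$ is well-founded but left unresolved: building a DIBI witness from a Markov-CI decomposition requires factoring $f$ as $f_1 \stThen f_2$, i.e., disintegrating $f$, which a general Markov category does not provide. You should know that the paper's own appendix proof does not close this either --- it carries out only the unfolding-and-discarding argument, which establishes DIBI $\Rightarrow$ Markov, and the final discarding step is not reversible --- so you cannot patch your proof by following the paper here. Your marginal-plus-conditional construction is the right one when $\catC$ has conditionals (compare \Cref{prop:extended-superset-CI}, where the paper does assume conditionals for the analogous equivalence), but as a proof of the theorem as stated, for arbitrary Markov $\catC$, this direction remains a gap in your proposal.
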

The proof of \Cref{item:proc-DIBI-CI-eq-1} is in~\Cref{appendix:sec-CI}.
\Cref{item:proc-DIBI-CI-eq-2} follows from \Cref{item:proc-DIBI-CI-eq-1} and that $\procCI{\obX}{\obY}{\obW}$ implies $\markovCI{\obX}{\obY}{\obW}$: one can simply apply $\discarder_{\obU}$ on both sides of \ref{fig:display-CI-process} and obtain \Cref{fig:display-CI-markov} via naturality if $\discarder$.
The converse of \Cref{item:proc-DIBI-CI-eq-2}
does not hold in general, as demonstrated by the following example.
\renewcommand\windowpagestuff{\[\tikzfig{counter-CI-eq-1}\]}
\opencutright
\begin{example}
\label{ex:different-CI}
    \begin{cutout}{2}{\dimexpr\linewidth-4cm\relax}{0pt}{4}
    Consider the syntactic DIBI model $\lr{\frameSynKernel, \natVal}$ from \Cref{subsec:syn-DIBI-frame}.
    Define the $\catWithVarName{\catSynVar}{\id}$-kernel $f$ as on the right-hand side, where $c_0, c_1, c_2, d$ are all generating morphisms, i.e., not further decomposable.
    Then $f$ satisfies the DIBI CI $\dibiCI{x}{y}{w}$, but not the superset CI $\procCI{x}{y}{w}$: one cannot rewrite the diagram in the dotted box into a juxtaposition of two diagrams with output wires containing $x$ and $y$, respectively; in other words, it cannot be rewritten as the style in \Cref{fig:display-CI-process}.
    \end{cutout}%
\end{example}

\Cref{ex:different-CI} gives some hint at how to weaken the superset CI to match DIBI CI: there one needs to allow some morphism $d$ following the morphism which witnesses $\procCI{x}{y}{z}$. We formalise this idea and show the resulting notion is indeed equivalent to both Markov and DIBI CI.
%

\vspace*{0.2cm}
\noindent\begin{minipage}{0.5\textwidth}
    \begin{definition}
    \label{def:extended-superset-CI}
    An $\catX$-morphism $s \colon
\tenUnit \to \obW \tensor \obX \tensor \obY \tensor \obU$ \emph{displays the extended superset conditional independence} -- denoted as $\extSupsetCI{\obX}{\obY}{\obW}$ -- if there exist $\catX$-morphisms $s_0, g_1, g_2, h$ such that $s$ can be decomposed as on the right-hand side.
\end{definition}
\end{minipage}%
\begin{minipage}{0.5\textwidth}
\begin{equation}
\label{eq:extend-superset-CI}
    \tikzfig{extend-superset-CI-2}
\end{equation}
\end{minipage}
\vspace*{0.2cm}

Compared with \Cref{fig:display-CI-process}, here one allows an extra morphism $h$ to appear after the original superset CI diagrams in \Cref{fig:display-CI-process}; in fact, modulo rewiring, \eqref{eq:extend-superset-CI} is exactly $\scalebox{1}{\tikzfig{extend-superset-CI-3}}$, where $s_1 = \scalebox{.9}{\tikzfig{extend-superset-CI-4}}$.
One intuitive way to think of the extended superset CI is to view the morphisms as certain computational processes~\cite{pavlovic2023programs}: $\obX$ and $\obY$ are independent given $\obW$ in $s$ if $s$ could be obtained via a computation in which $\obX$ and $\obY$ are computed independently from $\obW$ (using $g_1$ and $g_2$ in \eqref{eq:extend-superset-CI} respectively),
after which some further computation may apply (for which stands the $h$ part in \eqref{eq:extend-superset-CI}).
\begin{restatable}{proposition}{PropDiagDibiCI}
\label{prop:extended-superset-CI}
    In Markov categories with conditionals, extended superset CI and Markov CI are equivalent. Therefore, in the context of \Cref{thm:proc-DIBI-CI-eq}, suppose further that $\catC$ has conditionals, then the three notions of CI -- DIBI CI, Markov CI, and extended superset CI -- coincide.
\end{restatable}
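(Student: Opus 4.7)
The plan is to prove the first assertion -- the equivalence of extended superset CI and Markov CI in Markov categories $\catC$ with conditionals -- and then combine it with the first part of \Cref{thm:proc-DIBI-CI-eq} to obtain the three-way coincidence.

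For the direction ``Markov CI $\Rightarrow$ extended superset CI'', I would exploit the conditionals assumption on $\catC$ to decompose $s$ in a way that exposes an extended-superset-CI shape. Given $s\colon \tenUnit \to W \tensor X \tensor Y \tensor U$ satisfying Markov CI, take the marginal $s_{WXY} := (\idMor_{WXY}\tensor\discardMor_U) \circ s$ and a conditional $s_{|WXY} \colon W \tensor X \tensor Y \to U$ with
\[
    s = (\idMor_{WXY} \tensor s_{|WXY}) \circ \copyMor_{W\tensor X\tensor Y} \circ s_{WXY}.
\]
Markov CI of $s$ forces $s_{WXY}$ to exhibit the plain CI form of \Cref{def:display-CI-plain} with some $s_{W}, g_{X}, g_{Y}$, and plain CI is a degenerate case of the superset CI template (with trivial ``extras''). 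Hence $s_{WXY}$ serves as an admissible $s_1$ inside~\eqref{eq:extend-superset-CI}, and setting $h := (\idMor_{WXY} \tensor s_{|WXY}) \circ \copyMor_{W\tensor X\tensor Y}$ realises the rest of the prescribed decomposition.

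For the converse ``Extended superset CI $\Rightarrow$ Markov CI'', I would start from~\eqref{eq:extend-superset-CI} and apply $\discardMor_U$ to both sides. Naturality of $\discardMor$ in a Markov category collapses the $U$-producing part sitting above the superset CI skeleton $s_1$, leaving exactly that skeleton with its $U$-output discarded. One more instance of the same naturality argument -- the one the excerpt uses after \Cref{thm:proc-DIBI-CI-eq} to deduce Markov CI from superset CI -- then reduces the marginal $\discardMor_U \circ s$ to the plain CI form witnessed by $s_W, g_X, g_Y$ extracted from $s_0, g_1, g_2$. This direction does not use conditionals.

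The main obstacle is the ``Markov CI $\Rightarrow$ extended superset CI'' direction, where one must verify diagrammatically -- modulo the rewirings implicit in $\catWithVarName{\catC}{\obChoice}$ -- that the conditional-based reassembly of $s$ really fits the shape prescribed by~\eqref{eq:extend-superset-CI}, and that plain CI embeds faithfully as a degenerate superset CI shape for $s_1$. Everything else is routine manipulation using the CD-category axioms together with the defining equation of conditionals. Once the equivalence is in place, combining it with the first part of \Cref{thm:proc-DIBI-CI-eq} immediately yields the three-way coincidence among DIBI CI, Markov CI, and extended superset CI.
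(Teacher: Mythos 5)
Your proposal is correct and follows essentially the same route as the paper's proof: conditionals give the marginal-plus-conditional decomposition of $s$ whose $W\tensor X\tensor Y$-marginal exhibits plain CI (yielding the extended superset shape with $h$ the conditional part), the converse direction discards $U$ and uses naturality of $\discardMor$, and the three-way coincidence then follows from \Cref{thm:proc-DIBI-CI-eq}. The only step you leave implicit is that conditionals in $\catC$ transfer to $\catWithVarName{\catC}{\obChoice}$ (the paper's \Cref{prop:markov-preserved}), which is needed to apply the equivalence inside the categorical DIBI model but is immediate since $\catWithVarName{\catC}{\obChoice}$-morphisms are $\catC$-morphisms.
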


\section{Conclusion}
\label{sec:conclusion}
In this paper we provided a general recipe to construct models for DIBI logic, generalising the previously studied probabilistic and relational models. We adopted string diagrams to best visualise the `input-preserving' property that characterises the states in the models, as well as the compositions and subkernel relations, whose definition would be quite convoluted in non-diagrammatic syntax. Then we derived various new classes DIBI models of interest. Also, we abstractly define a notion of conditional independence in terms of DIBI formulas. As our approach is based on Markov categories, we were then able to compare it with other definitions of CI proposed in the literature.

There are many promising directions for future work. On the logic side, DIBI logic -- interpreted in the probabilistic models -- was designed to be the assertion logic of Conditional Probabilistic Separation Logic (CPSL). Our categorical construction of a wide class of DIBI models suggests a generalisation of CPSL to obtain program logics in various scenarios beyond probabilistic programs, in the spirit of Moggi~\cite{moggi1991notions}.

The notion of CI we propose can be straightforwardly generalised from Markov categories to copy-delete categories (see \Cref{sec:preliminaries}). This would allow us to encompass models such as relations with bag semantics in databases~\cite{chaudhuri1993optimization,Green2009bag}, sub-probability measures~\cite{klenke2013probability}. However, to the best of the authors' knowledge, \Cref{cor:proc-markov-DIBI-CI-eq} fails for generic CD categories. Hence, finding appropriate notions of CI in this more general setting remains an open question.

From a categorical perspective, the definition of the category $\catWithVarName{\catC}{\obChoice}$ deserves further exploration, from at least two angles.
First, the $\catWithVarName{\catC}{\obChoice}$-morphisms may be seen as a ``bundle'' of the images of some syntactic categories of variables and renaming (similar to $\catSynVar$ from \Cref{subsec:syn-DIBI-frame}) under suitable functors -- usually referred to as `models' in functorial semantics. We would like to make the connection with functorial semantics rigorous in terms of the categorical structures involved. \cite{lawvere1963functorial,bonchi2018deconstructing}.
Second, while the current work represents finite sets of variables using deduplicated finite $\varLeq$-ordered lists, towards a more principled treatment, it is worth exploring using nominal string diagrams, a diagrammatic calculus for variables and renaming~\cite{balco2018partially,balco2019nominal,balco2020display},
to represent sets of variables.

\bibliographystyle{splncs04}
\bibliography{main}

\begin{thebibliography}{10}
\providecommand{\url}[1]{\texttt{#1}}
\providecommand{\urlprefix}{URL }
\providecommand{\doi}[1]{https://doi.org/#1}

\bibitem{aho1979theory}
Aho, A.V., Beeri, C., Ullman, J.D.: The theory of joins in relational
  databases. ACM Transactions on Database Systems (TODS)  \textbf{4}(3),
  297--314 (1979). \doi{10.1145/320083.320091}

\bibitem{balco2020display}
Balco, S.: Display calculi and nominal string diagrams. Ph.D. thesis,
  University of Leicester (2020)

\bibitem{balco2018partially}
Balco, S., Kurz, A.: Partially monoidal categories and the algebra of
  simultaneous substitutions (2018), available at
  \url{https://gdlyrttnap.pl/resources/papers/syco1.pdf}

\bibitem{balco2019nominal}
Balco, S., Kurz, A.: {Nominal String Diagrams}. In: Roggenbach, M., Sokolova,
  A. (eds.) 8th Conference on Algebra and Coalgebra in Computer Science (CALCO
  2019). Leibniz International Proceedings in Informatics (LIPIcs), vol.~139,
  pp. 18:1--18:20. Schloss Dagstuhl--Leibniz-Zentrum fuer Informatik, Dagstuhl,
  Germany (2019). \doi{10.4230/LIPIcs.CALCO.2019.18}

\bibitem{bao2021bunched}
Bao, J., Docherty, S., Hsu, J., Silva, A.: A bunched logic for conditional
  independence. In: 2021 36th Annual ACM/IEEE Symposium on Logic in Computer
  Science (LICS). pp. 1--14. IEEE (2021). \doi{10.1109/LICS52264.2021.9470712}

\bibitem{barthe2019probabilistic}
Barthe, G., Hsu, J., Liao, K.: A probabilistic separation logic. Proceedings of
  the ACM on Programming Languages  \textbf{4}(POPL),  1--30 (2019).
  \doi{10.1145/3371123}

\bibitem{bonchi2018deconstructing}
Bonchi, F., Soboci{\'n}ski, P., Zanasi, F.: Deconstructing lawvere with
  distributive laws. Journal of logical and algebraic methods in programming
  \textbf{95},  128--146 (2018). \doi{10.1016/j.jlamp.2017.12.002}

\bibitem{chaudhuri1993optimization}
Chaudhuri, S., Vardi, M.Y.: Optimization of real conjunctive queries. In:
  Proceedings of the twelfth ACM SIGACT-SIGMOD-SIGART symposium on Principles
  of database systems. pp. 59--70 (1993). \doi{10.1145/153850.153856}

\bibitem{cho2019disintegration}
Cho, K., Jacobs, B.: Disintegration and {B}ayesian inversion via string
  diagrams. Mathematical Structures in Computer Science  \textbf{29}(7),
  938--971 (2019). \doi{10.1017/S0960129518000488}

\bibitem{dawid1979conditional}
Dawid, A.P.: Conditional independence in statistical theory. Journal of the
  Royal Statistical Society: Series B (Methodological)  \textbf{41}(1),  1--15
  (1979). \doi{10.1111/j.2517-6161.1979.tb01052.x}

\bibitem{docherty2019thesis}
Docherty, S.: Bunched logics: a uniform approach. Ph.D. thesis, UCL (University
  College London) (2019)

\bibitem{fong2019invitation}
Fong, B., Spivak, D.I.: An invitation to applied category theory: seven
  sketches in compositionality. Cambridge University Press (2019).
  \doi{10.1017/9781108668804}

\bibitem{fritz2020markov}
Fritz, T.: A synthetic approach to markov kernels, conditional independence and
  theorems on sufficient statistics. Advances in Mathematics  \textbf{370},
  107239 (Aug 2020). \doi{10.1016/j.aim.2020.107239}

\bibitem{glymour2019review}
Glymour, C., Zhang, K., Spirtes, P.: Review of causal discovery methods based
  on graphical models. Frontiers in genetics  \textbf{10}, ~524 (2019).
  \doi{10.3389/fgene.2019.00524}

\bibitem{golubtsov2002monoidal}
Golubtsov, P.V.: Monoidal kleisli category as a background for information
  transformers theory. Inf. Process.  \textbf{2}(1),  62--84 (2002)

\bibitem{Green2009bag}
Green, T.J.: Bag Semantics, pp. 201--206. Springer US, Boston, MA (2009).
  \doi{10.1007/978-0-387-39940-9_979}

\bibitem{hofmann1995interpretation}
Hofmann, M.: On the interpretation of type theory in locally cartesian closed
  categories. In: Pacholski, L., Tiuryn, J. (eds.) Computer Science Logic. pp.
  427--441. Springer Berlin Heidelberg, Berlin, Heidelberg (1995).
  \doi{10.1007/BFb0022273}

\bibitem{jacobs1994semantics}
Jacobs, B.: Semantics of weakening and contraction. Annals of pure and applied
  logic  \textbf{69}(1),  73--106 (1994). \doi{10.1016/0168-0072(94)90020-5}

\bibitem{klenke2013probability}
Klenke, A.: Probability theory: a comprehensive course. Springer Science \&
  Business Media (2013). \doi{10.1007/978-1-84800-048-3}

\bibitem{lawvere1963functorial}
Lawvere, F.W.: Functorial semantics of algebraic theories. Proceedings of the
  National Academy of Sciences  \textbf{50}(5),  869--872 (1963).
  \doi{10.1073/pnas.50.5.869}

\bibitem{li2023lilac}
Li, J.M., Ahmed, A., Holtzen, S.: Lilac: a modal separation logic for
  conditional probability. Proceedings of the ACM on Programming Languages
  \textbf{7}(PLDI),  148--171 (2023). \doi{10.1145/3591226}

\bibitem{moggi1991notions}
Moggi, E.: Notions of computation and monads. Information and computation
  \textbf{93}(1),  55--92 (1991). \doi{10.1016/0890-5401(91)90052-4}

\bibitem{o1999logic}
O'Hearn, P.W., Pym, D.J.: The logic of bunched implications. Bulletin of
  Symbolic Logic  \textbf{5}(2),  215--244 (1999). \doi{10.2307/421090}

\bibitem{pavlovic2023programs}
Pavlovic, D.: Programs as diagrams: From categorical computability to
  computable categories. Springer Nature (2023).
  \doi{10.1007/978-3-031-34827-3}

\bibitem{pearl2009causality}
Pearl, J.: Causality. Cambridge university press (2009).
  \doi{10.1017/CBO9780511803161}

\bibitem{piedeleuzanasi23}
Piedeleu, R., Zanasi, F.: An introduction to string diagrams for computer
  scientists. CoRR  \textbf{abs/2305.08768} (2023)

\bibitem{pym2013semantics}
Pym, D.J.: The semantics and proof theory of the logic of bunched implications,
  vol.~26. Springer Science \& Business Media (2013).
  \doi{10.1007/978-94-017-0091-7}

\bibitem{reynolds2002separation}
Reynolds, J.C.: Separation logic: A logic for shared mutable data structures.
  In: Proceedings 17th Annual IEEE Symposium on Logic in Computer Science. pp.
  55--74. IEEE (2002), \url{https://dl.acm.org/doi/10.5555/645683.664578}

\bibitem{selinger2010survey}
Selinger, P.: A survey of graphical languages for monoidal categories. In: New
  structures for physics, pp. 289--355. Springer (2010).
  \doi{10.1007/978-3-642-12821-9_4}

\bibitem{tarski1983logic}
Tarski, A.: Logic, semantics, metamathematics: papers from 1923 to 1938.
  Hackett Publishing (1983). \doi{10.2307/2216869}

\end{thebibliography}
%

\appendix
\section{Background on Monads}
\label{appendix:sec-app}
We first recall the basic definition of monads. We refer to \cite[Sect. 3]{fritz2020markov} for an overview of the material in this section.
An endofunctor $\funcT \colon \catC \to \catC$ is a
\emph{monad} if it has a unit $\monadUnit{\funcT}{}: 1_{\catC} \to \funcT$ and
a multiplication $\monadMult{\funcT}{}: \funcT^2 \to \funcT$ natural
transformations satisfying certain compatibility conditions.
%
%
Every monad $\funcT \colon \catC \to \catC$ induces a Kleisli category $\Kl(\funcT)$, whose objects are exactly $\catC$-objects, and morphisms $\obX \to \obY$ are $\catC$ morphisms of type $\obX \to \funcT \obY$, with compositions of $f \colon \obX \to \funcT \obY$ and $g \colon \obY \to \funcT \obZ$ given by $\obX \xrightarrow{f} \funcT \obY \xrightarrow{\funcT g} \funcT \funcT \obZ \xrightarrow{\monadMult{\funcT}{\obZ}} \funcT \obZ$. We will write the morphisms in $\Kl(\funcT)$ as $\obX \klto \obY$ to distinguish them from their counterpart $\obX \to \funcT \obY$ in $\catC$. Importantly, if $\catC$ is a SMC and $\funcT$ is a commutative monad, then $\Kl(\funcT)$ is also an SMC~\cite{jacobs1994semantics}. If $\funcT$ is affine symmetric monoidal, then $\Kl(\funcT)$ is a Markov category~\cite{golubtsov2002monoidal,cho2019disintegration}.

In the remainder of this section, we recall the monads used in this paper: 
the distribution monad $\dist$, 
the powerset monad $\powset$, 
the Giry monad $\monadGiry$,
and the Radon monad $\monadRadon$. 
\begin{definition}[Discrete Distribution Monad]
\label{def:dist-monad}
The discrete distribution monad $\dist$ is an endofunctor on $\catSet$. 
It maps a countable set $X$ 
to the set of distributions over $X$, i.e., the set containing all functions $\mu$ 
over $X$ is satisfying $\sum_{x \in X}\mu(x) = 1$, 
and maps a function $f: X \to Y$ to $\dist(f): \dist(X) \to \dist(Y)$, 
given by $\dist(f)(\mu)(y) \defeq \sum_{f(x) = y} \mu(x)$. 

For the monadic structure, define the unit $\eta$ by $\eta_X(x) \defeq \delta_x$, where $\delta_x$ denotes the Dirac distribution on 
$x$: for any $y \in X$, we
have $\delta_x(y) = 1$ if $y = x$, otherwise $\delta_x(y) = 0$. Further, define $\bind\colon \dist (X) \rightarrow (X \to \dist (Y)) \to \dist(Y)$ by
$\bind(\mu)(f)(y)\defeq \sum_{p \in \dist(Y)} \dist(f)(\mu)(p) \cdot p(y)$.
\end{definition}

\begin{definition}[Powerset monad]
\label{def:powerset-monad}
  The powerset monad $\powset$ is an endofunctor on $\catSet$. 
  It maps every set to the set of its subsets $\powset(X) = \{U \mid U\subseteq X\}$. We
  define $\eta_X\colon X \rightarrow \powset(X)$ mapping each $x \in X$ to the
  singleton $\{x\}$, and $\bind \colon \powset(X) \to (X \to \powset(Y)) \to \powset(Y)$
  by $\bind (U)(f) \defeq \cup \{ y \mid \exists x\in U.  f(x) = y\}$.
\end{definition}
The next monad is defined on the category $\catMeas$ of measurable spaces, which consists of measurable spaces $(A, \sigalg{A} )$ as objects, and measurable functions as morphisms.
$\catMeas$ is a monoidal category, where the monoidal product on objects and morphisms are given by the product of measurable spaces and measurable functions, respectively. In particular, the monoidal unit consists of the singleton measurable space $(\setOne = \{ \bullet \}, \{\empset, \setOne\})$.

\begin{definition}[Giry Monad]
\label{def:giry-monad}
The \emph{giry monad} $\monadGiry$ maps 
a measurable space $(X, \Sigma_{X})$ 
to another measurable space $(\monadGiry(X), \Sigma_{\monadGiry(X)})$, where $\monadGiry(X)$ is the set of probability measures over $X$, and the $\sigma$-algebra $\Sigma_{\monadGiry(X)}$ is the coarsest $\sigma$-algebra over $\monadGiry(X)$ making the evaluation function $\ev_A: \monadGiry(X) \to [0,1]$, defined by $\ev_A(\nu) = \nu(A)$, measurable for any $A \in \Sigma_X$.
For each measurable function $f: X \mapsto Y$,
$\monadGiry f: \monadGiry X \to \monadGiry Y$ is defined by $(\monadGiry f) (\nu)(B) = \nu(f^{-1}(B))$ for $B \in \Sigma_Y$.
For the monadic structure,
	define the unit $\eta$ by $\eta_X(x) = \dirac{x}$;
	define the bind operator $\dbind_{X, Y}: \monadGiry X \to ((X \to \monadGiry Y) \to \monadGiry Y)$ by
	$\dbind(\nu)(f)(B) = \int_X f(X)(B) d \nu$ for $B \in \Sigma_{\monadGiry Y}$.
\end{definition}
\begin{definition}[Radon Monad]
\label{def:radon-monad}
The Radon monad $\monadRadon$ is a measure monad on the category of compact Hausdorff spaces. If maps a compact Hausdorff space $X$ to the set of Radom measures $\mu$ on $X$ such that $\mu(X) \leq 1$. It maps a continuous map between compact Hausdorff spaces $f: X \to Y$ to the push-forward measure $\monadRadon(f): \monadRadon X \to \monadRadon Y$ given by $\dist(f)(\mu)(y) \defeq  \mu(f^{-1} (y))$. 

For the monadic structure:
we define the unit $\eta$ to take a point $x \in X$ to the diract distribution $\delta_x$ solely supported at $x$. 
We also define the bind operator $\dbind_{X, Y}: \monadRadon X \to ((X \to \monadRadon Y) \to \monadRadon Y)$ by
	$\dbind(\nu)(f)(B) = \int_X f(X)(B) d \nu$.
\end{definition}

The category of stochastic processses $\catStoch$ is the Kleisli category of the Giry monad $\monadGiry$. It is helpful to explicate its structure.
\begin{definition}
    The symmetric monoidal category of stochastic processses $\catStoch$ has the following components: 
    \begin{itemize}
        \item objects are measurable spaces $(A, \sigalg{A} )$; 
        \item morphisms $(A, \sigalg{A}) \to (B, \sigalg{B})$ are maps $f \colon \sigalg{B} \times A \to [0, 1]$ satisfying: for arbitrary $T \in \sigalg{B}$, $f(T, -) \colon A \to [0, 1]$ is measurable, and for arbitrary $a \in A$, $f(-, a) \colon \sigalg{B} \to [0, 1]$ is a probability measure; 
        \item compositions of $f \colon (A, \sigalg{A}) \to (B, \sigalg{B})$ and $g \colon (B, \sigalg{B}) \to (C, \sigalg{C})$ is the map $g \after f \colon \sigalg{C} \times A \to [0, 1]$ mapping $(U, a)$ to $\int_{b \in B} g(U, b) \cdot f(d b, a)$;  
        \item the identity morphism $\idMor$ on $(A, \sigalg{A})$ maps $(S, a) \in \sigalg{A} \times A$ to $1$ if $a \in S$, and to $0$ if $a \not \in S$; 
        \item the monoidal product $\tensor$ acts on objects as $(A, \sigalg{A}) \tensor (B, \sigalg{B}) = (A \times B, \sigalg{A} \tensor \sigalg{B})$, where $\sigalg{A} \tensor \sigalg{B}$ is the smallest sigma-algebra containing $\sigalg{A} \times \sigalg{B})$; 
        \item the monoidal product $\tensor$ acts on morphisms to obtain product measures. That is, $(U, V) \in \sigalg{B} \times \sigalg{D}$ as follows: given $f \colon (A, \sigalg{A}) \to (B, \sigalg{B})$ and $g \colon (C, \sigalg{C}) \to (D, \sigalg{D})$, $f \tensor g \colon \sigalg{B} \tensor \sigalg{D} \times A \times C \to [0, 1]$ maps $(U, V, a, c)$ to $f(U, a) \cdot g(V, b)$.
    \end{itemize}
\end{definition}

\section{Omitted Proofs from \Cref{sec:Markov-DIBI-model}}
\label{sec:omitted-proofs-DIBI-model}
This section contains the missing proof of statements in \Cref{sec:Markov-DIBI-model}, as well as some useful properties of the $\catWithVarName{\catC}{\obChoice}$-kernels and the DIBI model. We stay with the setting in \Cref{sec:Markov-DIBI-model} for $\catC$, $\setVar$, and $\obChoice$.

\begin{proposition}
\label{prop:markov-preserved}
    If $\catC$ is a Markov category, then $\catWithVarName{\catC}{\obChoice}$ is also Markovian.
    If Markov cateogry $\catC$ has conditionals, then so does $\catWithVarName{\catC}{\obChoice}$.
\end{proposition}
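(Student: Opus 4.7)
The plan is to transfer all the relevant structure (copy, delete, their naturality, and conditionals) from $\catC$ to $\catWithVarName{\catC}{\obChoice}$ via the interpretation of a list $[x_1,\dots,x_m]$ as the object $\obChoice(x_1)\tensor\cdots\tensor\obChoice(x_m)$ in $\catC$. Under this interpretation, a $\catWithVarName{\catC}{\obChoice}$-morphism is \emph{by definition} a $\catC$-morphism, and the monoidal product on objects corresponds to concatenation of tensor lists. So the proof is essentially a bookkeeping exercise showing that the borrowed structure is well-typed in $\catWithVarName{\catC}{\obChoice}$ and that its axioms reduce to the corresponding axioms in $\catC$.

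First, for each object $\listx = [x_1,\dots,x_m]$ I would set $\copyMor_{\listx}$ and $\discardMor_{\listx}$ to be the $\catC$-copy and $\catC$-delete on $\obChoice(x_1)\tensor\cdots\tensor\obChoice(x_m)$. Note these have the correct types: $\copyMor_{\listx} \colon \listx \to \listx \tensor \listx = [x_1,\dots,x_m,x_1,\dots,x_m]$ and $\discardMor_{\listx} \colon \listx \to []$, since objects are unordered lists allowed to have duplicates and $[]$ is the tensor unit. The commutative-comonoid axioms (associativity, counitality, cocommutativity) and the compatibility with $\tensor$ (i.e.\ $\copyMor_{\listx \tensor \listy} = (\copyMor_{\listx} \tensor \copyMor_{\listy}) \after (\idMor \tensor \tenSwap \tensor \idMor)$ and similarly for $\discardMor$) all hold in $\catWithVarName{\catC}{\obChoice}$ because, after translating through the interpretation, they become the corresponding $\catC$-identities, which hold since $\catC$ is a CD category. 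Naturality of $\discardMor$ — the equation $\discardMor_{\codom{f}}\after f = \discardMor_{\domain{f}}$ for every $\catWithVarName{\catC}{\obChoice}$-morphism $f$ — likewise reduces to the Markov-category naturality of $\discardMor$ in $\catC$, which holds by hypothesis.

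For conditionals, suppose $\catC$ has conditionals and let $f \colon \lista \to \listx \tensor \listy$ be a $\catWithVarName{\catC}{\obChoice}$-morphism. Regarding $f$ as a $\catC$-morphism $\obChoice(\lista) \to \obChoice(\listx) \tensor \obChoice(\listy)$ (writing $\obChoice(\lists)$ for the tensor interpretation), apply the $\catC$-conditional to obtain $\catC$-morphisms $f_{\obChoice(\listx)}\colon \obChoice(\lista)\to\obChoice(\listx)$ and $f_{|\obChoice(\listx)}\colon\obChoice(\listx)\to\obChoice(\listy)$ witnessing the defining equation. These are exactly $\catWithVarName{\catC}{\obChoice}$-morphisms $\lista \to \listx$ and $\listx \to \listy$, and the defining equation of conditionals — expressed with the copy/delete defined above — matches the $\catC$-equation verbatim.

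The main (and only nontrivial) point to watch is the bookkeeping: one must check that the tensor structure of $\catWithVarName{\catC}{\obChoice}$ is genuinely inherited from $\catC$ in a way that makes the string-diagrammatic equations transfer. Because $\catWithVarName{\catC}{\obChoice}$ is strict monoidal with $\tensor$ on objects defined as list concatenation, and on morphisms defined exactly as $\catC$'s $\tensor$, this reduces to observing that $\obChoice(\listx \tensor \listy) = \obChoice(\listx)\tensor\obChoice(\listy)$ on the nose (no coherence isomorphisms intervene). With this in hand, every axiom reduces identically to its $\catC$-counterpart, and both claims follow.
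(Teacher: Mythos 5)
Your proof is correct and takes essentially the same route as the paper: the paper's own proof is a one-liner observing that $\catWithVarName{\catC}{\obChoice}$-morphisms \emph{are} $\catC$-morphisms by construction, so the copy/delete structure, naturality of delete, and conditionals all transfer verbatim. Your write-up just makes explicit the bookkeeping the paper leaves implicit, namely that list concatenation interprets strictly as the tensor in $\catC$, so every axiom reduces on the nose to its $\catC$-counterpart.
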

\begin{proof}
    Both follow immediately from the construction of $\catWithVarName{\catC}{\obChoice}$: note that $\catWithVarName{\catC}{\obChoice}$-morphisms are $\catC$-morphisms.
\end{proof}

The following observation says that, in $\catWithVarName{\catC}{\obChoice}$-kernels, if one forgets the new variables in the output, then one get exactly identity on the input variables.
\begin{proposition}
\label{prop:input-preserving}
    For an arbitrary $\catWithVarName{\catC}{\obChoice}$-kernel $f \colon X \to Y$, $(\idMor_{X} \tensor \discardMor_{Y\setminus X}) \after f = \idMor_{X}$.
\end{proposition}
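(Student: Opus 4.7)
The plan is to prove \Cref{prop:input-preserving} by a straightforward diagrammatic calculation using the defining decomposition of a $\catWithVarName{\catC}{\obChoice}$-kernel together with the Markov axiom (naturality of $\discardMor$) and the comonoid counit law.

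First I would unfold the definition of $f \colon X \to Y$ using \Cref{def:cat-IP-kernel}: by~\eqref{eq:C-kernel}, there exist variables $u_1, \dots, u_k$ with $Y \setminus X = \{u_1, \dots, u_k\}$, a morphism $f'$, and a rewiring $\sigma$ such that $f$ equals a $\copier$ on each input $x_i$, followed by $f'$ on the ``lower'' copies producing the $u_j$ wires, followed by $\sigma$ which permutes the resulting wires into the correct $\varLeq$-order on $Y$. Now post-compose with $\idMor_X \tensor \discardMor_{Y \setminus X}$. Since $\sigma$ is a rewiring, composing $\sigma$ with $\idMor_X \tensor \discardMor_{Y \setminus X}$ is equal (up to the rewiring action on the wires being discarded versus kept) to simply discarding the $k$ output wires of $f'$ and retaining each ``upper'' copied wire $x_i$ in its original position; this is the bookkeeping step that justifies that $\sigma$ ``disappears'' once we discard precisely the new variables.

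Next I would apply naturality of $\discardMor$ in the Markov category $\catWithVarName{\catC}{\obChoice}$ (which is Markovian by \Cref{prop:markov-preserved}) to the sub-diagram $\discardMor \after f'$: this equals $\discardMor$ on the domain of $f'$, effectively erasing $f'$ from the diagram. What remains is, on each input variable $x_i$, a copy $\copyMor_{\obChoice(x_i)}$ followed by $\idMor_{\obChoice(x_i)} \tensor \discardMor_{\obChoice(x_i)}$. By the comonoid counit law $\tikzfig{copier-discarder}$, this composite is precisely $\idMor_{\obChoice(x_i)}$ on each wire, so the overall composite is $\idMor_X$.

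The only genuinely fiddly part is the interaction between the output rewiring $\sigma$ and the partial discard $\idMor_X \tensor \discardMor_{Y \setminus X}$: one must argue that discarding the $Y \setminus X$ positions of the $\varLeq$-ordered output amounts to discarding exactly the $k$ wires emerging from $f'$ (and keeping exactly the $m$ ``preserved'' wires). This is a straightforward consequence of how $\sigma$ is constructed in \Cref{def:cat-IP-kernel} together with naturality of $\discardMor$ with respect to symmetries, but it is the step to write out with care; the rest is pure application of the Markov and comonoid axioms.
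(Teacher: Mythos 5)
Your proof is correct and takes essentially the same route as the paper's: the paper's proof is exactly this diagrammatic calculation---unfold $f$ via \eqref{eq:C-kernel}, erase $f'$ by naturality of $\discardMor$ (the Markov property), and collapse each $\copyMor$--$\discardMor$ pair by the counit law---presented tersely as a chain of three diagram equalities. The bookkeeping between the rewiring $\sigma$ and the partial discard, which you rightly flag as the fiddly step, is handled implicitly inside the paper's diagrams rather than argued separately.
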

\begin{proof}
    This is an immediate consequence of the Markovian property:
    \begin{align*}
        \tikzfig{input-preserving-2} = \tikzfig{input-preserving-4} = \tikzfig{input-preserving-3}
    \end{align*}
    where we assume $X = \{ x_1, \dots, x_m \}$, $Y \setminus X = \{ u_1, \dots, u_k \}$.
\end{proof}
The class of $\catWithVarName{\catC}{\obChoice}$-kernels is closed under both parallel and sequential compositions.%
\begin{proposition}\label{prop:kernel-closed-under-composition}
For arbitrary $\catWithVarName{\catC}{\obChoice}$-kernels $f \colon X \to Y$ and $g \colon U \to V$, whenever $f \operateVar g$ is defined, the result $f \operateVar g$ is also an $\catWithVarName{\catC}{\obChoice}$-kernel, for $\operateVar \in \{ \stPlus, \stThen \}$.
\end{proposition}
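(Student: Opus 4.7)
The plan is to verify, for each of the two operations $\stThen$ and $\stPlus$, that the result meets the two requirements of \Cref{def:cat-IP-kernel}: (a) its domain and codomain are $\varLessThan$-ordered lists of distinct variables with the domain contained in the codomain (as sets), and (b) it admits the fork-shaped decomposition displayed in equation \eqref{eq:C-kernel}. Condition (a) is routine in both cases, so the substance is in (b).

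For the sequential case $f \stThen g = g \after f$ with $f \colon X \to Y$ and $g \colon Y \to V$, condition (a) follows from transitivity of $\sublisteq$: $X \sublisteq Y \sublisteq V$. For condition (b), I would construct the nontrivial part $h \colon X \to V \setminus X$ by stitching together the nontrivial parts $f' \colon X \to Y \setminus X$ and $g' \colon Y \to V \setminus Y$ of $f$ and $g$. Specifically, $h$ copies its $X$-input three times: one copy goes through $f'$ to produce $Y \setminus X$, another is combined with that output (using another copy and a rewiring) to reconstitute $Y$ as input for $g'$, producing $V \setminus Y$, and the remaining copy sits unused at this stage (it will be reabsorbed during the equality check). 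To see that the fork $\copyMor_X \then (\idMor_X \tensor h)$ (followed by the appropriate sorting rewiring) equals $g \after f$, one pulls all occurrences of $\copyMor$ to the start using naturality, and then invokes \Cref{prop:input-preserving} to identify the $X$-component of $f$'s codomain with an explicit copy of the input $X$; the remaining equality reduces to the comonoid axioms on $\copyMor_X$.

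For the parallel case, the definition in \Cref{fig:diagram-c-plus-d} is already presented in a form very close to the required kernel decomposition. Condition (a) follows from the explicit type $X \cup U \to Y \cup V$: the rewirings $\sigma_i$ are precisely there to sort the variables, and $X \cup U \sublisteq Y \cup V$ follows from $X \sublisteq Y$ and $U \sublisteq V$. For (b), I would identify the nontrivial part by noting that the shared wire $\bar{L}$ (corresponding to $X \cap U = Y \cap V$) is copied so that one copy is preserved as output, one is fed to $f'$ together with $\bar{L}_1$, and one is fed to $g'$ together with $\bar{L}_2$. Combined with the individual copies of $\bar{L}_1$ and $\bar{L}_2$ that are also preserved as outputs, this is exactly a single fork on $X \cup U$ followed by the nontrivial part $(f' \tensor g')$ (suitably rewired) producing the new output variables $(Y \setminus X) \sqcup (V \setminus U) = (Y \cup V) \setminus (X \cup U)$, after sorting.

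The main obstacle is the sequential case, since one must carefully track which components of $f$'s codomain are duplicates of its input and which are newly generated, and then repeat this bookkeeping after post-composing with $g$. Invoking \Cref{prop:input-preserving} shortcuts the whole argument: it allows us to replace the $X$-part of $f$'s output by an explicit copy of the input, after which the remaining structure of $g$ can be pulled through by the Markov axioms alone. With that identification in hand, both cases reduce to straightforward string-diagram rewrites, with no further reliance on conditionals or any structure beyond that already guaranteed by $\catC$ being a Markov category.
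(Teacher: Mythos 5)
Your treatment of the parallel case is essentially the paper's own proof: the composite of \Cref{fig:diagram-c-plus-d} is regrouped, via coassociativity of $\copyMor$ and its compatibility with $\tensor$, into a single fork on $X \cup U$ whose non-identity branch is the nontrivial part. Note only that this nontrivial part is not just ``$f' \tensor g'$ suitably rewired'': it must also contain a $\copyMor$ on $\setToList{X \cap U}$, since both $f'$ and $g'$ consume that wire; the paper's dotted box contains exactly this copy together with $f' \tensor g'$.

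The sequential case, however, contains a genuine error in the verification step. You propose to ``pull all occurrences of $\copyMor$ to the start using naturality''. In a Markov category $\copyMor$ is \emph{not} natural: naturality of copy ($\copyMor_{\obB} \after k = (k \tensor k) \after \copyMor_{\obA}$ for every $k \colon \obA \to \obB$) holds precisely when the category is Cartesian, which would collapse the intended examples such as $\Kl(\dist)$; only $\discardMor$ is natural. Concretely, when you expand $g \after f$ using the decompositions of $f$ and $g$ given by \Cref{def:cat-IP-kernel}, the copy $\copyMor_Y$ contributed by $g$ splits (by compatibility with $\tensor$) into $\copyMor_X \tensor \copyMor_{Y \setminus X}$; the $\copyMor_X$ factor sits on an identity wire and may be merged with $f$'s initial $\copyMor_X$ by coassociativity, but the $\copyMor_{Y \setminus X}$ factor sits on the \emph{output of} $f'$ and cannot be moved across $f'$ -- it must simply stay where it is, inside the nontrivial part $h$, which is exactly how the paper's two-diagram computation proceeds. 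Your own construction of $h$ already leaves that copy in place, so the argument is repaired by replacing the naturality step with: compatibility of $\copyMor$ with $\tensor$, coassociativity, and the interchange law. Relatedly, the appeal to \Cref{prop:input-preserving} is misplaced: that proposition only yields the marginal property $(\idMor_{X} \tensor \discardMor_{Y \setminus X}) \after f = \idMor_{X}$, which is strictly weaker than the fork decomposition (recovering the decomposition from it would require conditionals, which the paper deliberately does not assume); what you need is the decomposition of $f$ itself, which is available by definition. Finally, the ``unused third copy'' inside your $h$ would have to be explicitly discarded and then vanishes by the comonoid counit law, so it is redundant rather than harmful.
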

\begin{proof}
Suppose $f \stThen g$ is defined, then spelling out the definition, $Y = U$, 
\[
    f \stThen g = \tikzfig{seq-comp-closed-1} = \tikzfig{seq-comp-closed-2}
\]
therefore $f \stThen g$ is an input-preserving kernel.

Suppose $f \stPlus g$ is defined. That is, $X \cap U = Y \cap V$. Follow the notation in \Cref{def:kernel-composition}, $f \stPlus g$ is also an input-preserving kernel:
\[
    f \stPlus g = \tikzfig{para-comp-closed-2}
\]
The morphisms inside the dotted square play the role of the nontrivial part of the input-preserving kernel.
\end{proof}
\begin{proposition}
\label{prop:subkernel-preorder}
    The subkernel relation $\stLeq$ on $\catWithVarName{\catC}{\obChoice}$-kernels (\Cref{def:kernel-leq}) is a preorder.
\end{proposition}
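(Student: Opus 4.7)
The plan is to prove reflexivity and transitivity of $\stLeq$ separately, manipulating the string-diagrammatic witness from Definition~\ref{def:kernel-leq}. Throughout, I will silently rename witnessing fresh variables $z_1,\dots,z_n$ (which may be chosen arbitrarily in $\setVar$) so that disjointness holds whenever needed; this is harmless since the witnesses are existentially quantified.

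For reflexivity, given a $\catWithVarName{\catC}{\obChoice}$-kernel $f\colon X\to Y$, I take the list of fresh variables to be empty ($n=0$), set $h \defeq \idMor_Y$, and let both rewirings $\sigma_1,\sigma_2$ be identity. Then the right-hand side of the equation in Definition~\ref{def:kernel-leq} collapses to $f$ itself, so $f\stLeq f$.

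For transitivity, suppose $f\stLeq g$ and $g\stLeq k$. Unpacking the definition, there exist fresh variable lists $\bar z$, $\bar z{}'$ (assumed disjoint from each other and from the relevant variables of $f,g,k,h,h'$ after renaming), rewirings, and kernels $h$, $h'$ such that, up to rewiring,
\[
g \;=\; (f \stPlus \idMor_{\bar z}) \stThen h, \qquad k \;=\; (g \stPlus \idMor_{\bar z{}'}) \stThen h'.
\]
Substituting the first equation into the second, I get $k = \bigl(((f \stPlus \idMor_{\bar z}) \stThen h) \stPlus \idMor_{\bar z{}'}\bigr) \stThen h'$. The key move is the interchange law for the monoidal product $\stPlus$ with sequential composition $\stThen$, which holds diagrammatically in any Markov category (and is the \textsc{RevExchange} frame condition). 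Writing $\idMor_{\bar z{}'} = \idMor_{\bar z{}'} \stThen \idMor_{\bar z{}'}$, interchange yields
\[
\bigl((f \stPlus \idMor_{\bar z}) \stThen h\bigr) \stPlus \idMor_{\bar z{}'} \;=\; \bigl(f \stPlus \idMor_{\bar z} \stPlus \idMor_{\bar z{}'}\bigr) \stThen \bigl(h \stPlus \idMor_{\bar z{}'}\bigr)\!.
\]
Concatenating with the trailing $h'$ and setting $h'' \defeq (h \stPlus \idMor_{\bar z{}'}) \stThen h'$ gives $k = (f \stPlus \idMor_{\bar z \listCup \bar z{}'}) \stThen h''$, which is the required witness for $f \stLeq k$ with fresh variables $\bar z \listCup \bar z{}'$.

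The main obstacle is managerial rather than conceptual: one must verify that the parallel composition $h \stPlus \idMor_{\bar z{}'}$ is defined, and that its sequential composition with $h'$ is defined and yields a kernel. Definedness of $h \stPlus \idMor_{\bar z{}'}$ requires the overlap condition from Definition~\ref{def:kernel-composition} — this is achieved by renaming $\bar z{}'$ so that it is disjoint from $\dom{h} \cup \codom{h}$, making the overlap empty on both input and output sides. Closure of kernels under $\stPlus$ and $\stThen$ is Proposition~\ref{prop:kernel-closed-under-composition}. Finally, the auxiliary rewirings on the outside of the two witnessing diagrams compose (with appropriate reordering of $\bar z \listCup \bar z{}'$) into a single pair of rewirings for the combined witness; this is routine once the variable sets are aligned.
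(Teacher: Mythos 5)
Your route---reflexivity via trivial witnesses, transitivity by substituting the witness of $f \stLeq g$ into that of $g \stLeq k$ and regrouping---is the same as the paper's, which performs the identical substitution directly on the string diagrams of \Cref{def:kernel-leq} and reads off the new rewiring and kernel from the composite picture. Your algebraic packaging via \Cref{lem:subkernel-using-plus} and the interchange law \eqref{frameAx:rev-exchange} is also not circular, since both are established independently of the preorder property. However, there is a genuine error in one step: the claim that the witnessing variables $z_1, \dots, z_n$ ``may be chosen arbitrarily in $\setVar$'' and silently renamed because they are existentially quantified. They may not. In a witness $k = (g \stPlus \idMor_{\bar z'}) \stThen h'$, the definition forces $\domain{k} = \domain{g} \cup \bar z'$, so $\bar z'$ consists of actual wires in the domain of $k$; renaming them changes the kernel $k$ itself, not merely the witness. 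The existential quantification in \Cref{def:kernel-leq} ranges over $h$ and the rewirings, while the extra variables are pinned down by the types of the two kernels being compared.

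Fortunately, the definedness you tried to obtain by renaming holds automatically, so the proof is repairable. Definedness of $g \stPlus \idMor_{\bar z'}$ (part of the hypothesis $g \stLeq k$) gives $\domain{g} \cap \bar z' = \codom{g} \cap \bar z'$. Combining this with the input-preserving properties $\domain{f} \subseteq \codom{f}$ and $\domain{h} = \codom{f} \cup \bar z \subseteq \codom{h} = \codom{g}$, a short computation shows that $(\codom{f} \cup \bar z) \cap \bar z' = \codom{g} \cap \bar z'$, i.e.\ $h \stPlus \idMor_{\bar z'}$ is defined, and that $\domain{f} \cap (\bar z \cup \bar z') = \codom{f} \cap (\bar z \cup \bar z')$, i.e.\ $f \stPlus \idMor_{\bar z \cup \bar z'}$ is defined. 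With these observations replacing the renaming step, the rest of your argument---closure of kernels under compositions (\Cref{prop:kernel-closed-under-composition}) and the regrouping via \eqref{frameAx:rev-exchange}---goes through and yields the required witness $k = (f \stPlus \idMor_{\bar z \cup \bar z'}) \stThen \bigl((h \stPlus \idMor_{\bar z'}) \stThen h'\bigr)$.
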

\begin{proof}
    In other words, we prove $\stLeq$ is reflexive and transitive.
    The diagram in \Cref{def:kernel-leq} is trivial when spelled out for witnessing $f \stLeq f$.
    Suppose $f_1 \stLeq f_2$ and $f_2 \stLeq f_3$, then they are witnessed by:
    \begin{align*}
        f_2 & = \tikzfig{stLeq-pre-order-0}
        & f_3 & = \tikzfig{stLeq-pre-order-1}
    \end{align*}
    Then $f_1 \stLeq f_3$ is witnessed by the following reasoning:
    \[
        f_3 = \tikzfig{stLeq-pre-order-1} = \tikzfig{stLeq-pre-order-2} = \tikzfig{stLeq-pre-order-3}
    \]
    where rewiring $\sigma$ and $\catWithVarName{\catC}{\obChoice}$-kernel $g$ are obtained by the morphisms in the two dotted areas, respectively.
\end{proof}
We are now ready to provide a proof of Proposition~\ref{prop:subkernel-unique}. We first restate the proposition. 
\PropSubkernelUnique*
\begin{proof}[Proof of Proposition~\ref{prop:subkernel-unique}]
Suppose $f_i \stLeq g$ is witnessed by the following diagrams, where $i = 1, 2$:
\[
    g = \tikzfig{subkernel-unique-1}
\]
where we don't worry about the rewiring morphisms, and $h_1, h_2$ are some $\catWithVarName{\catC}{\obChoice}$-kernels. Discarding the $z_i$s via $\discardMor{z_i}$, we obtain 
\[\tikzfig{subkernel-unique-2}\]
Therefore, 
\[\tikzfig{subkernel-unique-3} = \tikzfig{subkernel-unique-4}\]
By the assumption on $\catC$, this entails $f_1 = f_2$.
\end{proof}

Next, we prove the central result \Cref{thm:inst-DIBI}.
We first present two lemmas that are useful in verifying the DIBI frame conditions.
\begin{lemma}
\label{lem:syn-kernel-plus-id-morphism}
For any $\catWithVarName{\catC}{\obChoice}$-kernel $f \colon X \to Y$, and any $U \fsubseteq \setVar$ such that $U \cap X = U \cap Y = \empset$, by viewing $\idMor_{U}$ as a (trivial) $\catWithVarName{\catC}{\obChoice}$-kernel, $f \stPlus \idMor_{U}$ is defined, and $f \stPlus \idMor_{U} = \sigma_2 \after (f \tensor \idMor_{U}) \after \sigma_1$, for some rewiring $\sigma_1$ and $\sigma_2$.
\end{lemma}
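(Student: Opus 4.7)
\medskip

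\noindent\textbf{Proof plan for Lemma~\ref{lem:syn-kernel-plus-id-morphism}.}
The plan is to simply unfold Definition~\ref{def:kernel-composition} in the degenerate case where the second argument is an identity kernel and the variable overlap is empty, and then observe that the resulting string diagram collapses, up to rewiring, to $f \tensor \idMor_{U}$.

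First I would verify that $f \stPlus \idMor_{U}$ is defined. Viewing $\idMor_{U}$ as the kernel $U \to U$ whose nontrivial part is empty (equivalently, the identity morphism on $\bigotimes_{u \in U} \obChoice(u)$, trivially of the shape required by Definition~\ref{def:cat-IP-kernel}), the side condition for parallel composition reads $X \cap U = Y \cap U$, which holds by hypothesis since both sides equal $\empset$. So $f \stPlus \idMor_U$ is a well-defined $\catWithVarName{\catC}{\obChoice}$-kernel of type $X \cup U \to Y \cup U$ by Proposition~\ref{prop:kernel-closed-under-composition}.

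Next I would specialise the general construction of Figure~\ref{fig:diagram-c-plus-d}. With $f \colon X \to Y$ and $g = \idMor_U \colon U \to U$, the parameters of Definition~\ref{def:kernel-composition} instantiate to $\listL = \setToList{\empset} = \emptyList$, $\listL_1 = \setToList{X}$, $\listL_2 = \setToList{U}$, $\listK_1 = \setToList{Y}$, $\listK_2 = \setToList{U}$. Because $\listL$ is empty, the shared wire in Figure~\ref{fig:diagram-c-plus-d}, together with the accompanying copier $\copyMor_{\listL}$, disappears; and because $\idMor_U$ is the identity kernel, its decomposition contributes no nontrivial block, only parallel wires for $U$. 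The diagram therefore reduces to $f$ placed in parallel with $\idMor_U$, pre- and post-composed with rewirings that rearrange $X \sqcup U$ into the $\varLeq$-ordered list $\setToList{X \cup U}$ on the input side, and $\setToList{Y \cup U}$ into $Y \sqcup U$ on the output side. Calling these rewirings $\sigma_1$ and $\sigma_2$ respectively gives $f \stPlus \idMor_U = \sigma_2 \after (f \tensor \idMor_U) \after \sigma_1$, as required.

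This argument is essentially a bookkeeping check, and I do not expect any genuine obstacle beyond tracking the $\varLeq$-order conventions imposed by Definition~\ref{def:cat-IP-kernel} when converting between the set notation $X \cup U$ and the list notation $\setToList{X \cup U}$. The only potentially subtle point is to justify why the $\copyMor$-based decomposition witnessing that $\idMor_U$ is a kernel is absorbed into identity wires; this follows immediately from the comonoid equation $\tikzfig{copier-discarder}$, since the ``output-new'' component of $\idMor_U$ is empty and the copier on $U$ followed by discarding one branch is $\idMor_U$ itself.
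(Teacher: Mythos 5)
Your proposal is correct and takes the same route as the paper: the paper's own proof simply observes that definedness follows from $X \cap U = Y \cap U = \empset$ and that the identity $f \stPlus \idMor_U = \sigma_2 \after (f \tensor \idMor_U) \after \sigma_1$ is immediate from Definition~\ref{def:kernel-composition}. Your write-up just makes explicit the bookkeeping the paper leaves implicit (the empty overlap list $\listL$ killing the shared copier, and the comonoid counit law collapsing the trivial kernel $\idMor_U$ to plain wires), which is a faithful elaboration rather than a different argument.
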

\begin{proof}
That $f \stPlus \idMor_{U}$ is defined follows immediately from the assumption that $X \cap U = Y \cap U ( = \empset)$.
Then $f \stPlus \idMor_{U} = \sigma_2 \after (f \tensor \idMor_{U}) \after \sigma_1$ follows immediately by the definition of parallel composition.
\end{proof}
\begin{lemma}
\label{lem:subkernel-using-plus}
Two $\catWithVarName{\catC}{\obChoice}$-kernels $f$ and $g$ satisfy $f \stLeq g$ if and only if there exist a set of variables $U$ and another $\catWithVarName{\catC}{\obChoice}$-kernel $h$ such that $g = (f \stPlus \idMor_{U}) \stThen h$.
\end{lemma}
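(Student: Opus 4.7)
The plan is to prove both directions by unfolding the definitions of subkernel and of parallel composition, and then applying Lemma~\ref{lem:syn-kernel-plus-id-morphism}, which is exactly the bridge between $f \stPlus \idMor_U$ and the tensor form $\sigma_2 \after (f \tensor \idMor_U) \after \sigma_1$ in the case where $U$ is disjoint from the variables of $f$.

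For the forward direction ($\Rightarrow$), I would start from a witness of $f \stLeq g$ as given by Definition~\ref{def:kernel-leq}: fresh variables $z_1, \dots, z_n$, a kernel $h$, and rewirings $\sigma_1, \sigma_2$ expressing $g$ as $h$ post-composed with a rewiring of $f \tensor \idMor_{[z_1, \dots, z_n]}$ (the outer $\sigma_2$ can be absorbed into $h$ by defining a new kernel $h' \defeq h \after \sigma_2$, which is again a $\catWithVarName{\catC}{\obChoice}$-kernel by \Cref{prop:kernel-closed-under-composition}). Setting $U = \{z_1, \dots, z_n\}$, Lemma~\ref{lem:syn-kernel-plus-id-morphism} identifies the remaining $\sigma \after (f \tensor \idMor_U)$ piece with $f \stPlus \idMor_U$, yielding $g = (f \stPlus \idMor_U) \stThen h'$ as required.

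For the reverse direction ($\Leftarrow$), I would begin with the assumption $g = (f \stPlus \idMor_U) \stThen h$ and observe that, for $f \stPlus \idMor_U$ to be defined, the frame condition on $\stPlus$ forces $X \cap U = Y \cap U$, where $f \colon X \to Y$; since $X \subseteq Y$ (by definition of an input-preserving kernel), this simplifies to $(Y \setminus X) \cap U = \empset$. Splitting $U = U_1 \sqcup U_2$ with $U_1 = U \cap X$ and $U_2 = U \setminus X$, I would first show that $f \stPlus \idMor_U = f \stPlus \idMor_{U_2}$: the identity wires over $U_1$ in $f \stPlus \idMor_U$ are supplied by sharing the input copy of $X$ already produced by the parallel composition (concretely, by the $\copier$s on the $U_1$-coordinates of $X$), and the resulting diagram is equal to $f \stPlus \idMor_{U_2}$ by the commutative comonoid axioms of $\catWithVarName{\catC}{\obChoice}$. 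Since $U_2$ is disjoint from both $X$ and $Y$, Lemma~\ref{lem:syn-kernel-plus-id-morphism} applies and gives $f \stPlus \idMor_{U_2} = \sigma_2 \after (f \tensor \idMor_{U_2}) \after \sigma_1$; substituting into $g = (f \stPlus \idMor_{U_2}) \stThen h$ produces exactly the string diagram in Definition~\ref{def:kernel-leq} with $z_i$ enumerating $U_2$, so $f \stLeq g$.

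The step I expect to be the main obstacle is the reduction $f \stPlus \idMor_U = f \stPlus \idMor_{U_2}$ in the overlapping case. This requires carefully tracking rewirings and invoking the copy/delete equations from \Cref{sec:preliminaries} to show that pairing $f$ with an additional identity wire on a variable already in its input gives the same morphism as not adding that wire at all. Once this diagrammatic normal form is established, everything else is routine unfolding.
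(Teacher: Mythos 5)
Your proof is correct and rests on the same pivot as the paper's: \Cref{lem:syn-kernel-plus-id-morphism} is used in both directions to translate between the parallel composition $f \stPlus \idMor_{U}$ and the tensor-with-rewirings shape $\sigma_2 \after (f \tensor \idMor_{U}) \after \sigma_1$ appearing in \Cref{def:kernel-leq}; the paper's own proof is exactly this observation stated in one sentence (``nothing but the monoidal product with $\idMor_U$, modulo some rewiring''). The genuine difference is your treatment of the reverse direction. You correctly observe that definedness of $f \stPlus \idMor_{U}$ for $f \colon X \to Y$ only forces $(Y \setminus X) \cap U = \empset$, so the given $U$ may intersect $X$, in which case \Cref{lem:syn-kernel-plus-id-morphism} (which requires $U \cap X = U \cap Y = \empset$) does not apply verbatim; the paper's ``follows immediately'' glosses over this. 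Your repair is sound: with $U_1 = U \cap X$ and $U_2 = U \setminus X$, unfolding \Cref{def:kernel-composition} shows that both $f \stPlus \idMor_{U}$ and $f \stPlus \idMor_{U_2}$ are the morphism that copies the $X$-wires, feeds one copy to the nontrivial part $f'$ of $f$, and preserves all input wires (since $\idMor_U$ has empty nontrivial part, the two copy patterns are identified by the comonoid equations); and $U_2$ is then disjoint from both $X$ and $Y$, since $U_2 \cap Y = (U \cap Y) \setminus X = (U \cap X) \setminus X = \empset$, so the lemma applies. In this respect your write-up is more complete than the paper's.

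One small technical slip in your forward direction: $h' \defeq h \after \sigma_2$ is not literally a $\catWithVarName{\catC}{\obChoice}$-kernel, because its domain, $\setToList{Y}$ concatenated with $[z_1, \dots, z_n]$, is in general not a $\varLessThan$-ordered list, and \Cref{prop:kernel-closed-under-composition} only covers closure under $\stPlus$ and $\stThen$, not composition with rewirings. The slip is harmless and avoidable: the rewirings $\sigma_1, \sigma_2$ in \Cref{def:kernel-leq} are forced (label-preserving permutations between duplicate-free lists are unique) to be exactly the sorting rewirings supplied by \Cref{lem:syn-kernel-plus-id-morphism}, so they cancel, and you obtain $g = (f \stPlus \idMor_{U}) \stThen h$ with the original kernel $h$, no absorption needed.
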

\begin{proof}
    This follows immediately from \Cref{lem:syn-kernel-plus-id-morphism}, by noticing that taking the parallel composition with $\idMor_{U}$ -- when viewed as a trivial kernel -- is nothing but the monoidal product with $\idMor_{U}$, modulo some rewiring.
\end{proof}

Now we are ready to check all the DIBI frame conditions for the syntactic DIBI frame whose states are input-preserving diagrams in the freely generated Markov category of string diagrams. We first restate Theorem~\ref{thm:inst-DIBI}.
%
\ThmInstDibi*
\begin{proof}[Proof of Theorem~\ref{thm:inst-DIBI}]
We verify all the frame conditions in \Cref{fig:dibi-frame} for the $\catWithVarName{\catC}{\obChoice}$-kernels.
We omit the references to~\Cref{lem:syn-kernel-plus-id-morphism} and~\Cref{lem:subkernel-using-plus} when using them.
\begin{enumerate}
    \item \eqref{frameAx:plus-com}. Note that the definition of $\stPlus$ over $\catWithVarName{\catC}{\obChoice}$-kernels does not rely on the specific order of the two kernels, thus commutativity holds; in other words, commutativity of $\stPlus$ holds by definition.

    \item \eqref{frameAx:plus-unit-exist}.
    Given an arbitrary kernel $f$, show there exists a kernel $e$ such that $f \stPlus e = f$ holds. In fact, let $e = \idMor_{\emptyList}$ (where $\emptyList$ is the empty list), which is trivially input-preserving, hence is a $\catWithVarName{\catC}{\obChoice}$-kernel. $f \stPlus e$ is defined: $\domain{f} \cap \domain{\idMor_{\emptyList}} = \empset = \codom{f} \cap \codom{\idMor_{\emptyList}}$.
    Moreover, $f \stPlus \idMor_{\emptyList} = f$.

    \item \eqref{frameAx:plus-ass}.
    Given $\catWithVarName{\catC}{\obChoice}$-kernels $f, g, h$, suppose $f \stPlus g$ and $(f \stPlus g) \stPlus h$ are both defined. We show that $g \stPlus h$ and $f \stPlus (g \stPlus h)$ are also defined; moreover, $f \stPlus (g \stPlus h) = (f \stPlus g) \stPlus h$. The converse holds by a similar argument, hence omitted.

    Let's first show that they are defined. By definition, to verify $g \stPlus h$ is defined amounts to showing $\domain{g} \cap \domain{h} = \codom{g} \cap \codom{h}$. Note that, by $f \stPlus g$ is defined, we have $\codom{g} = \codom{f \stPlus g} \setminus (\codom{f} \setminus (\codom{f} \cap \codom{g}))$. Therefore we can calculate $\codom{g} \cap \codom{h}$ as: \allowdisplaybreaks
    \begin{linenomath*}
    \begin{align*}
        \codom{g} \cap \codom{h}
        & = \left( \codom{f \stPlus g} \setminus \left( \codom{f} \setminus \left( \codom{f} \cap \codom{g} \right) \right) \right) \cap \codom{h} \\
        & = \left( \codom{f \stPlus g} \cap \codom{h} \right) \setminus \left( \codom{f} \setminus \left( \codom{f} \cap \codom{g} \right) \right) \\
        & = \left( \domain{f \stPlus g} \cap \domain{h} \right) \setminus \left( \codom{f} \setminus \left( \codom{f} \cap \codom{g} \right) \right) \\
        & = \left( \domain{f \stPlus g} \setminus \left( \codom{f} \setminus \left( \codom{f} \cap \codom{g} \right) \right) \right) \cap \domain{h} \\
        & = \left( \domain{f \stPlus g} \setminus \left( \codom{f} \setminus \left( \domain{f} \cap \domain{g} \right) \right) \right) \cap \domain{h} \\
        & = \left( \left( \domain{f \stPlus g} \setminus \codom{f} \right) \cup \left( \domain{f \stPlus g} \cap (\domain{f} \cap \domain{g}) \right) \right) \cap \domain{h} \\
        & = \left( \left( \domain{f \stPlus g} \setminus \domain{f} \right)  \cup \left( \domain{f} \cap \domain{g} \right) \right) \cap \domain{h} \\
        & = \left( \left( \left( \domain{f} \cup \domain{g} \right) \setminus \domain{f} \right)  \cup \left( \domain{f} \cap \domain{g} \right) \right) \cap \domain{h} \\
        & = \left( \left( \domain{g} \setminus \domain{f} \right)  \cup \left( \domain{f} \cap \domain{g} \right) \right) \cap \domain{h} \\
        & = \domain{g} \cap \domain{h}
    \end{align*}
    \end{linenomath*}
    
    To show that $f \stPlus (g \stPlus h)$ is defined, we first note that a similar argument as that above shows $\domain{f} \cap \domain{h} = \codom{f} \cap \codom{h}$. Then, $\domain{f} \cap \domain{g \stPlus h} = \codom{f} \cap \codom{g \stPlus h}$ follows immediately:
    
    \begin{linenomath*}
    \begin{align*}
        \codom{f} \cap \codom{g \stPlus h}
        & = \codom{f} \cap (\codom{g} \cup \codom{h}) \\
        & = \left( \codom{f} \cap \codom{g} \right) \cup \left(\codom{f} \cap \codom{h} \right) \\
        & = \left( \domain{f} \cap \domain{g} \right) \cup \left( \domain{f} \cap \domain{h} \right) \\
        & = \domain{f} \cap \left( \domain{g} \cup \domain{h} \right) \\
        & = \domain{f} \cap \domain{g \stPlus h}
    \end{align*}
    \end{linenomath*}
    The equivalence of $(f \stPlus g) \stPlus h$ and $f \stPlus (g \stPlus h)$ follows immediately from their diagrammatic presentation.

    \item \eqref{frameAx:then-unit-left}.
	We show that for arbitrary $\catWithVarName{\catC}{\obChoice}$-kernel $f$, there exists a $\catWithVarName{\catC}{\obChoice}$-kernel $e_{L}$ such that $e_{L} \stThen f = f$.
    Suppose $\domain{f} = \listx$, then simply define $e_{L} \coloneqq \idMor_{\listx}$, which is also a $\catWithVarName{\catC}{\obChoice}$-kernel kernel; moreover, it satisfies $e_{L} \stThen f = f \after \idMor_{\listx}  = f$.

	\item \eqref{frameAx:then-unit-right}.
    Similar to the \eqref{frameAx:then-unit-left} case. Suppose $\codom{f} = \listy$, then let $e_{R} = \idMor_{\listy}$. $e_{R}$ satisfies $f \stThen e_{R} = \idMor_{\listy} \after f = f$.

    \item \eqref{frameAx:then-ass}.
    Note that the sequential operator $\stThen$ on $\catWithVarName{\catC}{\obChoice}$-kernels is exactly the sequential composition in the category $\catWithVarName{\catC}{\obChoice}$, which is associative by definition.

    \item \eqref{frameAx:plus-unit-coherent}.
    For arbitrary $\catWithVarName{\catC}{\obChoice}$-kernels $f$ and $g$ such that $f \stPlus g$ is defined, we show $f \stPlus g \stGeq f$.
    Following the assumption in \Cref{def:kernel-composition}, $f \stPlus g$ is of the following form, where we omit the rewiring for simplicity:
    \[
        f \stPlus g = \tikzfig{def-kernel-para-comp-6} = \tikzfig{def-kernel-para-comp-7}
    \]
    The two parts in dotted boxes are $f$ and $\idMor_{\codom{f}} \tensor g$ (which is also a $\catWithVarName{\catC}{\obChoice}$-kernel), respectively. This witnesses $(f\stPlus g) \stGeq f$. 
    
	\item \eqref{frameAx:then-coherence-right}. We show that for arbitrary $\catWithVarName{\catC}{\obChoice}$-kernels $f, g$, if $f \stThen g$ is defined, then $f \stThen g \stGeq f$. Given the assumption, $f \stThen g = (f \tensor \idMor_{\emptyList}) \stThen g$, which witnesses that $f \stThen g \stGeq f$.

    \item \eqref{frameAx:unit-closure}. Since the set $E$ is the set of all syntactic input-preserving kernels in the current DIBI frame, this condition is trivially satisfied.
    
    \item \eqref{frameAx:plus-down-closed}. Suppose for two $\catWithVarName{\catC}{\obChoice}$-kernels of the form $f \colon X \to Y$ and $g \colon U \to V$, $f \stPlus g$ is defined, and there are two subkernels $f_1 \stSmaller f$, $g_1 \stSmaller g$, where $f_1 \colon X_1 \to Y_1$ and $g_1 \colon U_1 \to V_1$. The goal is to that $f_1 \stPlus g_1$ is defined, and is a subkernel of $f \stPlus g$.
    To see $f_1 \stPlus g_1$ is defined, since $\codom{f_1} \cap \domain{f} = \domain{f_1}$ and $\codom{g_1} \cap \domain{g} = \domain{g_1}$, we have:
    \begin{linenomath*}
    \begin{align*}
        \domain{f_1} \cap \domain{g_1} & = (\codom{f_1} \cap \domain{f}) \cap (\codom{g_1} \cap \domain{g}) \\
        & = (\codom{f_1} \cap \codom{g_1}) \cap (\domain{f} \cap \domain{g}) \\
        & = (\codom{f_1} \cap \codom{g_1}) \cap (\codom{f} \cap \codom{g}) \\
        & = (\codom{f_1} \cap \codom{f}) \cap (\codom{g_1} \cap \codom{g}) \\
        & = \codom{f_1} \cap \codom{g_1}
    \end{align*}
    \end{linenomath*}
    Next, we show the subkernel relation $f_1 \stPlus g_1 \stLeq f \stPlus g$. By Lemma~\ref{lem:subkernel-using-plus}, we can assume that $f = (f_1 \stPlus \idMor_{S}) \stThen f_2$, $g = (g_1 \stPlus \idMor_{T}) \stThen g_2$, where $f_2, g_2$ are also $\catWithVarName{\catC}{\obChoice}$-kernels. Then, diagrammatically we have:
    \[
        f \stPlus g = \left( \tikzfig{para-down-closed-1} \right) \stPlus \left( \tikzfig{para-down-closed-2} \right) = \tikzfig{para-down-closed-3}
    \]
    Notice that the diagrams in the dotted circle is precisely $f_1 \stPlus g_1$, therefore $f_1 \stPlus g_1 \stLeq f \stPlus g$.
    
    We can also derive the desired property using some other frame conditions as follows:

    \begin{linenomath*}
		\begin{align*}
		  f \stPlus g
			&= ((f_1 \stPlus \idMor_{S}) \stThen f_2) \stPlus ((g_1 \stPlus \idMor_{T}) \stThen g_2)  & \\
			&= ( (f_1 \stPlus \idMor_{S}) \stPlus (g_1 \stPlus \idMor_{T}) ) \stThen (f_2 \stPlus g_2) & \eqref{frameAx:rev-exchange} \\ 
			&= (f_1 \stPlus g_1) \stPlus (\idMor_{S} \stPlus \idMor_{T})) \stThen (f_2 \stPlus g_2) & \eqref{frameAx:plus-ass}, \eqref{frameAx:plus-com} \\ 
			&= ((f_1 \stPlus g_1) \stPlus ( \id_{(S \cup T)})  \stThen (f_2 \stPlus g_2) \\
			& \stGeq f_1 \stPlus g_1 & (\text{Def. \ref{def:kernel-leq}})
	\end{align*}
    \end{linenomath*}
    Crucially, the proof below of \eqref{frameAx:rev-exchange}, \eqref{frameAx:plus-ass}, and \eqref{frameAx:plus-com} does not rely on \eqref{frameAx:plus-down-closed}.

	\item \eqref{frameAx:then-up-closed}. Given kernels $f_1, g_1, h$ such that $f_1 \stThen g_1$ is defined and $f_1 \stThen g_1 \stLeq h$, we show that there exist $f_2 \stGeq f_1$, $g_2 \stGeq g_1$, such that $f_2 \stThen g_2$ is well-defined and is exactly $h$.
    By definition, $f_1 \stThen g_1 \stLeq h$ means that there exist a set of variables $U$ and a kernel $h_1$ such that:
    \[
        h = \tikzfig{seq-up-closed-1}
    \]
    We simply define
    \[
        f_2 = \tikzfig{seq-up-closed-2}
        \qquad g_2 = \tikzfig{seq-up-closed-3}.
    \]
    It is obvious then that $f_2 \stGeq f_1$ and $g_2 \stGeq g_1$; moreover, $f_2 \stThen g_2 = h$.

    \item \eqref{frameAx:rev-exchange}.
    For arbitrary $\catWithVarName{\catC}{\obChoice}$-kernels $f_1, f_2, g_1, g_2$, if $(f_1 \stThen f_2) \stPlus (g_1 \stThen g_2)$ is defined, then $f_1 \stPlus g_1$ and $f_2 \stPlus g_2$ are defined as well, and $(f_1 \stThen f_2) \stPlus (g_1 \stThen g_2) = (f_1 \stPlus g_1) \stThen (f_2 \stPlus g_2)$.
    
    We first verify that $f_1 \stPlus g_1$ and $f_2 \stPlus g_2$ are defined. For $f_1 \stPlus g_1$, we need to show that $\domain{f_1} \cap \domain{g_1} = \codom{f_1} \cap \codom{g_1}$. Half of the equation is `free' given the `non-decreasing' nature of the kernels, namely $\domain{f_1} \cap \domain{g_1} \subseteq \codom{f_1} \cap \codom{g_1}$. So it suffices to prove the other inclusion.
    Note that $\domain{f_1} \cap \domain{g_1} = \domain{f_1 \stThen f_2} \cap \domain{g_1 \stThen g_2}$, so it suffices to show that $\codom{f_1} \cap \codom{g_1} \subseteq \domain{f_1 \stThen f_2} \cap \domain{g_1 \stThen g_2}$, as follows:
    \begin{linenomath*}
    \begin{align*}
        \codom{f_1} \cap \codom{g_1} & = \domain{f_2} \cap \domain{g_2} & (\text{Definition of } \stThen) \\
        & \subseteq \codom{f_2} \cap \codom{g_2} & (\text{Definition of kernels}) \\
        & = \codom{f_1 \stThen f_2} \cap \codom{g_1 \stThen g_2} & (\text{Definition of } \stThen) \\
        & = \domain{f_1 \stThen f_2} \cap \domain{g_1 \stThen g_2}
        & ( (f_1 \stThen f_2) \stPlus (g_1 \stThen g_2) \text{ is defined} )
    \end{align*}
    \end{linenomath*}
    A similar argument confirms that $f_2 \stPlus g_2$ is defined.

    Next we show that $(f_1 \stPlus g_1) \stThen (f_2 \stPlus g_2)$ is defined, namely $\codom{f_1 \stPlus g_1} = \domain{f_2 \stPlus g_2}$, as follows:

    \begin{linenomath*}
    \begin{align*}
        \codom{f_1 \stPlus g_1} & = \codom{f_1} \cup \codom{g_1} \\
        & = \domain{f_2} \cup \domain{g_2} \\
        & = \domain{f_2 \stPlus g_2}
    \end{align*}
    \end{linenomath*}

    Finally, for the equivalence $(f_1 \stPlus g_1) \stThen (f_2 \stPlus g_2) = (f_1 \stThen f_2) \stPlus (g_1 \stThen g_2)$ we draw their diagrams:
    \[
        f_1 \stThen f_2 = \tikzfig{rev-exch-2}
        \qquad 
        g_1 \stThen g_2 = \tikzfig{rev-exch-3}
    \]
    \[
        (f_1 \stThen f_2) \stPlus (g_1 \stThen g_2) = \tikzfig{rev-exch-4}
    \]
    where $\domain{f_1} = X_1 \cup U$, $\domain{g_1} = Y_1 \cup U$, and $\domain{f_1} \cap \domain{g_1} = U$.
    This is exactly $(f_1 \stPlus g_1) \stThen (f_2 \stPlus g_2)$, as shown in the diagram below, where the two diagrams inside the dotted circles are $f_1 \stPlus g_1$ and $f_2 \stPlus g_2$, respectively:
    \[
        (f_1 \stPlus g_1) \stThen (f_2 \stPlus g_2) = \tikzfig{rev-exch-1}
    \]
\end{enumerate}
This completes the verification that $\lr{\setSemKernel{\catWithVarName{\catC}{\obChoice}}, \stLeq, \stPlus, \stThen, \setSemKernel{\catWithVarName{\catC}{\obChoice}}}$ satisfies all DIBI frame conditions, and therefore it is a DIBI frame.
\end{proof}

\section{Omitted Proofs from \Cref{sec:applications}}
\label{app:proof-application}
We recall \Cref{prop:prob-KlD-model-eq}, which states the isomorphism between the probabilistic DIBI frames and the categorical DIBI frames based on $\Kl(\dist)$.
\PropProbModelEq*
\begin{proof}[Proof of \Cref{prop:prob-KlD-model-eq}]
    Recall the embedding $\colon \catMemProb \to \catWithVarName{\Kl(\dist)}{\obChoice}$ from \Cref{ex:cat-with-var-name}. We show that the restriction of such $\iota$ to probabilistic kernels -- a subclass of $\catMemProb$-objects -- which we also write $\iota$ with an abuse of notation witness the desired isomorphism of DIBI frames. 
    First, $\iota$ establish a bijection between probabilistic kernels of the form $\memSpaceS{X} \to \dist \memSpaceS{Y}$ and $\catWithVarName{\Kl(\dist)}{\setVal}$-kernel of the form $\setToList{X} \to \setToList{Y}$ -- recall that $\setToList{\cdot}$ denotes the list representation of finite sets of variables. For simplicity we assume $X = \{ x_1, \dots, x_m \}$, $Y = \{ y_1, \dots, y_n \}$.
    Given a probabilistic kernel $f \colon \memSpaceS{X} \to \dist \memSpaceS{Y}$, its image $\iota(f) \colon [x_1, \dots, x_m] \to [y_1, \dots, y_n]$ is a $\Kl(\dist)$-morphism $\setVal^{m} \klto \setVal^{n}$ obtained by the composition: 
    \[
        \setVal^{m} \xrightarrow{\cong} \memSpaceS{X} \xrightarrow{f} \dist \memSpaceS{Y} \xrightarrow{\dist \cong} \dist \setVal^n
    \]
    where $\cong$ is the isomorphism $\memSpaceS{Y} \xrightarrow{\cong} \memSpaceS{y_1} \times \cdots \times \memSpaceS{y_n} \xrightarrow{\cong} \setVal^n$. It satisfies \Cref{def:cat-IP-kernel} immediately by the input-preserving conditions (see \Cref{def:probabilistic-kernel}) of the probabilistic kernel $f$.

    Before verifying that $\iota$ respects the structure on DIBI frames, for convenience we introduce the notion of \emph{combination} of memories: two memories $\memoryM \in \memSpaceS{X}$ and $\memoryN \in \memSpaceS{Y}$ are combinable if $X \cap Y = \empset$; in this case, their combination $\memoryM \memComb \memoryN$ is a memory in $\memSpaceS{X \cup Y}$, such that 
    \[
        \memoryM \memComb \memoryN \colon (u \in X \cup Y) \mapsto
        \begin{cases}
            \memoryM(u) & u \in X \\
            \memoryN(u) & v \in Y
        \end{cases}
    \]
    Now we show $\iota$ respects both compositions. The sequential composition in probabilistic kernels is obviously the Kleisli composition, as already observed in~\cite{bao2021bunched}.
    So we focus on the parallel composition. Consider the $\catWithVarName{\Kl(\dist)}{\setVal}$-kernels $f, g$ from \Cref{def:kernel-composition}. Their counterpart probabilistic kernels $\iota^{-1}(f) \colon \memSpaceS{X} \to \dist \memSpaceS{Y}$ and $\iota^{-1}(g) \colon \memSpaceS{U} \to \dist \memSpaceS{V}$ does the following:
    \begin{align*}
        \iota^{-1}(f) & \colon (\memoryM \in \memSpaceS{X}) \mapsto \sum_{\memoryN \in \memSpaceS{Y\setminus X}} \iota^{-1}(f')(\memoryM)(\memoryN) \ket{\memoryM \memComb \memoryN} \\
        \iota^{-1}(g) & \colon (\memoryM \in \memSpaceS{U}) \mapsto \sum_{\memoryN \in \memSpaceS{V \setminus U}} \iota^{-1}(g')(\memoryM)(\memoryN) \ket{\memoryM \memComb \memoryN}
    \end{align*}
    where $f' \colon X \to (Y\setminus X)$ and $ g' \colon U \to (V \setminus U)$ are the nontrivial parts of the kernels $f$ and $g$, respectively (see \Cref{def:cat-IP-kernel}). The parallel composition of these two probabilistic kernels is as follows, according to \Cref{def:probabilistic-kernel}. Given $\memoryM \in \memSpaceS{X \cup U}$, $\memoryN \in \memSpaceS{Y \cup V}$,
    \begin{align*}
        \iota^{-1}(f) \stPlus \iota^{-1}(g) (\memoryM)(\memoryN) & = \iota^{-1}(f)(\memoryM^{X})(\memoryN^{Y}) \cdot 
        \iota^{-1}(g)(\memoryM^{U})(\memoryN^{V}) \\
        & = \iota^{-1}(f')(\memoryM^{X})(\memoryN^{Y\setminus X}) \cdot 
        \iota^{-1}(g')(\memoryM^{U})(\memoryN^{V \setminus U})
    \end{align*}
    The probabilistic kernel counterpart of $f \stPlus g$ is:
    \begin{align*}
        \iota^{-1}(f \stPlus g) \colon & (\memoryM \in \memSpaceS{X \cup U}) \\
        & \mapsto \sum_{\memoryL_1 \in \memSpaceS{Y \setminus X}, \memoryL_2 \in \memSpaceS{V \setminus U}} \iota^{-1}(f')(\memoryM^{X})(\memoryL_1) \cdot \iota^{-1}(g')(\memoryM^{U})(\memoryL_2) \ket{\memoryM \memComb \memoryL_1 \memComb \memoryL_2}
    \end{align*}
    That is, for arbitrary $\memoryM \in \memSpaceS{X \cup U}$ and $\memoryN \in \memSpaceS{Y \cup V}$,
    \[
    \iota^{-1}(f \stPlus g)(\memoryM)(\memoryN) = \iota^{-1}(f')(\memoryM^{X})(\memoryN^{Y\setminus X}) \cdot \iota^{-1}(g')(\memoryM^{U})(\memoryN^{V \setminus U})
    \]
    Therefore $\iota^{-1}(f) \stPlus \iota^{-1}(g) = \iota^{-1}(f \stPlus g)$. 

    Finally, as the subkernel relation is defined in terms of the sequential and parallel compositions in the same way for both the probabilistic kernels and $\catWithVarName{\Kl(\dist)}{\setVal}$ kernels, $\iota$ also respects the subkernel relation. Therefore we can conclude that $\probFrameBasedOn{\setVal}$ and $\frameSemKernel{\catWithVarName{\Kl(\dist)}{\setVal}}$ are isomorphic as DIBI frames.
\end{proof}

\section{Partially Monoidal Categories}
\label{sec:partially-monoidal-categories}
In this section we prove that in the DIBI frames constructed via our recipe, the parallel composition forms a partially monoidal structure. We formalise the result using the notion of partially monoidal categories~\cite{balco2018partially}.

A \emph{partial monoid} $\lr{A, e, \diamond, D}$ consists of a domain $D \subseteq A \times A$, a binary operation $\diamond \colon D \to A$ whose (partial) unit is $e \in D$, such that the monoidal equations hold whenever defined: $(a \diamond b) \diamond c \eqWhenDefined a \diamond (b \diamond c)$ and $e \diamond a = a = a \diamond e$, where $\eqWhenDefined$ stands for `equal when either side is defined'. 
%
\begin{definition}[\cite{balco2018partially}] 
    A \emph{(strict) partially monoidal category (PMC)} consists of: 
    \begin{itemize}
        \item a small category $\catC$ with sets of objects $\ob{\catC}$ and sets of morphisms $\mor{\catC}$;  
        \item partial monoids $\lr{\ob{\catC}, \obE_0, \tensor_0, D_0}$ and $\lr{\mor{\catC}, \obE_1, \tensor_1, D_1}$, such that $\catD$ with objects from $D_0$ and morphisms from $D_1$ is a subcategory of $\catC \times \catC$; 
        \item the operator $\tensor$ defined as $\tensor_0$ on objects and $\tensor_1$ on morphisms forms a functor $\catD \to \catC$.
    \end{itemize}
\end{definition}
Intuitively, a PMC is a category with compatible partial monoid structures on both the sets of objects and morphisms.

We familiarise the notion of PMC with the probabilistic DIBI models.

\begin{restatable}{proposition}{PropProbFrPMC}
\label{prop:prob-frame-PMC}
    $\lr{ \catProbKern, \stPlus, \idMor_{\memSpaceS{\empset}} }$ is a partially monoidal category, where $\catProbKern$ is the category of probabilistic kernels viewed as a subcategory of $\Kl(\dist)$.
\end{restatable}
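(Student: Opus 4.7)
The plan is to unpack the definition of a partially monoidal category (PMC) and verify each component using the frame conditions already established for the probabilistic DIBI frame in \Cref{thm:inst-DIBI} (transferred to $\catProbKern$ via the isomorphism of \Cref{prop:prob-KlD-model-eq}). The key observation is that most of the work is already done: nearly every PMC axiom corresponds, after unpacking, to one of the $\stPlus$-related frame conditions in \Cref{fig:dibi-frame}. What is left is essentially to repackage them in the correct format.

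First, I would specify the partial monoid structures. On objects, define $\memSpaceS{X} \stPlus \memSpaceS{Y} \defeq \memSpaceS{X \cup Y}$ with unit $\memSpaceS{\empset}$; this is in fact total since $\cup$ is a total associative commutative operation with $\empset$ as neutral element. On morphisms, $\stPlus$ is the partial parallel composition of \Cref{def:concrete-prob-frame} with unit $\idMor_{\memSpaceS{\empset}}$. The morphism-level monoid axioms (commutativity, associativity, and unit laws) follow directly from \eqref{frameAx:plus-com}, \eqref{frameAx:plus-ass}, and \eqref{frameAx:plus-unit-exist}, all proven in \Cref{thm:inst-DIBI} in the categorical setting and transported via \Cref{prop:prob-KlD-model-eq}. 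The only specialisation needed is that the unit chosen here, $\idMor_{\memSpaceS{\empset}}$, does satisfy $f \stPlus \idMor_{\memSpaceS{\empset}} = f$, which is immediate from the defining formula of $\stPlus$.

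Next, I would verify that the domains $D_0$ and $D_1$ form a subcategory $\catD$ of $\catProbKern \times \catProbKern$. Since $\stPlus$ is total on objects, $D_0 = \ob{\catProbKern} \times \ob{\catProbKern}$. For identities, $(\idMor_{\memSpaceS{X}}, \idMor_{\memSpaceS{Y}}) \in D_1$ because $X \cap Y = X \cap Y$ trivially satisfies the side condition for $\stPlus$. For closure under composition in $\catProbKern \times \catProbKern$, I would appeal to \eqref{frameAx:rev-exchange}: if $(f_1, g_1), (f_2, g_2) \in D_1$ with $\codom{f_1} = \domain{f_2}$ and $\codom{g_1} = \domain{g_2}$, then $f_1 \stPlus g_1$ and $f_2 \stPlus g_2$ being defined, together with $f_1 \stThen f_2$ and $g_1 \stThen g_2$ being defined, forces $(f_1 \stThen f_2) \stPlus (g_1 \stThen g_2)$ to be defined too, as spelled out in the proof of \eqref{frameAx:rev-exchange} in \Cref{thm:inst-DIBI}.

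Finally, functoriality of $\tensor \defeq \stPlus \colon \catD \to \catProbKern$ amounts to two equations. Preservation of identities, $\idMor_{\memSpaceS{X}} \stPlus \idMor_{\memSpaceS{Y}} = \idMor_{\memSpaceS{X \cup Y}}$, follows by direct computation with the pointwise definition of $\stPlus$ on probabilistic kernels, using that the identity kernel on $\memSpaceS{X}$ sends a memory $\memoryM$ to the Dirac distribution on $\memoryM$. Preservation of composition is precisely $(f_1 \stThen f_2) \stPlus (g_1 \stThen g_2) = (f_1 \stPlus g_1) \stThen (f_2 \stPlus g_2)$, which is the equational content of \eqref{frameAx:rev-exchange}. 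The main obstacle is conceptual rather than technical: one must be careful that \eqref{frameAx:rev-exchange} is doing double duty here, providing both the closure that makes $\catD$ a subcategory and the compatibility that makes $\tensor$ a functor. Once this is noted, the proposition reduces to a clean bookkeeping of the frame conditions already established.
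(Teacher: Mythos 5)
Your proposal is correct and takes essentially the same route as the paper's own proof: the same total monoid on objects, the same partial monoid $D_1$ on morphisms given by parallel composability with unit $\idMor_{\memSpaceS{\empset}}$, and both subcategory closure and functoriality discharged by \eqref{frameAx:rev-exchange}. The only differences are cosmetic --- you additionally spell out identity preservation for the functor and explicitly transfer the frame conditions through \Cref{prop:prob-KlD-model-eq}, points the paper leaves implicit.
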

\begin{proof}
    Let us define the partial monoids on the set of objects and morphisms, respectively.

    On objects, the partial monoid $\lr{\ob{\catProbKern}, \obE_0, \tensor_0, D_0}$ is indeed total: $D_0 = \ob{\catProbKern} \times \ob{\catProbKern}$; given $\memSpaceS{X}$ and $\memSpaceS{Y}$, $\memSpaceS{X} \tensor_0 \memSpaceS{Y} = \memSpaceS{X \cup Y}$; $\obE_0 = \memSpaceS{\empset}$.

    On morphisms, define $D_1 \subseteq \mor{\catProbKern} \times \mor{\catProbKern}$ to consist of precisely those probabilistic kernels that are parallelly composable; that is, a pair of morphisms $(f \colon \memSpaceS{X} \to \dist \memSpaceS{Y}, g \colon \memSpaceS{U} \to \dist \memSpaceS{V}) \in D_1$ iff $X \cap U = Y \cap V$. In this case, $f \tensor_1 g$ is $f \stPlus g$. The unit $\obE_1$ is $\idMor_{\empset}$.

    It remains to verify that $\lr{D_0, D_1}$ forms a subcategory of $\catProbKern \times \catProbKern$ -- denoted $\catD$,  and $\stPlus$ is a functor $\catD \to \catProbKern$.
    
    For the former, note that if $f_1 \stPlus g_1$ and $f_2 \stPlus g_2$ are both defined,  $f_1 \stThen f_2$ and $g_1 \stThen g_2$ are both defined, then $(f_1 \stThen f_2) \stPlus (g_1 \stThen g_2)$ is also defined by \eqref{frameAx:rev-exchange}.
    Also, the identity morphisms are present in $D_1$. Therefore $\lr{D_0, D_1}$ forms a subcategory $\catD$ of $\catProbKern$.

    For the latter, the functoriality spells out as: given arbitrary $(f_1, g_1), (f_2, g_2) \in \ob{\catD} = D_1 \times D_1$ that are sequentially composable, $(g_1 \after f_1) \stPlus (g_2 \after f_2) = (g_1 \stPlus g_2) \after (f_1 \stPlus f_2)$. This is guaranteed by \eqref{frameAx:rev-exchange}.

    Therefore $\lr{\catProbKern, \stPlus, \idMor_{\memSpaceS{\empset}}}$ is a PMC.
\end{proof}

\begin{restatable}{proposition}{PropCatFramePMC}
\label{prop:cat-frame-PMC}
    $\lrangle{\catKernel{\catWithVarName{\catC}{\obChoice}}, \stPlus, \idMor_{\emptyList}}$ is a partially monoidal category.
\end{restatable}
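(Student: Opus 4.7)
The plan is to mirror the proof of \Cref{prop:prob-frame-PMC} (the probabilistic case), substituting the abstract $\catWithVarName{\catC}{\obChoice}$-kernels for probabilistic kernels and using \Cref{thm:inst-DIBI} (which already establishes all the DIBI frame conditions) in place of the ad hoc verification used there. Concretely, I would first exhibit the partial monoid on objects: since objects of $\catKernel{\catWithVarName{\catC}{\obChoice}}$ are $\varLeq$-ordered duplicate-free lists of variables, take $D_0 = \ob{\catKernel{\catWithVarName{\catC}{\obChoice}}}\times \ob{\catKernel{\catWithVarName{\catC}{\obChoice}}}$ (so the object-level operation is total), let $X \tensor_0 Y \defeq \setToList{X \cup Y}$, and let $\obE_0 \defeq \emptyList$. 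Associativity, commutativity, and the unit law follow immediately from the corresponding facts for set union.

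Next I would define the partial monoid on morphisms by letting $(f,g) \in D_1$ precisely when $\stPlus$ is defined on them — i.e.\ when $\domain{f}\cap\domain{g} = \codom{f}\cap\codom{g}$ — and setting $f \tensor_1 g \defeq f \stPlus g$, with unit $\obE_1 \defeq \idMor_{\emptyList}$. Commutativity of $\tensor_1$ is \eqref{frameAx:plus-com}, associativity is \eqref{frameAx:plus-ass}, and the unit laws follow from \eqref{frameAx:plus-unit-exist} together with the observation that $\idMor_{\emptyList}$ has empty (co)domain so $f \stPlus \idMor_{\emptyList}$ is always defined and equal to $f$ (using \Cref{lem:syn-kernel-plus-id-morphism} with $U = \empset$).

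Then I would verify that the pair $\lr{D_0,D_1}$ forms a subcategory $\catD$ of $\catKernel{\catWithVarName{\catC}{\obChoice}} \times \catKernel{\catWithVarName{\catC}{\obChoice}}$. Identities lie in $D_1$ because for any $\idMor_X, \idMor_Y$ we have $X \cap Y = X \cap Y$ trivially. Closure of $D_1$ under sequential composition — the nontrivial subcategory condition — is exactly the content of \eqref{frameAx:rev-exchange}: if $(f_1, g_1), (f_2, g_2) \in D_1$ and $f_1 \stThen f_2$, $g_1 \stThen g_2$ are defined, then $(f_1 \stThen f_2) \stPlus (g_1 \stThen g_2)$ is defined, so in particular $(f_1 \stThen f_2, g_1 \stThen g_2) \in D_1$.

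Finally, functoriality of $\stPlus \colon \catD \to \catKernel{\catWithVarName{\catC}{\obChoice}}$ amounts to $(f_2 \stThen g_2) \stPlus (f_1 \stThen g_1) = (f_2 \stPlus f_1) \stThen (g_2 \stPlus g_1)$ for composable parallel-compatible pairs, plus preservation of identities. The first is again \eqref{frameAx:rev-exchange}, and the second is immediate since $\idMor_{X} \stPlus \idMor_{Y} = \idMor_{X \cup Y}$ by inspection of \Cref{def:kernel-composition}. No step here is a genuine obstacle — the work has already been done in \Cref{thm:inst-DIBI}; the only mild subtlety is keeping track that the PMC axioms really are a repackaging of a strict subset of the DIBI frame axioms (namely the $\stPlus$ ones together with reverse exchange), which is why the result transfers uniformly from $\Kl(\dist)$ to the general setting.
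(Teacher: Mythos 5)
Your proposal is correct and follows essentially the same route as the paper's own proof: a total partial monoid $\lr{D_0,\tensor_0,\emptyList}$ on objects via $\setToList{X \cup Y}$, the morphism-level domain $D_1$ given by definedness of $\stPlus$ with unit $\idMor_{\emptyList}$, and \eqref{frameAx:rev-exchange} doing all the work for both closure of $D_1$ under composition and functoriality of $\stPlus \colon \catD \to \catKernel{\catWithVarName{\catC}{\obChoice}}$. You in fact spell out a few details the paper leaves implicit (which frame axioms yield the partial-monoid laws, and preservation of identities), so nothing is missing.
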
 
\begin{proof}
    The proof is similar to that of \Cref{prop:prob-frame-PMC}, which is a concrete case of the current proposition.

    We define the partial monoids on the set of objects and morphisms as follows.

    For objects, its partial monoid structure $\lr{D_0, \tensor_0, \obE_0}$ is total: $D_0$ is simply $\ob{\catKernel{\catWithVarName{\catC}{\obChoice}}} \times \ob{\catKernel{\catWithVarName{\catC}{\obChoice}}}$, namely pairs of list representation of finite sets of variables; given $(L, K) \in D_0$, where $L = \setToList{X}$ and $K = \setToList{Y}$, $L \tensor_0 K = \setToList{X \cup Y}$ ; $\obE_0 = \emptyList$.

    For morphisms, its partial monoid structure $\lr{D_1, \tensor_1, \obE_1}$ is: $D_1 = \{ (f, g) \in \ob{\catKernel{\catWithVarName{\catC}{\obChoice}}} \times \ob{\catKernel{\catWithVarName{\catC}{\obChoice}}} \mid f \stPlus g \text{ is defined} \}$; given $(f, g) \in D_1$, $f \tensor_1 g = f \stPlus g$; $\obE_2 = \idMor_{\emptyList}$.

    With a bit abuse of notation, we shall simply denote both $\tensor_0$ and $\tensor_1$ simply by their corresponding operation $\stPlus$. 

    Next, that $\lr{D_0, D_1}$ forms a subcategory $\catD$ of $\catKernel{\catWithVarName{\catC}{\obChoice}} \times \catKernel{\catWithVarName{\catC}{\obChoice}}$ and $\stPlus \colon \catD \to \catKernel{\catWithVarName{\catC}{\obChoice}}$ is a functor follows from the frame conditions -- in particular \eqref{frameAx:rev-exchange}.
\end{proof}

\section{Omitted Proofs from \Cref{sec:CI}}
\label{appendix:sec-CI}
We fix a Markov category $\catC$ and a choice function $\obChoice \colon \setVar \to \ob{\catC}$.
\ThmCIEq*
\begin{proof}
    We prove the first point only. The proof follows immediately by spelling out the definition of $f \satisfy{\natVal} (\kp{\empset}{W}) \biThen ( (\kp{W}{X}) \mand (\kp{W}{Y}) )$. According to \Cref{def:DIBI-validity-sem-kernel}, it means there exist $\catWithVarName{\catC}{\obChoice}$-kernels $f_1, f_2$ such that $f = f_1 \stThen f_2$, $f_1 \satisfy{\natVal} (\kp{\empset}{W})$, and $f_2 \satisfy{\natVal} ( (\kp{W}{X}) \mand (\kp{W}{Y}) )$.

    $f_2 \satisfy{\natVal} ( (\kp{W}{X}) \mand (\kp{W}{Y}) )$ means there exist $\catWithVarName{\catC}{\obChoice}$-kernels $g_1, g_2$ such that $f_2 \stGeq g_1 \stPlus g_2$, $g_1 \satisfy{\natVal} (\kp{W}{X})$, and $g_2 \satisfy{\natVal} (\kp{W}{Y})$. By \Cref{def:kernel-composition}, \Cref{def:kernel-leq}, and $\natVal$, infer $g_1 = \tikzfig{thm-CI-eq-g1}$ and $g_2 = \tikzfig{thm-CI-eq-g2}$. 

    $f_1 \satisfy{\natVal} (\kp{\empset}{W})$ means, according to $\natVal$ and \Cref{def:kernel-composition}, $f_1$ is of the the form \tikzfig{thm-CI-eq-f1}. Given the form of $g_1$, $g_2$, and $f_2$, such $V$ -- which is already a subset of $W \cup X \cup Y \cup U$ by definition -- must satisfy $V \subseteq U$. 
    So far we have:
    \[
    f_2 \stGeq g_1 \stPlus g_2 = \tikzfig{thm-CI-eq-5} \text{ , therefore }
    f = \tikzfig{thm-CI-eq-7}
    \]
    where the two morphisms in the dotted boxes (represented only by dotted boxes later for simplicity) play the role of $g_1$ and $g_2$ in \Cref{def:proc-markov-CI}.\ref{item:def-markov-CI}, respectively. Delete $U$ in the output of $f$ by postcomposing $\discardMor_{U}$, one gets:
    \[
        \tikzfig{thm-CI-eq-12} = \tikzfig{thm-CI-eq-8} = \tikzfig{thm-CI-eq-9} = \tikzfig{thm-CI-eq-10}
    \]
    Compare the resulting diagram with \Cref{def:proc-markov-CI}.\ref{item:def-markov-CI}, it follows that $f$ displays Markov CI, namely $\markovCI{X}{Y}{W}$; in particular, here  \tikzfig{thm-CI-eq-11} plays the role of $s_{\obW}$ in \Cref{fig:display-CI-markov}.
\end{proof}

\begin{lemma}
\label{lem:two-CI-relation}
    In a Markov category $\catX$, superset CI implies Markov CI.
\end{lemma}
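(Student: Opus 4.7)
The plan is to derive the Markov CI diagram from the superset CI diagram by postcomposing with a discard on $\obU$ and then exploiting the naturality of $\discardMor$, which is the defining property of a Markov category.

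First, I would assume $\procCI{\obX}{\obY}{\obW}$, so that $s$ decomposes as in Figure~\ref{fig:display-CI-process} via some morphisms $s_0, g_1, g_2$ in which $\obU$ is threaded through each component. To reach Figure~\ref{fig:display-CI-markov}, the goal is to compute $(\idMor_{\obW} \tensor \idMor_{\obX} \tensor \idMor_{\obY} \tensor \discardMor_{\obU}) \after s$ and show that it equals the right-hand side of Figure~\ref{fig:display-CI-markov} for a suitable choice of $s_{\obW}, g_{\obX}, g_{\obY}$.

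Next, I would propagate the $\discardMor_{\obU}$ backwards through the superset CI decomposition. Since $\discardMor$ is natural in a Markov category, any wire carrying $\obU$ that terminates in $\discardMor$ can be absorbed into the component it exits: discarding the $\obU$-output of $g_1$ (resp.\ $g_2$) turns it into a morphism of type $\obW \to \obX$ (resp.\ $\obW \to \obY$), which we take as $g_{\obX}$ (resp.\ $g_{\obY}$). Similarly, the discards propagate through $s_0$ to yield a morphism $s_{\obW} \colon \tenUnit \to \obW$, namely the $\obW$-marginal of $s_0$. The copy on $\obW$ is unaffected by this manipulation. After these naturality rewrites, the diagram is in precisely the shape required by Markov CI, witnessing $\markovCI{\obX}{\obY}{\obW}$.

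I do not expect substantial obstacles: the argument is a direct diagrammatic manipulation relying only on the naturality of $\discardMor$ together with the Markov-category equations for $\copyMor$ and $\discardMor$. The only point that deserves care is bookkeeping for the $\obU$-wires that may be split or reordered by $g_1$ and $g_2$ in the superset CI decomposition; however, since the final step discards the entire $\obU$ output, any such routing collapses under naturality, so no structural obstruction arises.
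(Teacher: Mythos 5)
Your proposal is correct and matches the paper's own proof: the paper likewise postcomposes the superset CI decomposition with $\discardMor_{\obU}$ and then regroups the diagram so that $g_1$, $g_2$, and $s_0$ with their $\obU$-outputs discarded serve as the witnesses $g_{\obX}$, $g_{\obY}$, and $s_{\obW}$ of Markov CI. One minor remark: the absorption of the discards into the components is really just a re-grouping of subdiagrams (valid in any CD category) rather than an application of naturality of $\discardMor$, but the paper's own exposition makes the same attribution, so this does not affect correctness.
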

\begin{proof}
    Given an $\catX$-morphism $s \colon \tenUnit \to \obW \tensor \obX \tensor \obY \tensor \obU$ satisfying $\procCI{\obX}{\obY}{\obW}$ -- i.e., it can be decomposed as in \Cref{fig:display-CI-process}, we show that is also satisfies $\markovCI{\obX}{\obY}{\obW}$ -- i.e., it can be expressed as in \Cref{fig:display-CI-markov}.
    \[
        \tikzfig{display-CI-markov-1} = \tikzfig{two-CI-relation-1} = \tikzfig{two-CI-relation-2}
    \]
    where the three diagrams in the dotted circles play the role of $s_{\obW}$, $g_{\obX}$, and $g_{\obY}$ in \Cref{fig:display-CI-markov}, respectively.
\end{proof}

\CorCIEq*
\begin{proof}
    Note that when $U = \empset$, both the superset CI statement $\procCI{X}{Y}{W}$ and the Markov CI statement $\markovCI{X}{Y}{W}$ reduce to the plain CI statement $\displayCI{X}{Y}{W}$. So the statement follows immediately from \Cref{thm:proc-DIBI-CI-eq} and \Cref{lem:two-CI-relation}.
\end{proof}

\PropDiagDibiCI*
\begin{proof}
    For the first part, we stick with the setting in \Cref{def:proc-markov-CI}, and further assume that $\catX$ has conditionals.
    
    Suppose $s$ satisfies $\markovCI{\obX}{\obY}{\obW}$, namely decomposition as in \Cref{fig:display-CI-markov} holds. Then, since $\catX$ has conditionals, there exist $g_{\obU} \colon \obW \tensor \obX \tensor \obY$ such that:
    \[
        \tikzfig{display-CI-process-1} = \tikzfig{markov-extend-superset-CI-eq-1} = \tikzfig{markov-extend-superset-CI-eq-2}
    \]
    The last diagram witness the extended superset CI $\extSupsetCI{\obX}{\obY}{\obW}$.

    Suppose $s$ satisfies $\extSupsetCI{\obX}{\obY}{\obW}$; i.e., $s$ can be decomposed as \eqref{eq:extend-superset-CI}. Then, deleting the $\obU$ part (where $\obV_0 \tensor \obV_1 \tensor \obV_2 \tensor \obV_3 = \obU$), we get:
    \[
        \tikzfig{display-CI-markov-1} = \tikzfig{markov-extend-superset-CI-eq-3} = \tikzfig{markov-extend-superset-CI-eq-4}
    \]
    This precisely says that $s$ satisfies Markov CI $\markovCI{\obX}{\obY}{\obW}$.
    
    For the second part of the statement, note that $\catC$ has conditionals implies that $\catWithVarName{\catC}{\obChoice}$ also has conditionals (\Cref{prop:markov-preserved}). Then it follows immediately from the first half of the current statement together with \Cref{thm:proc-DIBI-CI-eq}.
\end{proof}

\end{document}